\newif\iflongversion
\DeclareMathOperator*{\argmax}{\arg\max}   
\newtheorem{theorem}{Theorem}
\newtheorem{definition}{Definition}
\newtheorem{assumption}{Assumption}
\newtheorem{lemma}{Lemma}
\newtheorem{remark}{Remark}
\newcommand{\red}[1]{\textcolor{black}{#1}}
\newcommand{\blue}[1]{\textcolor{black}{#1}}
\begin{document}
	\title{A Unified Approach to Dynamic Decision Problems with Asymmetric Information - \\Part II:	Strategic Agents}   
	\author{Hamidreza Tavafoghi, Yi Ouyang, and Demosthenis Teneketzis}         
	
	\date{November 23, 2018}     
	\maketitle
	{\let\thefootnote\relax\footnote{ A preliminary version of this paper will appear in
			the Proceeding of the 57th IEEE Conference on Decision and Control
			(CDC), Miami Beach, FL, December 2018 \cite{CDC18}.\\ H. Tavafoghi is with the Department of Mechanical Engineering at the University of California, Berkeley (e-mail:
			tavaf@berkeley.edu). Y. Ouyang is with Preferred Networks America, Inc. (e-mail: ouyangyi@preferred-america.com). D. Teneketzis is with the Department of Electrical Engineering and Computer Science at the University of Michigan, Ann Arbor (e-mail:
			teneket@umich.edu)\\
			This work was supported in part by the NSF grants CNS-1238962, CCF-1111061, ARO-MURI grant W911NF-13-1-0421, and ARO grant W911NF-17-1-0232.\\
		}}
		\begin{abstract}
			We study a general class of dynamic games with asymmetric information where agents' beliefs are strategy dependent, \textit{i.e. signaling} occurs. We show that the notion of \textit{sufficient information}, introduced in the companion paper \cite{team}, can be used to effectively compress the agents' information in a mutually consistent manner that is sufficient for decision-making purposes. We present instances of dynamic games with asymmetric information where we can characterize a time-invariant information state for each agent. Based on the notion of sufficient information, we define a class of equilibria for dynamic games called Sufficient Information Based Perfect Bayesian Equilibrium (SIB-PBE). Utilizing the notion of SIB-PBE, we provide a sequential decomposition of dynamic games with asymmetric  information over time; this decomposition leads to a dynamic program that determines SIB-PBE of dynamic games. Furthermore, we provide conditions under which we can guarantee the existence of SIB-PBE. 
		\end{abstract}
		\vspace*{-5pt}

		\section{Introduction}\label{sec:introductionG}


We study a general class of stochastic dynamic games with asymmetric information. We consider a setting where the underlying system has Markovian dynamics controlled by the agents' joint actions at every time. The instantaneous utility of each agent depends on the agents' joint actions and the system state. At every time, each agent makes a private noisy observation that depends on the current system state and the agents' past actions. Therefore, at every time agents have asymmetric and imperfect information about the history of the game. 
Furthermore, the information that an agent possesses about the history of the game at each time instant depends on the other agents' past actions and strategies; this phenomenon is known as \textit{signaling} among the agents. Moreover, at each time, each agent's strategy depend on his information about the current system state and the other agents' strategies. Therefore, the agents' decisions and information are coupled and interdependent over time.

There are three main challenges in the study of dynamic games with asymmetric information. First, since the agents' decisions and information are interdependent and coupled over time, we need to determine the agents' strategies simultaneously for all times. Second, the agents' strategy domains grow over time as they acquire more information. Third, in contrast to dynamic teams where agents coordinate their strategies, in dynamic games each agent's strategy is his private information as he chooses it individually so as to maximize his utility. Therefore, in dynamic games each agent needs to form a belief about other agents' strategies as well as about the game history. 

In this paper, we propose a general approach for the study of a dynamic games with asymmetric information that addresses the stated-above challenges. We build our approach based on notion of sufficient information, introduced in the companion paper \cite{team}, and define a class of sufficient information based assessments, where strategic agents compress their information in a mutually consistent manner that is sufficient for decision-making purposes. Accordingly, we propose the notion of Sufficient Information Based Perfect Bayesian Equilibrium (SIB-PBE) for dynamic games that characterizes a set of equilibrium outcomes. 
Using the notion of SIB-PBE, we provide a sequential decomposition of the game over time, and formulate a dynamic program that enables us to compute the set of SIB-PBEs via backward induction. We discover specific instances of dynamic games where we can determine a set of information states for the agents that have time-invariant domain. We determine conditions that guarantee the existence of SIB-PBEs. We discuss the relation between the class of SIB-PBE and PBE in dynamic games, and argue that the class of SIB-PBE provides a \textit{simpler} and \textit{more robust} set of equilibria than PBE that are consistent with agents' rationality. 

	\red{The notion of SIB-PBE we introduce in this paper provides a generalization/extension of Markov Perfect Equilibrium (MPE) to dynamic games with asymmetric information. The authors in \cite{maskin2001markov} introduce the notion of Markov Perfect Equilibrium that characterizes a subset of Subgame Perfect Equilibria (SPE) for dynamic games with symmetric information and provide a sequential decomposition of the game over time.}  Moreover, our results, along with those in the companion paper \cite{team}, provide a unified approach to the study of dynamic decision problems with asymmetric information and strategic or nonstrategic agents.   

		\vspace*{-3pt}

\subsection{Related Literature}

Dynamic games with asymmetric information have been investigated extensively in literature in the context of repeated games; see \cite{zamir1992repeated,forges1992repeated,aumann1995repeated,mailath2006repeated} and the references therein. The key feature of these games is the absence of a dynamic system. Moreover, the works on repeated games study primarily their asymptotic properties when the horizon is infinite and agents are sufficiently patient (\textit{i.e.} the discount factor is close one). In repeated games, agents play a stage (static) game repeatedly over time. As a result, 
the decision making problem that each agent faces is very simple. The main objective of this strand of literature is to explore situations where agents can form self-enforcing punishment/reward mechanisms so as to create additional equilibria that improve upon the payoffs agents can get by simply playing an equilibrium of the stage game over time. Recent works (see \cite{horner2011recursive,escobar2013efficiency,sugaya2012}) adopt approaches similar to those used in repeated games to study infinite horizon dynamic games with asymmetric information when there is an underlying dynamic Markovian system. Under certain conditions on the system dynamics and information structure, the authors of  \cite{horner2011recursive,escobar2013efficiency,sugaya2012} characterize a set of asymptotic equilibria when the agents are sufficiently patient.

The problem we study in this paper is different from the ones in \cite{zamir1992repeated,forges1992repeated,aumann1995repeated,mailath2006repeated,horner2011recursive,escobar2013efficiency,sugaya2012} in two aspects. First, we consider a class of dynamic games where the underlying system has a general Markovian dynamics and a general information structure, and we do not restrict attention to asymptotic behaviors when the horizon is infinite and the agents are sufficiently patient. Second,  we study situations where the decision problem that each agent faces, in the absence of strategic interactions with other agents, is a Partially Observed Markov Decision Process (POMDP), which is a complex problem to solve by itself. Therefore, reaching (and computing)  a set of equilibrium strategies, which take into account the strategic interactions among the agents, is a very challenging task. As a result, it is not very plausible for the agents to seek reaching an equilibria that is generated by the formation of self-enforcing punishment/reward mechanisms similar to those used in infinitely repeated games (see Section \ref{sec:discussion} for more discussion). We believe that our results provide new insight into the behavior of strategic agents in complex and dynamic environments, and complement the existing results in the repeated games literature with simple and (mostly) static environments.

The works in \cite{renault2006value,cardaliaguet2015markov,gensbittel2015value,li2017solving} consider dynamic zero-sum games with asymmetric information. The authors of \cite{cardaliaguet2015markov,renault2006value} study zero-sum games with Markovian dynamics and lack of information on one side (\textit{i.e.} one informed player and one uninformed player). The authors of \cite{gensbittel2015value,li2017solving} study zero-sum games with Markovian dynamics \red{and} lack of information on both sides. The problem that we study in this paper is different from the ones in \cite{renault2006value,cardaliaguet2015markov,gensbittel2015value,li2017solving} in three aspects. First, we study a general class of dynamic games that include dynamic zero-sum games with asymmetric information as a special case. Second, we consider a general Markovian dynamics for the underlying system whereas the authors of \cite{cardaliaguet2015markov,renault2006value,gensbittel2015value,li2017solving} consider a specific Markovian dynamics where each agent observes perfectly a local state that evolves independently of the other local states conditioned on the agents' observable actions. Third, we consider a general information structure that allows us to capture scenarios with unobservable actions and imperfect observations that are not captured in \cite{cardaliaguet2015markov,renault2006value,gensbittel2015value,li2017solving}.   

The problems investigated in \cite{nayyar2014common,gupta2014common,ouyang2015CDC,ouyang2016TAC,vasal2016signaling,sinha2016structured} are the most closely related to our problem. The authors of \cite{nayyar2014common,gupta2014common} study a class of dynamic games where the agents' common information based belief (defined in \cite{nayyar2014common}) is independent of their strategies; that is, there is no signaling among them. This property allows them to apply ideas from the common information approach developed in \cite{nayyar2011optimal,nayyar2013decentralized}, and define an equivalent dynamic game with symmetric information among the fictitious agents. Consequently, they characterize a class of equilibria for dynamic games called \textit{Common Information based Markov Perfect Equilibrium}. Our results are different from those in \cite{nayyar2014common,gupta2014common} in two aspects. First, we consider a general class of dynamic games where the agents' CIB beliefs are strategy-dependent, thus,  signaling is present. Second, the proposed approach in  \cite{nayyar2014common,gupta2014common} requires the agents to keep track of all of their private information over time. We propose an approach to effectively compress the agents' private information, and consequently, reduce the number of variables on which the agents need to form CIB beliefs.

The authors of \cite{ouyang2015CDC,ouyang2016TAC,vasal2016signaling,sinha2016structured} study a class of dynamic games with asymmetric information where signaling occurs.  When the horizon in finite, the authors of \cite{ouyang2015CDC,ouyang2016TAC} introduce the notion of Common Information Based Perfect Bayesian Equilibrium, and provide a sequential decomposition of the game over time. The authors of \cite{vasal2016signaling,sinha2016structured} extend the results of \cite{ouyang2015CDC,ouyang2016TAC} to finite horizon Linear-Quadratic-Gaussian (LQG) dynamic games and infinite horizon dynamic games, respectively.  
The class of dynamic games studied in \cite{ouyang2015CDC,ouyang2016TAC,vasal2016signaling,sinha2016structured} satisfies the following assumptions: (i) agents' actions are observable (ii) each agent has a perfect observation of his own local states/type (iii) conditioned on the agents' actions, the evolution of the local states are independent.

We relax assumptions (i)-(iii) of \cite{ouyang2015CDC,ouyang2016TAC,vasal2016signaling,sinha2016structured}, and study a general class of dynamic games with asymmetric information, hidden actions, imperfect observations, and controlled and coupled dynamics. As a result, each agent needs to form a belief about the other agents' past actions and private (imperfect) observations. Moreover, in contrast to \cite{ouyang2015CDC,ouyang2016TAC,vasal2016signaling,sinha2016structured}, an agent's, say agent $i$'s, belief about the system state and the other agents' private information is his own private information and is different from the CIB belief. In this paper, we extend the methodology developed in \cite{ouyang2015CDC,ouyang2016TAC} for dynamic games, and generalize the notion of CIB-PBE. Furthermore, we propose an approach to effectively compress the agents' private information and obtain the results of \cite{ouyang2015CDC,ouyang2016TAC,vasal2016signaling,sinha2016structured} as special cases.

		\vspace*{-7pt}

\subsection{Contribution}\vspace*{-1pt}
We develop a general methodology for the study and analysis of dynamic games with asymmetric information, where the information structure is non-classical; that is, signaling occurs. We propose an approach to characterize a set of information states that effectively compress the agents' private and common information in a mutually consistent manner. We characterize a subclass of Perfect Bayesian Equilibria, called SIB-PBE, and provide a sequential decomposition of these games over time. This decomposition provides a backward induction algorithm to determine the set of SIB-PBEs. We discover special instances of dynamic games where we can identify a set of information states with time-invariant domain. We provide conditions that guarantee the existence of SIB-PBEs in dynamic games with asymmetric information. We show that the methodology developed in this paper generalizes the existing results on dynamic games with non-classical information structure. 

		\vspace*{-7pt}
		
		\subsection{Organization}\vspace*{-1pt}
The rest of the paper is organized as follows. In Section \ref{sec:model}, we describe our model. In Section \ref{sec:equilibrium}, we discuss the main issues that arise in the study of dynamic games with asymmetric information. We provide the formal definition of Perfect Bayesian Equilibrium in Section \ref{sec:PBE}. In Section \ref{sec:CIB}, we describe the sufficient information approach to dynamic games with asymmetric information and introduce the notion of Sufficient Information Based (SIB) assessment and SIB-PBE. In Section \ref{sec:CIB-PBE}, we present our main results and provide a sequential decomposition of dynamic games over time. 
We discuss our results in Section \ref{sec:discussion}, and compare the notion of SIB-PBE with other equilibrium concepts..
In Section \ref{sec:existence}, we determine conditions that guarantee the existence of SIB-PBE in dynamic games with asymmetric information. We conclude in Section \ref{sec:conclusionG}. 
The proofs of all the theorems and lemmas appear in the Appendix. 
\blue{\begin{remark}
	Section \ref{sec:notation} on notation and Section \ref{subsec:privatecompress} on the definition of sufficient private information are similar to the ones appearing in the companion paper \cite{companion}; moreover, the model presented in Section \ref{sec:model} with strategic agents is similar to that of the companion paper\cite{companion} with non-strategic agents.		 
	 All these sections are included in this paper for ease of reading and to make the paper self-contained. 
	\end{remark} } 

		\vspace*{-4pt}
		
\subsection{Notation}\label{sec:notation}\vspace*{-1pt}
Random variables are denoted by upper case letters, their realizations by the corresponding lower case letters.
In general, subscripts are used as time index while superscripts are used to index agents.
For $t_1\hspace*{-2pt}\leq \hspace*{-2pt}t_2$, $X_{t_1:t_2}$ (resp. $f_{t_1:t_2}(\cdot)$) is the short hand notation for the random variables $(X_{t_1},\hspace*{-1pt}X_{t_1+1},...,\hspace*{-1pt}X_{t_2})$ (resp.  functions $(f_{t_1}(\cdot),\dots,\hspace*{-1pt}f_{t_2}(\cdot))$).
When we consider a sequence of random variables (resp. functions) for all time, we drop the subscript and use $X$ to denote $X_{1:T}$ (resp. $f(\cdot)$ to denote $f_{1:T}(\cdot)$).
For random variables $X^1_t,\dots,\hspace*{-1pt}X^N_t$ (resp. functions $f^1_t(\cdot),\dots,\hspace*{-1pt}f^N_t(\cdot)$), we use $X_t\hspace*{-2pt}:=\hspace*{-2pt}(X^1_t,\dots,\hspace*{-1pt}X^N_t)$ (resp. $f_t(\cdot)\hspace*{-2pt}:=\hspace*{-2pt}(f^1_t(\cdot),\dots,\hspace*{-1pt}f^N_t(\cdot))$) to denote the vector of the set of random variables (resp. functions) at $t$, and $X^{-n}_t\hspace*{-2pt}:=\hspace*{-2pt}(X^1_t,\dots,\hspace*{-1pt}X^{n-1}_t,\hspace*{-1pt}X^{n+1}_t,\dots,\hspace*{-1pt}X^N_t)$ (resp. $f^{-n}_t(\cdot)\hspace*{-2pt}:=\hspace*{-2pt}(f^1_t(\cdot),\dots, \hspace*{-1pt}f^{n-1}_t(\cdot),\hspace*{-1pt}f^{n+1}_t(\cdot),\dots,\hspace*{-1pt}f^N_t(\cdot))$) to denote all random variables (resp. functions) at $t$ except that of the agent indexed by $n$.
$\mathbb{P}\{\cdot\}$ and $\mathbb{E}\{\cdot\}$ denote the probability and expectation of an event and a random variable, respectively.
For a set $\mathcal{X}$, $\Delta(\mathcal{X})$ denotes the set of all beliefs/distributions on $\mathcal{X}$.
For random variables $X,Y$ with realizations $x,\hspace*{-1pt}y$, $\mathbb{P}\{x|y\} \hspace*{-2pt}:=\hspace*{-2pt} \mathbb{P}\{X\hspace*{-2pt}=\hspace*{-2pt}x|Y\hspace*{-2pt}=\hspace*{-2pt}y\}$ and $\mathbb{E}\{X|y\} \hspace*{-2pt}:= \hspace*{-2pt}\mathbb{E}\{X|Y\hspace*{-2pt}=\hspace*{-2pt}y\}$.
For a strategy $g$ and a belief (probability distribution) $\pi$, we use $\mathbb{P}^g_{\pi}\{\cdot\}$ (resp. $\mathbb{E}^g_{\pi}\{\cdot\}$) to indicate that the probability (resp. expectation) depends on the choice of $g$ and $\pi$. We use $\mathbf{1}_{\{X=x\}}$ to denote the indicator function for event $X\hspace*{-2pt}=\hspace*{-2pt}x$. For sets $A$ and $B$ we use $A\backslash B$ to denote all elements in set $A$ that are not in set $B$. 

		\vspace*{-4pt}

		\vspace*{-2pt}
\section{Model}\label{sec:model}\vspace*{-1pt}

\textit{1) System dynamics:} There are $N$ strategic agents who live in a dynamic Markovian world over horizon $\mathcal{T}\hspace*{-2pt}:=\hspace*{-2pt}\{1,2,...,T\}$, $T\hspace*{-2pt}<\hspace*{-2pt}\infty$. Let $X_t\hspace*{-2pt}\in\hspace*{-2pt}\mathcal{X}_t$ denote the state of the world at $t\hspace*{-2pt}\in\hspace*{-2pt}\mathcal{T}$. At time $t$, each agent, indexed by $i\hspace*{-2pt}\in\hspace*{-2pt} \mathcal{N}\hspace*{-2pt}:=\hspace*{-2pt}\{1,2,...,N\}$, chooses an action $a^i_t\hspace*{-2pt}\in\hspace*{-2pt}\mathcal{A}^i_t$, where  $\mathcal{A}^i_t$ denotes the set of available actions to him at $t$. Given the collective action profile $A_t\hspace*{-2pt}:=\hspace*{-2pt}(A_t^1,...,A_t^N)$, the state of the world evolves according to the following stochastic dynamic equation,\vspace*{-2pt}
\begin{align}
X_{t+1}=f_t(X_t,A_t,W_t^x), \label{eq:systemdynamic1} \vspace*{-2pt}
\end{align} 
where $W_{1:T-1}^x$ is a sequence of independent random variables. The initial state $X_1$ is a random variable that has a probability distribution $\eta\in\Delta(\mathcal{X}_1)$ with full support.

At every time $t\in\mathcal{T}$, before taking an action, agent $i$ receives a noisy private observation $Y_t^i\in\mathcal{Y}_t^i$ of the current state of the world $X_t$ and the action profile $A_{t-1}$, given by\vspace*{-2pt}
\begin{align}
Y_t^i=O_t^i(X_t,A_{t-1},W_t^i), \label{eq:systemdynamic2}\vspace*{-2pt}
\end{align} 
where $W_{1:T}^i$, $i\in\mathcal{N}$, are sequences of independent random variables. Moreover, at every $t\in\mathcal{T}$, all agents receive a common observation $Z_t\in\mathcal{Z}_t$ of the current state of the world $X_t$ and the action profile $A_{t-1}$, given by\vspace*{-2pt}
\begin{align}
Z_t=O_t^c(X_t,A_{t-1},W_t^c), \label{eq:systemdynamic3}\vspace*{-3pt}
\end{align} 
where $W_{1:T}^c$, is a sequence of independent random variables. We note that the agents' actions $A_{t-1}$ is commonly observable at $t$ if $A_{t-1}\subseteq Z_t$.
We assume that the random variables $X_1$, $W_{1:T-1}^x$, $W_{1:T}^c$, and $W_{1:T}^i$, $i\in\mathcal{N}$ are mutually independent. 

\vspace{3pt}

\textit{2) Information structure:} Let $H_t\in\mathcal{H}_t$ denote the aggregate information of all agents at time $t$. Assuming that agents have perfect recall, we have $H_t=\{Z_{1:t},Y_{1:t}^{1:N},A_{1:t-1}^{1:N}\}$, \textit{i.e.} $H_t$ denotes the set of all agents' past observations and actions. The set of all possible realizations of the agents' aggregate information is given by $\mathcal{H}_t:=\prod_{\tau\leq t}\mathcal{Z}_\tau\times\prod_{i\in\mathcal{N}}\prod_{\tau\leq t}\mathcal{Y}_\tau^i\times \prod_{i\in\mathcal{N}}\prod_{\tau< t}\mathcal{A}_\tau^i$. 

At time $t\hspace*{-2pt}\in\hspace*{-2pt}\mathcal{T}$, the aggregate information $H_t$ is not fully known to all agents. 
Let $C_t\hspace*{-2pt}:=\hspace*{-2pt}\{Z_{1:t}\}\hspace*{-2pt}\in\hspace*{-2pt}\mathcal{C}_t$ denote the agents' common information  about $H_t$ and $P_t^i\hspace*{-2pt}:=\hspace*{-2pt}\{Y_{1:t}^i,A_{1:t-1}^i\}\backslash C_t\hspace*{-2pt}\in\hspace*{-2pt}\mathcal{P}_t^i$ denote agent $i$'s private information about $H_t$, where $\mathcal{P}_t^i$ and $\mathcal{C}_t$ denote the set of all possible realizations of agent $i$'s private and common information at time $t$, respectively. 
In Section \ref{sec:model:special}, we discuss several instances of information structures  that can be captured as special cases of our model.

\vspace{3pt}

\textit{3) Strategies and Utilities:} Let $H_t^i:=\{C_t,P_t^i\}\in \mathcal{H}_t^i$ denote the information available to agent $i$ at $t$, where $\mathcal{H}_t^i$ denote the set of all possible realizations of agent $i$'s information at $t$. Agent $i$'s \textit{behavioral strategy} $g_t^i$, $t\in\mathcal{T}$, is defined as a sequence of mappings $g_t^i:\mathcal{H}_t^i\rightarrow \Delta (\mathcal{A}_t^i)$, $t\in\mathcal{T}$, that determine agent $i$'s action $A_t^i$ for every realization $h_t^i\in\mathcal{H}_t^i$ of the history  at $t\in\mathcal{T}$. 

Agent $i$'s instantaneous utility at $t$ depends on the system state $X_t$ and the collective action profile $A_t$, and is given by $u_t^i\hspace*{-1pt}(\hspace*{-1pt}X_t,\hspace*{-1pt}A_t\hspace*{-1pt})$. Agent $i$ chooses his strategy $g_{1\hspace*{-1pt}:T}^i$ so as to maximize his total (expected) utility over horizon $\mathcal{T}$, given by,\vspace*{-2pt}
\begin{align}
U^i(X_{1:T},A_{1:T})=\sum_{t\in\mathcal{T}}u_t^i(X_t,A_t). \label{eq:totalutility}\vspace*{-2pt}
\end{align}

To avoid measure-theoretic technical difficulties and for clarity and convenience of exposition, we assume that all the random variables take values in finite sets.
\begin{assumption}\label{assump:finite}(Finite game)
	The sets $\mathcal{X}_t$, $\mathcal{Z}_t$, $\mathcal{Y}_t^i$, $\mathcal{A}_t^i$, $\mathcal{N}$, and $\mathcal{T}$ are finite.
\end{assumption}

Moreover, we assume that given any sequence of actions $a_{1:t-1}$ up to time $t-1$, every possible realization $x_{t}\hspace*{-2pt}\in\hspace*{-2pt}\mathcal{X}_{t}$ of the system state at $t$ has a strictly positive probability of realization. 

\begin{assumption}(Strictly positive transition matrix)\label{assump:positivetrans}
	For all $t\hspace*{-2pt}\in\hspace*{-2pt}\mathcal{T}$,  $x_{t}\hspace*{-2pt}\in\hspace*{-2pt}\mathcal{X}_{t}$ and $a_{1:t-1}\hspace*{-2pt}\in\hspace*{-2pt}\mathcal{A}_{1:t-1}$, we have $\mathbb{P}\{x_{t}|a_{1:t-1}\}\hspace*{-2pt}>\hspace*{-2pt}0$.
\end{assumption}

Furthermore, we assume that for any sequence of actions $\{a_{1:T}\}$, all possible realizations of private observations $\{y_{1:T}^{1:N}\}$ have positive probability. That is, no agent can infer perfectly another agent's action based only on his private observations.

\begin{assumption}\label{assump:positiveprob} (Imperfect private monitoring)
	For all $t\hspace*{-2pt}\in\hspace*{-2pt}\mathcal{T}$, $y_{1:t}\hspace*{-2pt}\in\hspace*{-2pt}\mathcal{Y}_{1:t}$, and $a_{1:t-1}\hspace*{-2pt}\in\hspace*{-2pt}\mathcal{A}_{1:t-1}$, we have $\mathbb{P}\{y_{1:t}|a_{1:t-1}\}\hspace*{-2pt}>\hspace*{-2pt}0$.
\end{assumption} 

\begin{remark}
	We can relax Assumptions  \ref{assump:positivetrans} and \ref{assump:positiveprob} under certain conditions and obtain results similar to those appearing in this paper; for instance, when agents actions are observable we can relax Assumptions \ref{assump:positivetrans} and \ref{assump:positiveprob}. Broadly, the crucial assumption that underlies our results is that every deviation that can be detected by agent $i$ at
	any time $t$ must be also detectable by all agents at the same time $t$ based only on the common information $C_t$. Due to space limitation we do not include the discussion of Assumptions \ref{assump:positivetrans} and \ref{assump:positiveprob} and the extension of our results when we relax them; we refer an interested reader to \cite{arXiv}. 
\end{remark}


		\vspace*{-7pt}
		
		\subsection{Special Cases}\label{sec:model:special}
We discuss several instances of dynamic games with asymmetric information that are special cases of the general model described above.  

\vspace{3pt}

\textit{1) Nested information structure:} Consider a two-player game with one informed player and one uninformed player and a general Markovian dynamics. At every time $t\hspace*{-2pt}\in\hspace*{-2pt} \mathcal{T}$, 
the informed player makes a private perfect observation of the state $X_t$, \textit{i.e.} $Y_t^1\hspace*{-2pt}=\hspace*{-2pt}X_t$. The uninformed player does not have any observation of the state $X_t$. Both the informed and uninformed players observe each others' actions, \textit{i.e.} $Z_t\hspace*{-2pt}=\hspace*{-2pt}\{A_{t-1}\}$. Therefore, we have $P_t^1=\{X_{1:t}\}$, $P_t^2=\emptyset$, and $C_t\hspace*{-2pt}=\hspace*{-2pt}\{A_{1:t-1}^1\hspace*{-1pt},\hspace*{-1pt}A_{1:t-1}^2\}$ for all $t\hspace*{-2pt}\in\hspace*{-2pt}\mathcal{T}$. The above nested information structure corresponds to dynamic games considered in \cite{renault2006value,renault2012value,li2017efficient}, where in \cite{renault2012value,li2017efficient} the state $X_t$ is static.

\vspace{3pt}

\textit{2) Independent dynamics with observable actions:}
Consider an $N$-player game where the state $X_t\hspace*{-2pt}:=\hspace*{-2pt}(X_t^0\hspace*{-1pt},\hspace*{-1pt}X_t^1\hspace*{-1pt},\hspace*{-1pt}X_t^2\hspace*{-1pt},\hspace*{-1pt}...,\hspace*{-1pt}X_t^N)$ has $N$ components. The agents' actions $A_t$ are observable by all agents, \textit{i.e.} $A_{t-1}\hspace*{-2pt}\subset\hspace*{-2pt} Z_t$ for all $t\hspace*{-2pt}\in\hspace*{-2pt}\mathcal{T}$. At every time $t\in\mathcal{T}$, agent $i$ makes a perfect observation of its local state $X_t^i$ as well as a global state $X_t^0$. Moreover, at time $t$ all agents make a common imperfect observation of state $X_t^i$ given by $Z_t^i\hspace*{-2pt}=\hspace*{-2pt}O_t^c(X_t^i\hspace*{-1pt},\hspace*{-1pt}A_{t-1}\hspace*{-1pt},\hspace*{-1pt}W_t^{c,i})$, $i\hspace*{-2pt}\in\hspace*{-2pt}\mathcal{N}$. Conditioned on the agents' collective action $A_t$, each $X_t^i$ evolves independently over time as $X_{t+1}^i\hspace*{-2pt}=\hspace*{-2pt}f_t(X_t^i\hspace*{-1pt},\hspace*{-1pt}A_{t-1}\hspace*{-1pt},\hspace*{-1pt}W_t^{x,i})$ for all $i\hspace*{-2pt}\in\hspace*{-2pt}\mathcal{N}$ and $t\hspace*{-2pt}\in\hspace*{-2pt}\mathcal{T}$. We assume that $X_1$, $W_t^c$, $t\hspace*{-2pt}\in\hspace*{-2pt}\mathcal{T}$, and $W_t^{x,i}$, $i\hspace*{-2pt}\in\hspace*{-2pt}\mathcal{N}$, $t\hspace*{-2pt}\in\hspace*{-2pt}\mathcal{T}$ are mutually independent. Therefore, we have $P_t^i\hspace*{-2pt}=\hspace*{-2pt}\{X_{1:t}^i\}$ and $C_t\hspace*{-2pt}=\hspace*{-2pt}\{X_{1:t}^0\hspace*{-1pt},\hspace*{-1pt}Z_{1:t}^{1:N}\hspace*{-1pt},\hspace*{-1pt}A_{1:t-1}\}$. The above environment includes the dynamic game considered in \cite{ouyang2016TAC,ouyang2015CDC} as special cases.   

%


\vspace{5pt}

\textit{3) Perfectly controlled dynamics with hidden actions:} Consider a $N$-player game where the state $X_t\hspace*{-2pt}:=\hspace*{-2pt}(X_t^1\hspace*{-1pt},\hspace*{-1pt}X_t^2\hspace*{-1pt},\hspace*{-1pt}...,\hspace*{-1pt}X_t^N)$ has $N$ components. Agent $i$, $i\hspace*{-2pt}\in\hspace*{-2pt}\mathcal{N}$, perfectly controls $X_t^i$, \textit{i.e.} $X_{t+1}^i=A_t^i$. Agent $i$'s actions $A_t^i$, $t\hspace*{-2pt}\in\hspace*{-2pt}\mathcal{T}$, is not observable by  all other agents $-i$. Every agent $i$, $i\hspace*{-2pt}\in\hspace*{-2pt}\mathcal{N}$, makes a noisy private observation $Y_i^t(X_t,W_t^i)$ of the system state at $t\hspace*{-2pt}\in\hspace*{-2pt}\mathcal{T}$. Therefore, we have $P_t^i\hspace*{-2pt}:=\hspace*{-2pt}\{A_{1:t},Y_{1:t}^i\}$, $C_t\hspace*{-2pt}=\hspace*{-2pt}\emptyset$. 

\vspace{-5pt} 
		\section{Appraisals and Assessments}
\label{sec:equilibrium}
In this section we 
\red{provide an overview} of the notions of appraisals, assessments, and an equilibrium solution concept for dynamic games with asymmetric information. We argue that an equilibrium solution concept must consist of a pair of a strategy profile and a belief system (to be defined below), and discuss the importance of off-equilibrium path beliefs in dynamic games.\footnote{We refer the interested reader to the papers by Battigalli \cite{battigalli1996strategic}, Myerson and Remy \cite{myerson2015open}, and Watson \cite{watson2016perfect} for more discussion.}

In a dynamic game with asymmetric information 	
 agents have private information about the evolution of the game, and they do not observe the complete history of the game given by $\{\hspace*{-1pt}H_t,\hspace*{-1pt}X_{t}\hspace*{-1pt}\}$. Therefore, at every time $t\hspace*{-2pt}\in\hspace*{-2pt}\mathcal{T}\hspace*{-1pt}$, each agent, say agent $i\hspace*{-2pt}\in\hspace*{-2pt}\mathcal{N}\hspace*{-1pt}$, needs to form (i) an appraisal about the current state of the system $X_{t}$ and the other agents' information $H_t^{-i}$ (appraisal about the history), and (ii) an appraisal about how other agents will play in the future, so as to evaluate the performance of his strategy choices (appraisal about the future). 
Given the other agents' strategies $g^{-i}$, agent $i$ can utilize his own information $H_t^i$ at $t\hspace*{-2pt}\in\hspace*{-2pt}\mathcal{T}\hspace*{-1pt}$,  along with (i) other agents' past strategies $g_{1:t-1}^{-i}$  and (ii) other agents' future strategies $g_{t:T}^{-i}$ to form these appraisals about the history and future of the game, respectively.
%


In contrast to dynamic teams where agents have a common objective and coordinate their strategies, in dynamic games each agent has his own objective and chooses his strategy $g^i$ so as to maximize his objective. Thus, unlike dynamic teams, in dynamic games strategy $g^i$ is agent $i$'s private information and not known to other agents. Therefore, in dynamic games, each agent needs to form a prediction about the other agents' strategies. We denote this prediction by $g^{*1:N}_{1:T}$ to distinguish it from the strategy profile $g_{1:T}^{1:N}$ that is actually being played by the agents. Following Nash's idea, we assume that agents share a common prediction $g^*$ about the actual strategy $g$. We would like to emphasize that the prediction $g^*$ does not necessarily coincide with the actual strategy $g$. As we point out later, one requirement of an equilibrium is that for every agent $i\hspace*{-2pt}\in\hspace*{-2pt}\mathcal{N}$, the prediction $g^{*i}$ must be an optimal strategy for him given the other agents prediction strategies $g^{*-i}$. 

Since an agent's actual strategy, say agent $i$'s strategy $g^i$, is his own private information, it is possible that $g^i$ is different from the prediction $g^{*i}$. Below we discuss the implication of an agent's deviation from the prediction strategy profile $g^*$. For that matter, we first consider an agent who may want to deviate from  $g^*$, and then we consider an agent who faces such a deviation and his response.

In dynamic games, when agent $i\hspace*{-2pt}\in\hspace*{-2pt}\mathcal{N}$ chooses his strategy $g^i$, he needs to know how other agents will play for any choice of $g^i$ which can be different from the prediction $g^{*i}$. Therefore, the prediction $g^{*}$ has to be defined at all possible information realizations (\textit{i.e.} information sets) of every agent, those that have positive probability under $g^*$ as well as those that have zero probability under $g^*$.\footnote{This is not an issue in dynamic teams since agents coordinate  in advance their choice of strategy profile $g$, and no agent has an incentive to (privately) deviate from it. Hence, the agents' strategy profile $g$ is needed to be defined \red{only} on information sets of positive probability under $g$.}
 Using the prediction $g^*$, any agent, say agent $i$, can form an appraisal about the future of the game for any strategy choice $g^i$, and evaluate the performance of $g^i$.

By the same rationale, when agent $i$ chooses  $g^i$ he needs to determine his strategy for all of his information sets, even those that have zero probability under  $g^{*-i}$. This is because it is possible that some agent $j\hspace*{-2pt}\in\hspace*{-2pt}\mathcal{N}$ may deviate from $g^{*j}$ and play a strategy $g^j$ that is different from the prediction $g^{*j}$. Agent $i$ must foresee these possible deviations by other agents and determine his response to these deviations.

To determine his optimal strategy $g^i$ at any information set, agent $i$ needs to first form an appraisal about the history of the game at $t$ as well as an appraisal about the future of the game using the strategy prediction $g^{*-i}$. For an information set $h_t^i$ that is \textit{compatible} with the prediction $g^{*-i}$ given his strategy $g^i$ at $t\hspace*{-2pt}\in\hspace*{-2pt}\mathcal{T}$ (\textit{i.e.} $h_t^i$ has positive probability of being realized under $g^*$), agent $i$ can use Bayes' rule to derive the appraisal about the history of the game at $t$. However, 
for an information set $h_t^i$ that has zero probability under the prediction $g^{*-i}$ given $g^i$, agent $i$ cannot anymore rely on the prediction $g^*$ and use Bayes' rule to form his appraisal about the history of the game at $t$. The realization of history $h_t^i$ tells agent $i$ that his original prediction $g^{*-i}_{1:t-1}$ is not (completely) correct, thus, he needs to revise his original prediction $g^{*-i}_{1:t-1}$ and to form a revised appraisal about the history of the game at $t$. Therefore, agent $i$ must determine how to form/revise his appraisal about the history of the game for every realization $h_t^i\hspace*{-2pt}\in\hspace*{-2pt}\mathcal{H}_t^i$, $t\hspace*{-2pt}\in\hspace*{-2pt}\mathcal{T}$, that has zero probability under $g^{*-i}$. We note that upon reaching an information set of measure zero, agent $i$ only revises his prediction $g^{*-i}_{1:t-1}$ about other agents' past strategies, but does not change his prediction $g^{*-i}_{t:T}$ about their future strategies. This is because at equilibrium, the prediction $g^{*-i}_{t:T}$ specifies a set of strategies for other agents that are optimal in the continuation game that takes place after the realization of the information set $h_t^i$ of zero probability under $g^*_{1:t-1}$.%
\footnote{In dynamic teams, agents only need to determine their optimal strategy $g$ for information sets that have positive probability under $g$. As a result, a collective choice of strategy $g$ is optimal at every information set with positive probability if and only if it maximizes the (expected) utility of the team from $t=1$ up to $T$. However, in dynamic games agents need to determine their strategies for all information sets irrespective of whether they have zero or positive probability under $g^*$. Therefore, if a choice of strategy $g^i$ maximizes agent $i$'s  (expected) utility from $t=1$ to $T=1$, it does not imply that it is also optimal at all information sets that have zero probability under $\{g^{*-i},g^i\}$. Consequently, a choice of agent $i$'s strategy must be optimal for all continuation games that follow after a realization of an information set $h_t^i$ irrespective of whether it has zero or positive probability.}

We describe \red{below how one} can formalize the above issues we need to consider in the study of dynamic games with asymmetric information. Following the game theory literature \cite{fudenberg1991game}, agents' appraisals about the history and future of the game can be captured by an \textit{assessment} that all agents commonly hold about the game. We define an \textit{assessment} as a pair of mappings $(g^*\hspace*{-2pt},\hspace*{-1pt}\mu)$, where $g^*\hspace*{-2pt}\hspace*{-2pt}:=\hspace*{-2pt}\hspace*{-2pt}\{g_t^{*i}\hspace*{-1pt},\hspace*{-1pt}i\hspace*{-2pt}\in\hspace*{-2pt}\mathcal{N}\hspace*{-1pt},t\hspace*{-2pt}\in\hspace*{-2pt}\mathcal{T}\}$, $g^{*i}_t\hspace*{-2pt}:\hspace*{-2pt}\mathcal{H}_t^i\hspace*{-1pt}\rightarrow\hspace*{-2pt}\Delta(\mathcal{A}_t^i)$ denotes a \textit{prediction} about agent $i$'s strategy at $t$, and $\mu\hspace*{-2pt}:=\hspace*{-2pt}\{\hspace*{-1pt}\mu_t^{i},\hspace*{-1pt}i\hspace*{-2pt}\in\hspace*{-2pt}\mathcal{N}\hspace*{-1pt},\hspace*{-1pt}t\hspace*{-2pt}\in\hspace*{-2pt}\mathcal{T}\}$, where $\mu^{i}_t\hspace*{-2pt}:\hspace*{-2pt}\mathcal{H}_t^i\hspace*{-1pt}\rightarrow\hspace*{-2pt}\Delta(\mathcal{X}_t\hspace*{-2pt}\times\hspace*{-2pt}\mathcal{H}_t^{-i})$ denotes agent $i$'s \textit{belief} about the system state $X_t$ and agents $-i$'s information $H_t^{-i}$ given his information $H_t^i$. The collection of mappings $\mu\hspace*{-2pt}:=\hspace*{-2pt}\{\mu^{i}_t, \hspace*{-1pt}i\hspace*{-2pt}\in\hspace*{-2pt}\mathcal{N}\hspace*{-1pt},\hspace*{-1pt} t\hspace*{-2pt}\in\hspace*{-2pt}\mathcal{T}\}$ is called a \textit{belief system}. For every $i\hspace*{-2pt}\in\hspace*{-2pt}\mathcal{N}$, $t\hspace*{-2pt}\in\hspace*{-2pt}\mathcal{T}$, and $h_t^i\hspace*{-2pt}\in\hspace*{-2pt}\mathcal{H}_t^i$, $\mu^{i}_t(h_t^i)$ denotes agent $i$'s belief about the history $\{\hspace*{-1pt}X_t,\hspace*{-1pt}H_t^{-i}\}$ of the game, and $g_{t:T}^{*-i}$ denotes agent $i$'s prediction about all other agents' continuation strategy from $t$ onward.  We note that $\mu^{i}_t(h_t^i)$  determines agent $i$'s appraisal about the history of the game when $h_t^i$ has either positive or zero probability under $g^*$. Therefore, using an assessment  $(g^*\hspace*{-1pt},\hspace*{-1pt}\mu)$ each agent can fully construct appraisals  about the history and future of the game \red{at any $t\hspace*{-2pt}\in\hspace*{-2pt}\mathcal{T}$.}

Using the definition of an assessment, we can extend the idea of Nash equilibrium to dynamic games with asymmetric information. An equilibrium of the dynamic game is defined as a common assessment $(g^*\hspace*{-2pt},\mu)$ among the agents that satisfies the following conditions under the assumption that the agents are rational. (i) Agent $i\hspace*{-2pt}\in\hspace*{-2pt}\mathcal{N}$ chooses his strategy $g^i_{1:T}$ so as to maximize his total expected utility (\ref{eq:totalutility}) in all continuation games given the assessment $(g^*\hspace*{-1pt},\hspace*{-1pt}\mu)$ about the game. Therefore, the prediction $g^{*i}_{1:T}$ that other agents hold about agent $i$'s strategy must be a maximizer of agent $i$'s total expected utility under the assessment $(g^*\hspace*{-1pt},\hspace*{-1pt}\mu)$. (ii) For all $t\hspace*{-2pt}\in\hspace*{-2pt}\mathcal{T}$, agent $i$'s, $i\hspace*{-2pt}\in\hspace*{-2pt}\mathcal{N}$, belief $\mu^{i}_t(h_t^i)$ at information set $h_t^i\hspace*{-2pt}\in\hspace*{-2pt}\mathcal{H}_t^i$ that has positive probability \red{of realization} under $g^*$, must be equal to the probability distribution of $\{\hspace*{-1pt}X_t,\hspace*{-1pt}H_t^{-i}\}$ \red{conditioned on} the realization $h_t^i$ (determined via Bayes' rule) assuming that agents $-i$ play according to $g^{*-i}_{1:t}$. 
When $h_t^i$ has zero probability under the assessment $g^*$, the belief $\mu_t^i(h_t^i)$ cannot be determined via Bayes' rule and must be revised.   The revised belief must satisfy a certain set of \textit{``reasonable''} conditions so as to be compatible with agent $i$'s rationality. Various sets of conditions have been proposed in the literature (see \cite{fudenberg1991game,osborne1994course}) to capture the notion of ''reasonable'' beliefs  that are compatible with the agents' rationality. Different sets of conditions for off-equilibrium beliefs $\mu^{i}_t(h_t^i)$  result in the different equilibrium concepts that are proposed for dynamic games with asymmetric information.

In this paper, we consider Perfect Bayesian Equilibrium (PBE) as the equilibrium solution concept.
In the next section we provide the formal definition of PBE. 

		\vspace*{-3pt}

		\section{Perfect Bayesian Equilibrium}\label{sec:PBE}

The formal definition of Perfect Bayesian Equilibrium (PBE) for dynamic games in extensive form can be found in \cite{osborne1994course}. In this paper we use a state space representation for dynamic games instead of an extensive game form representation, therefore, we need to adapt the definition of PBE to this representation. \red{A PBE is defined} as an assessment $(g^*\hspace*{-2pt},\hspace*{-1pt}\mu)$ that satisfies the \textit{sequential rationality} and \textit{consistency} conditions. The sequential rationality condition requires that for all $i\hspace*{-2pt}\in \hspace*{-2pt}\mathcal{N}$, the prediction $g^{*i}$ is optimal for agent $i$ given the assessment $(g^*\hspace*{-2pt},\hspace*{-1pt}\mu)$.
The consistency condition requires that for all $i\hspace*{-2pt}\in\hspace*{-2pt} \mathcal{N}$, $t\hspace*{-2pt}\in\hspace*{-2pt}\mathcal{T}$, and $h_t^i\hspace*{-2pt}\in\hspace*{-2pt}\mathcal{H}_t^i$, agent $i$'s belief $\mu(h_t^i)$ must be compatible with prediction $g^*$. We formally define these conditions below.

\vspace*{2pt}

Let $\mathbb{P}^{(g^{*-i}_{t:T},g^{*i}_{t:T})}_{\mu^{i}_t}\hspace*{-2pt}\{\cdot|h_t^i\}$ denote the probability measure induced by the stochastic process that starts at time $t$ with initial condition $\{\hspace*{-1pt}X_t,\hspace*{-1pt}P_t^{-i},\hspace*{-1pt}p^i_t,\hspace*{-1pt}c_t\hspace*{-1pt}\}$, 
where \red{random variables} $\{\hspace*{-1pt}X_t\hspace*{-1pt},\hspace*{-2pt}P_t^{-i}\}$ \red{are} distributed according to probability distribution $\mu^{i}_t(h_t^i)$, assuming that agents $i$ and $-i$ take actions according to strategies $g_{t:T}^{*i}$ and  $g_{t:T}^{*-i}$, respectively. In the sequel, to save some notation, we write $\mathbb{P}^{g^*}_\mu\hspace*{-2pt}\hspace*{-1pt}\{\cdot\}$ instead of  $\mathbb{P}^{(g^{*-i}_{t:T},g^{*i}_{t:T})}_{\mu^{i}_t}\hspace*{-2pt}\{\cdot\}$ whenever there is no confusion. 
	
\begin{definition}[Sequential rationality] We say that an assessment $(g^*\hspace*{-2pt},\hspace*{-1pt}\mu)$ is sequentially rational if $\forall i\hspace*{-2pt}\in\hspace*{-2pt}\mathcal{N}$, $t\hspace*{-2pt}\in\hspace*{-2pt}\mathcal{T}$, and $h_t^i\hspace*{-2pt}\in\hspace*{-2pt}\mathcal{H}_t^i$, the strategy prediction $g^{*i}_{t:T}$ is a solution to 
	\begin{align}
	\sup_{g^{i}_{t:T}} \mathbb{E}^{(g^{*-i}_{t:T},g^{i}_{t:T})}_{\mu_{t}^{i}}\left\{\sum_{\tau=t}^{T}u^i_t(X_t,A_t)|h_t^i\right\} \label{SeqR}
	\end{align} 	
\end{definition}


The sequential rationality condition (\ref{SeqR}) requires that, given the assessment $(g^*\hspace*{-2pt},\hspace*{-1pt}\mu)$,  the prediction strategy $g^{i*}_t$ for agent $i$ is an optimal strategy for him for all continuation games after history realization $h_t^i\in H^i$, irrespective of whether $h_t^i$ has positive or zero probability  under $(g^*\hspace*{-2pt},\hspace*{-1pt}\mu)$. That is, the common prediction $g^{*i}$ about agent $i$'s strategy must be an optimal strategy choice for him since it is common knowledge that he is a rational agent. We note that the sequential rationality condition defined above is more restrictive than the optimality condition for Bayesian Nash Equilibrium (BNE) which only requires  (\ref{SeqR}) to hold at $t\hspace*{-2pt}=\hspace*{-2pt}1$. By the sequential rationality condition, we require the optimality of prediction $g^*$ even along off-equilibrium paths, and thus, we rule out the possibility of \textit{non-credible threats} (see \cite{fudenberg1991game} for more discussion). 

The sequential rationality condition results in a set of constraints that the strategy prediction $g^*$ must satisfy given a belief system $\mu$. As we argued in Section \ref{sec:equilibrium}, the belief system $\mu$ must be also compatible with the strategy prediction $g^*$. The following \textit{consistency} condition captures such compatibility between the belief system $\mu$ and the prediction $g^*$.

\begin{definition}[Consistency] We say that an assessment $(g^*\hspace*{-2pt},\hspace*{-1pt}\mu)$ is consistent if
	\begin{enumerate}[i)]
		\item for all $i\hspace*{-2pt}\in\hspace*{-2pt}\mathcal{N}$\hspace*{-1pt}, \hspace*{-1pt}$t\hspace*{-2pt}\in\hspace*{-2pt}\mathcal{T}\backslash\{1\}$, $h_{t-1}^i\hspace*{-2pt}\in\hspace*{-2pt}\mathcal{H}_{t-1}^i$, and $h_t^i\hspace*{-2pt}\in\hspace*{-2pt}\mathcal{H}_t^i$ such that $\mathbb{P}_{\mu}^{g^*}\hspace*{-2pt}\{\hspace*{-1pt}h_t^i|h_{t-1}^i\hspace*{-1pt}\}\hspace*{-2pt}>\hspace*{-2pt}0$, the belief $\mu^{i}_t(h_t^i)$ must satisfy Bayes' rule, \textit{i.e.} \vspace*{-2pt}
		\begin{align}
		\mu^{i}_t(h_t^i)(x_{1:t},p_t^{-i})=\frac{\mathbb{P}^{g^*}\{h_t^i,x_{t},p_t^{-i}|h_{t-1}^i\}}{\mathbb{P}^{g^*}\{h_t^i|h_{t-1}^i\}}. \label{eq:consistency1} \vspace*{-2pt}
		\end{align}
		\item for all $i\hspace*{-2pt}\in\hspace*{-2pt}\mathcal{N}$, $t\hspace*{-2pt}\in\hspace*{-2pt}\mathcal{T}\backslash\{1\}$, $h_{t-1}^i\hspace*{-2pt}\in\hspace*{-2pt}\mathcal{H}_{t-1}^i$, and $h_t^i\hspace*{-2pt}\in\hspace*{-2pt}\mathcal{H}_t^i$ such that $\mathbb{P}_{\mu}^{g^*}\hspace*{-2pt}\{\hspace*{-1pt}h_t^i|h_{t-1}^i\hspace*{-1pt}\}\hspace*{-2pt}=\hspace*{-2pt}0$, we have \vspace*{-2pt}
		\begin{align*}
		\mu^{i}_t(h_t^i)(x_{1:t},p_t^{-i})> 0 \vspace*{-2pt}
		\end{align*}
		only if there exists an open loop strategy ${(A_{1:t-1}^{-i}\hspace*{-2pt}=\hspace*{-2pt}\hat{a}_{1:t-1}^{-i}\hspace*{-1pt},\hspace*{-1pt}A_{1:t-1}^i\hspace*{-2pt}=\hspace*{-2pt}a_{1:t-1}^i)}$ such that 
		\begin{align}
		\mathbb{P}^{(A_{1:t-1}^{-i}=\hat{a}_{1:t-1}^{-i},A_{1:t-1}^i=a_{1:t-1}^i)}\{x_t,p_t^{-i}\}> 0.
		\label{eq:consistency2}
		\end{align}
	\end{enumerate}
	\label{def:consistency}
\end{definition}

		\vspace*{-3pt}

The above consistency condition places a restriction on the belief system $\mu$ so that it is compatible with the strategy prediction $g^*$. For information sets along equilibrium paths, \textit{i.e.} $\mathbb{P}_{\mu^{i}_1}^{g^*}\hspace*{-1pt}\{\hspace*{-1pt}h_{t}^i\}\hspace*{-2pt}>\hspace*{-2pt}0$, \red{belief} $\mu_t^{i}(h_t^i)$ must be updated according to (\ref{eq:consistency1}) via Bayes' rule since  agent $i$'s observations are consistent with the prediction $g^*$. For information sets along off-equilibrium paths, \textit{i.e.} $\mathbb{P}_{\mu^{i}_1}^{g^*}\hspace*{-1pt}\{\hspace*{-1pt}h_{t}^i\}\hspace*{-2pt}=\hspace*{-2pt}0$,  agent $i$ needs to revise his belief about the strategy of agents $-i$ as the realization of $h_t^i$ indicates that some agent has deviated from prediction $g^{*-i}_{1:t}$. As pointed out before, the revised belief $\mu^{i}\hspace*{-1pt}(h_t^i)$ must be ``reasonable''. Definition \ref{def:consistency} provides a set of such ``reasonable'' conditions captured by (\ref{eq:consistency1}) and (\ref{eq:consistency2}) that we discuss further below. 

First, consider an information set $h_t^i$ along an off-equilibrium path such that $\mathbb{P}_{\mu^{i}_{t-1}}^{g^*}\hspace*{-3pt}\{\hspace*{-1pt}h_{t}^i|h_{t-1}^i\hspace*{-1pt}\}\hspace*{-2pt}>\hspace*{-2pt}0$. That is, conditioned on reaching information set $h_{t-1}^i$ at $t\hspace*{-2pt}-\hspace*{-2pt}1$, $h_t^i$ has a positive probability under the prediction strategy $g^*$. Since $\mathbb{P}_{\mu^{i}_1}^{g^*}\hspace*{-1pt}\{h_{t}^i\}\hspace*{-2pt}=\hspace*{-2pt}\mathbb{P}_{\mu^{i}_{t-1}}^{g^*}\hspace*{-2pt}\{h_{t}^i|h_{t-1}^i\}\mathbb{P}_{\mu^{i}_1}^{g^*}\hspace*{-1pt}\{h_{t-1}^i\}$ and $\mathbb{P}_{\mu^{i}_1}^{g^*}\hspace*{-1pt}\{h_{t}^i\}\hspace*{-2pt}=\hspace*{-2pt}0$,  we have $\mathbb{P}_{\mu^{i}_1}^{g^*}\hspace*{-1pt}\{h_{t-1}^i\}\hspace*{-2pt}=\hspace*{-2pt}0$. Therefore, $h_{t-1}^i$ is also an information set along an off-equilibrium path, and $\mu^{i}(h_{t-1}^i)$ is a revised belief that agent $i$ holds at $t\hspace*{-2pt}-\hspace*{-2pt}1$. Note that if the assessment $(g^*\hspace*{-2pt},\hspace*{-1pt}\mu)$ satisfies the sequential rationality condition, $g^*$ is a best response for all agents in all continuation games that follow the realization of every information set of  positive or zero probability. Moreover, since $\mathbb{P}_{\mu^{i}_{t-1}}^{g^*_{t-1}}\hspace*{-1pt}\{h_{t}^i|h_{t-1}^i\}\hspace*{-2pt}>\hspace*{-2pt}0$, the realization of $h_t^i$ conditioned on reaching $h_{t-1}^i$ is consistent with the strategy prediction $g^*_{t-1}$. Therefore, agent $i$ does not have any reason to further revise his belief about agents $-i$'s strategy beyond the revision that results in $\mu^{i}_{t-1}\hspace*{-1pt}(h_{t-1}^i)$. Thus, agent $i$ determines his belief $\mu^{i}_t\hspace*{-1pt}(h_t^i)$ by utilizing his belief $\mu^{i}_{t-1}\hspace*{-1pt}(h_{t-1}^i)$ at $t\hspace*{-2pt}-\hspace*{-2pt}1$ and updating it via Bayes' rule assuming that agents $-i$' play according to $g^{*-i}_{t-1}$ (see part (i), eq. (\ref{eq:consistency1})).

Next, consider an information set $h_t^i$ along an off-equilibrium path such that $\mathbb{P}_{\mu^{i}_t}^{g^*}\hspace*{-1pt}\{\hspace*{-1pt}h_{t}^i|h_{t-1}^i\}\hspace*{-2pt}=\hspace*{-2pt}0$. That is, conditioned on reaching information set $h_{t-1}^i$ at $t\hspace*{-2pt}-\hspace*{-2pt}1$, $h_t^i$ has a zero probability of realization under the prediction $g^*$.  In this case, the realization of $h_t^i$ indicates that agents $-i$ have deviated from prediction $g_{1:t-1}^{*-i}$, and this deviation has not been detected by agent $i$ before. Therefore, agent $i$ needs to form a new belief \red{about} agents $-i$'s private information $P_t^{-i}$ and the state $X_t$ by revising $\mu_t^i(h_t^{i})$. Part (ii) of the consistency condition concerns such belief revisions and requires that the support of agent $i$'s revised belief $\mu_t^i(h_t^{i})$ includes only the states and private information that are feasible under the system and information dynamics (\ref{eq:systemdynamic1}) and (\ref{eq:systemdynamic2}), that is, they are reachable under some open-loop control strategy ${(A_{1:t-1}^{-i}\hspace*{-2pt}=\hspace*{-2pt}\hat{a}_{1:t-1}^{-i}\hspace*{-1pt},\hspace*{-1pt}A_{1:t-1}^i\hspace*{-2pt}=\hspace*{-2pt}a_{1:t-1}^i)}$. We note that since we are using a state representation of the dynamic game, we need to impose such a requirement, whereas in the equivalent extensive form representation of the game such a requirement is satisfied by the construction of the game-tree.        

\begin{remark}\label{remark:off}
	Under Assumptions \ref{assump:positivetrans} and \ref{assump:positiveprob}, we have $\mathbb{P}_{\mu_1}^{(\hspace*{-1pt}A_{1\hspace*{-1pt}:t-1}=\hat{a}_{1\hspace*{-1pt}:t-1}\hspace*{-1pt})}\{\hspace*{-1pt}x_{1:t},\hspace*{-1pt}p_t^{-i}\}\hspace*{-2pt}>\hspace*{-2pt}0$ for all $(\hspace*{-1pt}A_{1\hspace*{-1pt}:t-1}\hspace*{-2pt}=\hspace*{-2pt}\hat{a}_{1\hspace*{-1pt}:t-1}\hspace*{-1pt})$. Therefore part (ii) of the consistency conditions is trivially satisfied. In the rest of the paper, we ignore part (ii) and only consider part (i) of the definition of consistency. In \cite{arXiv}, we discuss the case when we relax Assumptions \ref{assump:positivetrans} and \ref{assump:positiveprob}.
\end{remark}

We can now provide the formal definition of PBE for the dynamic game of Section \ref{sec:model}.
\begin{definition}
	An assessment $(g^*\hspace*{-2pt},\hspace*{-1pt}\mu)$ is called a PBE if it satisfies the sequential rationality and consistency conditions.
\end{definition} 

The definition of Perfect Bayesian equilibrium provides a general formalization of outcomes that are \textit{rationalizable} (\textit{i.e.} consistent with agents' rationality) under some strategy profile and belief system. However, \red{as we argue further in Section \ref{sec:discussion}}, there are computational and philosophical reasons that motivate us to define a sub class of PBEs that provide a \textit{simpler} and more \textit{tractable} approach to characterizing the outcomes of dynamic games with asymmetric information.

There are two major challenges in computing a PBE $(g^*\hspace*{-2pt},\hspace*{-1pt}\mu)$. First,  there is an inter-temporal coupling between the agents' strategy prediction $g^*$ and belief system $\mu$. According to the consistency requirement, the belief system $\mu$ has to satisfy a set of conditions given a strategy prediction $g^*$. On the other hand, by sequential rationality, a strategy prediction $g^*$ must satisfy a set of optimality condition given belief system $\mu$. Therefore, there is a circular dependency between a prediction strategy $g^*$ and a belief system $\mu$ over time. For instance, by sequential rationality, agent $i$'s strategy $g^{i*}_t$ at time $t$ depends on the agents' future strategies $g^*_{t:T}$ and on the agents' past strategies $g^*_{1:t-1}$ indirectly through the consistency condition for $\mu^{i}_t$.  As a result, one needs to determine the strategy prediction $g^*$ and belief system $\mu$ simultaneously for the whole time horizon so as to satisfy the sequential rationality and consistency conditions, and thus, cannot sequentially decompose the computation of PBE over time. Second, the agents' information $h_t^i$, $i\hspace*{-2pt}\in\hspace*{-2pt}\mathcal{N}$, has a growing domain over time. Hence, the agents' strategies have growing domains over time, and this feature further complicates the computation of PBEs of dynamic games with asymmetric information.    

The definition of PBE requires an agent to keep track of all observations he acquires over time and to form beliefs about the private information of all other agents. As we show next, agents do not need to keep track of all of their past observations to reach an equilibrium. They can take into account fewer variables for decision making and ignore part of their information that is not \textit{relevant} to the continuation game. As we argue in Section \ref{subsec:PBE}, the class of simpler strategies proposed in this paper characterize a more plausible prediction about the outcome of the interaction among agents when the underlying system is highly dynamic and there exists considerable information asymmetry among them. 

		\vspace*{-5pt}

\section{The Sufficient Information Approach}\label{sec:CIB}
\red{We} characterize a class of PBEs that utilize strategy choices that are simpler than general behavioral strategies as they require agents to keep track of only a compressed version of their information over time. We proceed as follows. 
In Section \ref{subsec:privatecompress} we provide sufficient conditions for the subset of private information an agent needs to keep track of over time for decision making purposes.
 In Section \ref{subsec:CIBassessment}, we introduce the sufficient information based belief as a compressed version of the agents' common information that is sufficient for decision-making purposes. Based on these compressions of the agents' private and common information, we introduce the notion of sufficient information based assessments and Sufficient Information Based-Perfect Bayesian Equilibrium (SIB-PBE) in Sections \ref{subsec:CIBassessment} and \ref{subsec:CIBPBE}, respectively.
  
		\vspace*{-7pt}

\subsection{Sufficient Private Information}\label{subsec:privatecompress}
The key ideas for compressing an agent's private information appear in Definitions \ref{def:sufficient} below; We refer an interested reader to the companion paper \cite{team} for discussion on the rationale behind Definition \ref{def:sufficient}. 

\begin{definition}[Sufficient private information]
	We say $S_t^i=\zeta_t^i(P_t^i,C_t;g_{1:t-1}^*)$, $i\in\mathcal{N}$, $t\in\mathcal{T}$, is \textit{sufficient private information} for the agents if, 
	\begin{enumerate}[(i)]
		\item it can be updated recursively as 
		\begin{gather}
		S_t^{i}=\phi_t^i(S_{t-1}^{i},H_t^i\backslash H_{t-1}^i;g_{1:t-1}^*)  \text{ if } t\in\mathcal{T}\backslash\{1\}, \label{eq:sufficientupdate}
		\end{gather}
		\item for any strategy profile $g^*$ and for all realizations $\{c_t,p_t,p_{t+1},z_{t+1},a_t\}\in\mathcal{C}_t\times\mathcal{P}_t\times\mathcal{P}_{t+1}\times\mathcal{Z}_{t+1}$ of positive probability,
		\begin{align}
		\hspace*{-26pt}\mathbb{P}^{g^*_{1:t}}\hspace*{-1pt}\left\{\hspace*{-2pt}s_{t+1}\hspace*{-1pt},\hspace*{-1pt}z_{t+1}\hspace*{-1pt}\Big|p_t\hspace*{-1pt},\hspace*{-1pt}c_t\hspace*{-1pt},\hspace*{-1pt}a_t\hspace*{-2pt}\right\}\hspace*{-3pt}=\hspace*{-2pt}\mathbb{P}^{g^*_{1:t}}\hspace*{-1pt}\left\{\hspace*{-2pt}s_{t+1}\hspace*{-1pt},\hspace*{-1pt}z_{t+1}\hspace*{-1pt}\Big|s_t\hspace*{-1pt},\hspace*{-1pt}c_t\hspace*{-1pt},\hspace*{-1pt}a_t\hspace*{-2pt}\right\}\hspace*{-1pt},\hspace*{-4pt}\label{eq:sufficientdynamic}
		\end{align}
		\hspace*{-4pt}where $s_{\tau}^{1:N}\hspace*{-3pt}=\hspace*{-2pt}\zeta_{\tau}^{1:N}\hspace*{-2pt}(p_{\tau}^{1:N}\hspace*{-1pt},\hspace*{-1pt}c_{\tau};\hspace*{-1pt}g_{1\hspace*{-1pt}:\tau-1}^*\hspace*{-1pt})$ for $\tau\in\mathcal{T}$;
		\item 
		for every strategy profile \red{$\tilde{g}^*$ of the form} $\tilde{g}^*\hspace*{-2pt}:=\hspace*{-2pt}\{\hspace*{-1pt}\tilde{g}^{*i}_t\hspace*{-1pt}:\hspace*{-1pt}\mathcal{S}_t^i\times \mathcal{C}_t\rightarrow \Delta(\mathcal{A}_t^i), i\hspace*{-2pt}\in\hspace*{-2pt}\mathcal{N}\hspace*{-1pt},\hspace*{-1pt} t\hspace*{-2pt}\in\hspace*{-2pt}\mathcal{T}\}$ and $a_t\hspace*{-2pt}\in\hspace*{-2pt}\mathcal{A}_t$, $t\hspace*{-2pt}\in\hspace*{-2pt}\mathcal{T}$;
		\begin{align} 		
		\hspace*{-26pt}\mathbb{E}^{\tilde{g^*\hspace*{-2pt}}_{1:t-1}^{}}\hspace*{-2.5pt}\left\{\hspace*{-2pt}u_t^i(\hspace*{-1pt}X_t\hspace*{-1pt},\hspace*{-1pt}A_t\hspace*{-1pt})\hspace*{-1pt}\Big|c_t\hspace*{-1pt},\hspace*{-1pt}p_t^i\hspace*{-1pt},\hspace*{-1pt}a_t\hspace*{-2pt}\right\}\hspace*{-3pt}=\hspace*{-2pt}\mathbb{E}^{\tilde{g^*\hspace*{-2pt}}_{1:t-1}^{}}\hspace*{-2.5pt}\left\{\hspace*{-2pt}u_t^i(\hspace*{-1pt}X_t\hspace*{-1pt},\hspace*{-1pt}A_t\hspace*{-1pt})\hspace*{-1pt}\Big|c_t\hspace*{-1pt},\hspace*{-1pt}s_t^{i}\hspace*{-1pt},\hspace*{-1pt}a_t\hspace*{-2pt}\right\}\hspace*{-3pt},\hspace*{-5pt}\label{eq:payoff-relevant2}
		\end{align} 
		for all realizations $\{\hspace*{-1pt}c_{t}\hspace*{-1pt},\hspace*{-1pt}p_{t}^i\}\hspace*{-3pt}\in\hspace*{-2pt}\mathcal{C}_{t}\hspace*{-1pt}\times\hspace*{-1pt}\mathcal{P}_{t}^i$ of positive probability where $s_{\tau}^{1:N}\hspace*{-3pt}=\hspace*{-2pt}\zeta_{\tau}^{1:N}\hspace*{-2pt}(p_{\tau}^{1:N}\hspace*{-1pt},\hspace*{-1pt}c_{\tau};\hspace*{-1pt}\tilde{g}_{1\hspace*{-1pt}:\tau-1}^*\hspace*{-1pt})$ for $\tau\in\mathcal{T}$;\vspace{5pt}
		
		\item given an arbitrary strategy profile \red{$\tilde{g}^*$ of the form} $\tilde{g^*}\hspace*{-1pt}:=\hspace*{-1pt}\{\tilde{g}^{i}_t:\mathcal{S}_t^i\hspace*{-1pt}\times \hspace*{-1pt}\mathcal{C}_t\rightarrow \Delta(\mathcal{A}_t^i), i\hspace*{-2pt}\in\hspace*{-2pt}\mathcal{N}, t\hspace*{-2pt}\in\hspace*{-2pt}\mathcal{T}\}$, $i\hspace*{-2pt}\in\hspace*{-2pt}\mathcal{N}$, and $t\hspace*{-2pt}\in\hspace*{-2pt}\mathcal{T}$,
		\begin{align}
		\hspace*{-25pt}\mathbb{P}^{\tilde{g}^{*}_{1:t-1}}\hspace*{-2pt}\left\{\hspace*{-2pt}s_t^{-i}\hspace*{-1pt}\Big|p_t^i\hspace*{-1pt},\hspace*{-1pt}c_t\hspace*{-2pt}\right\}\hspace*{-3pt}=\hspace*{-2pt}\mathbb{P}^{\tilde{g}^{*}_{1:t-1}}\hspace*{-2pt}\left\{\hspace*{-1pt}s_t^{-i}\hspace*{-1pt}\Big|s_t^i\hspace*{-1pt},\hspace*{-1pt}c_t\hspace*{-2pt}\right\}\hspace*{-1pt},\hspace*{-4pt}\label{eq:sufficientinfo}
		\end{align}
		for all realizations $\{c_{t}\hspace*{-1pt},\hspace*{-1pt}p_{t}^i\}\hspace*{-2pt}\in\hspace*{-2pt}\mathcal{C}_{t}\hspace*{-2pt}\times\hspace*{-2pt}\mathcal{P}_{t}^i$ of positive probability where $s_{\tau}^{1:N}\hspace*{-3pt}=\hspace*{-2pt}\zeta_{\tau}^{1:N}\hspace*{-2pt}(p_{\tau}^{1:N}\hspace*{-1pt},\hspace*{-1pt}c_{\tau};\hspace*{-1pt}\tilde{g}_{1\hspace*{-1pt}:\tau-1}^*\hspace*{-1pt})$ for $\tau\in\mathcal{T}$.		
	\end{enumerate}
	\label{def:sufficient}
\end{definition}

We note that the conditions of Definition \ref{def:sufficient} is written in terms of strategy prediction profile $g^*$ for dynamic games. This is because, as we discussed before, the agents' actual strategy profile $g$ is their private information. Therefore, each agent $i$, $i\in\mathcal{N}$, evaluates the sufficiency of a compression of his private information using the strategy prediction he holds about other agents. 

\vspace*{-5pt}

\subsection{Sufficient Common Information}\label{subsec:commoncompress}

Based on the characterization of sufficient private information, we present a statistic (compressed version) of the common information $C_t$ that agents need to keep track of over time for decision making purposes.

Consider the sufficient private information $S_t^{1:N}\hspace*{-1pt}$, $t\hspace*{-2pt}\in\hspace*{-2pt}\mathcal{T}$. Define $\mathcal{S}_t^i$ to be the set of all possible realizations of $S_t^i$, and $\mathcal{S}_t\hspace*{-2pt}:=\hspace*{-2pt}\prod_{i=1}^N\hspace*{-1pt}\mathcal{S}_t^i$.     
Let $\gamma_t\hspace*{-2pt}:\hspace*{-2pt}\mathcal{C}_t\hspace*{-2pt}\rightarrow \hspace*{-2pt}\Delta(\hspace*{-1pt}\mathcal{X}_t\hspace*{-2pt}\times\hspace*{-2pt}\mathcal{S}_t\hspace*{-1pt})$ denote a mapping that determines a conditional probability distribution over the system state $X_t$ and the agents' sufficient private information $S_t$ given the common information $C_t$ at time $t$. We call  the collection of mappings $\gamma:=\{\gamma_t\hspace*{-1pt},\hspace*{-1pt}t\hspace*{-2pt}\in\hspace*{-2pt}\mathcal{T}\}$ a \textit{Sufficient Information Based belief system} (SIB belief system). Note that $\gamma_t$ is only a function of the common information $C_t$, and thus, it is computable by all agents. Let $\Pi_t^\gamma\hspace*{-2pt}:=\hspace*{-2pt}\gamma_t\hspace*{-1pt}(C_t)$ denote the (random) sufficient information based belief that agents hold under belief system $\gamma$ at $t$. We can interpret $\Pi_t^\gamma$ as the common belief that each agent holds about the system state $X_t$ and all the agents' (including himself) sufficient private information $S_t$ at time $t$. We call the SIB belief $\Pi_t$ a sufficient common information for the agents.   
In the rest of the paper, we write $\Pi_t$ and drop the superscript $\gamma$ whenever such a simplification in notation is clear. Moreover, we use the terms sufficient common information and SIB belief interchangeably.

\vspace{-6pt}
\subsection{Special Cases:}
We consider the special classes described in Section \ref{sec:model} and identify the sufficient information $S_{1:T}^{1:N}$ and SIB belief for each of them.

\textit{1) Nested information structure:} The uninformed agent (agent $2$) has no private information, $P_t^2\hspace*{-2pt}=\hspace*{-2pt}\emptyset$. Thus, $S_t^2\hspace*{-2pt}=\hspace*{-2pt}\emptyset$. For  the informed agent (agent $1$) consider $P_t^{1,pr}\hspace*{-2pt}=\hspace*{-2pt}X_t$.  Consequently, we can set $S_t^1\hspace*{-2pt}=\hspace*{-2pt}X_t$. Note that $P_t^2\hspace*{-2pt}=\hspace*{-2pt}\emptyset$, thus, the uninformed agent's belief about $P_t^1$ is the same as SIB belief $\Pi_t\hspace*{-2pt}=\hspace*{-2pt}\mathbb{P}\{X_t|A_{1:t-1}^1,\hspace*{-1pt}A_{1:t-1}^2\}$. 

\textit{2) Independent dynamics with observable actions:}  Consider $S_t^{i}\hspace*{-2pt}=\hspace*{-2pt}X_t^i$. Note that $X_t^j$, $j\hspace*{-2pt}\in\hspace*{-2pt}\mathcal{N}$ have independent dynamics given the collective action $A_t$ that is commonly observable by all agents. Therefore, agent $i$'s belief about $X_j$, $j\neq i$, is the same as SIB belief, $\Pi_t\hspace*{-2pt}=\hspace*{-2pt}\mathbb{P}^g\{X_t^j|C_t\}\hspace*{-2pt}=\hspace*{-2pt}\textit{P}^g\{X_t^j|P_t^i,C_t\}$. 

%

\textit{3) Perfectly controlled dynamics with hidden actions:} Since agent $i$, $i\in\mathcal{N}$, perfectly controls $X_t^i$ over time $t\hspace*{-2pt}\in\hspace*{-2pt}\mathcal{T}$, we set $S_t^i\hspace*{-2pt}=\hspace*{-2pt}\{A_{t-1}^i,Y_t^i\}$ and $\Pi_t\hspace*{-2pt}=\hspace*{-2pt}\emptyset$. 

		\vspace*{-5pt}

\subsection{Sufficient Information based Assessment} \label{subsec:CIBassessment}\vspace*{-2pt}
As we discussed in Section \ref{sec:PBE}, to form a prediction about the game we need to determine an assessment about the game that is sequentially rational and consistent. 
We show below that using the sufficient private information $S_t^{1:N}$ and sufficient common information (the SIB belief) $\Pi_t$, we can form a sufficient information based assessment about the game. We prove that such a sufficient information based assessment is rich enough to capture a subset of PBE. 

 
Consider a class of strategies that utilize the information given by $(\Pi_t,\hspace*{-1pt}S_t^i)$ for agent $i\hspace*{-2pt}\in\hspace*{-2pt}\mathcal{N}$ at time $t$. We call the mapping $\sigma^i_t\hspace*{-2pt}:\hspace*{-2pt}\Delta(\hspace*{-1pt}X_t\hspace*{-2pt}\times\hspace*{-1pt}\mathcal{S}_t\hspace*{-1pt})\hspace*{-1pt}\times\hspace*{-1pt} \mathcal{S}_t^i\hspace*{-1pt}\rightarrow\hspace*{-2pt} \Delta(\hspace*{-1pt}\mathcal{A}_t^i)$ a \textit{Sufficient Information Based (SIB) strategy} for agent $i$ at time $t$. A SIB strategy $\sigma^i_t$ determines a probability distribution for agent $i$'s action $A_t^i$ at time $t$ given his information $(\Pi_t,\hspace*{-1pt}S_t^i)$. A SIB strategy is a behavioral strategy where agents only use the SIB belief $\Pi_t\hspace*{-2pt}=\hspace*{-2pt}\gamma_t\hspace*{-1pt}(\hspace*{-1pt}C_t)$ (instead of complete common information $C_t$), and the sufficient private information $S_t^i=\zeta_t^i(\hspace*{-1pt}P_t^i,\hspace*{-1pt}C_t\hspace*{-1pt};g^*_{1:t-1})$ (instead of complete private information $P_t^i$). A collection of SIB strategies $\{\hspace*{-1pt}\sigma_{1:T}^1\hspace*{-1pt},\hspace*{-1pt}...,\hspace*{-1pt}\sigma_{1:T}^N\hspace*{-1pt}\}$ is called a \textit{SIB strategy profile} $\sigma$.  The set of SIB strategies is a subset of behavioral strategies, defined in Section \ref{sec:model}, as we can define,
\begin{align*}
g^{(\sigma,\gamma),i}_t(h_t^i):=\sigma^i_t(\pi_t^\gamma,s_t^i). 
\end{align*} 

In Section \ref{sec:PBE}, we defined a consistency condition between strategy prediction $g^*$ and a belief system $\mu$. Below, we provide an analogous consistency condition between a SIB strategy prediction $\sigma^*$ and a SIB belief system $\gamma$.

\begin{definition}
	A pair $(\sigma^*\hspace*{-2pt},\hspace*{-1pt}\gamma)$ of a SIB strategy prediction profile $\sigma^*$  and belief system $\gamma$ satisfies the consistency condition if
	\begin{enumerate}[(i)]
		\item for all $t\hspace*{-2pt}\in\hspace*{-2pt}\mathcal{T}\backslash\{1\}$\footnote{For $t\hspace*{-2pt}=\hspace*{-2pt}1$, $\Pi_1$ is given by the conditional probability at $t\hspace*{-2pt}=\hspace*{-2pt}1$ as  $\Pi_1\hspace*{-1pt}(x_1\hspace*{-1pt},\hspace*{-2pt}s_1\hspace*{-1pt})\hspace*{-2pt}:=\hspace*{-1pt}\frac{\mathbb{P}\{s_1|x_1,z_1\}}{\sum_{\hat{x}_1\in\mathcal{X}_1}\mathbb{P}\{z_1|\hat{x}_1\}\eta(\hat{x}_1)}$.}, $z_{t}\hspace*{-2pt}\in\hspace*{-2pt}\mathcal{Z}_{t}$, $\pi_{t-1}\hspace*{-2pt}=\hspace*{-2pt}\gamma_{t-1}\hspace*{-1pt}(\hspace*{-1pt}c_{t-1}\hspace*{-1pt})$,  and $\pi_t\hspace*{-2pt}=\hspace*{-2pt}\gamma_t\hspace*{-1pt}(\hspace*{-1pt}\{\hspace*{-1pt}c_{t-1}\hspace*{-1pt},\hspace*{-1pt}z_t\hspace*{-1pt}\}\hspace*{-1pt})$ such that $\mathbb{P}^{\sigma^*_t}_{\pi_{t-1}}\hspace*{-1pt}\{\hspace*{-1pt}z_{t}\hspace*{-1pt}\}\hspace*{-2pt}>\hspace*{-2pt}0$, $\pi_t$ must satisfy Bayes' rule, \textit{i.e.},\vspace*{-2pt}
		\begin{align}
		\hspace{-15pt}\pi_{t}(x_{t},s_{t})=\frac{\mathbb{P}_{\pi_{t-1}}^{\sigma^*_t}\{x_{t},s_{t},z_{t}\}}{\mathbb{P}_{\pi_{t-1}}^{\sigma^*_t}\{z_{t}\}}, \quad\forall x_{t}\in\mathcal{X}_{t},\forall s_{t}\in\mathcal{S}_{t},\label{eq:CIBconsistency-on}
		\end{align}
		\item for all $t\hspace*{-2pt}\in\hspace*{-2pt}\mathcal{T}\backslash\{1\}$, $c_{t-1}\hspace*{-2pt}\in\hspace*{-2pt}\mathcal{C}_{t-1}$, $\pi_{t-1}\hspace*{-2pt}=\hspace*{-2pt}\gamma_{t-1}\hspace*{-1pt}(\hspace*{-1pt}c_{t-1}\hspace*{-1pt})$, $z_{t}\hspace*{-2pt}\in\hspace*{-2pt}\mathcal{Z}_{t}$, and $\pi_t\hspace*{-2pt}=\hspace*{-2pt}\gamma_t\hspace*{-1pt}(\hspace*{-1pt}\{\hspace*{-1pt}c_{t-1}\hspace*{-1pt},\hspace*{-1pt}z_t\hspace*{-1pt}\}\hspace*{-1pt})$ such that $\mathbb{P}^{\sigma^*_t}_{\pi_{t-1}}\hspace*{-1pt}\{\hspace*{-1pt}z_{t}\hspace*{-1pt}\}\hspace*{-2pt}=\hspace*{-2pt}0$, we have\vspace*{-2pt}
		\begin{align*}
		\pi_{t}(x_{t},s_{t})>0,\quad\forall x_{t}\in\mathcal{X}_{t},\forall s_{t}\in\mathcal{S}_{t},\vspace*{-2pt}
		\end{align*}
		only if there exists an open-loop strategy $(\hspace*{-1pt}A_{1\hspace*{-1pt}:t-1}=a_{1\hspace*{-1pt}:t-1}\hspace*{-1pt})$ such that $\mathbb{P}_{\pi_1}^{(\hspace*{-1pt}A_{1\hspace*{-1pt}:t-1}=a_{1\hspace*{-1pt}:t-1}\hspace*{-1pt})}\hspace*{-1pt}\{\hspace*{-1pt}c_{t-1}\hspace*{-1pt},\hspace*{-1pt}z_t\hspace*{-1pt}\}\hspace*{-2pt}>\hspace*{-2pt}0$, and
				\begin{align}
				\mathbb{P}_{\pi_1}^{(A_{1:t-1}=a_{1:t-1})}\{x_{t},s_{t}\}> 0,
				\label{eq:CIBconsistency-off}
				\end{align}
		\item for all $t\hspace*{-2pt}\in\hspace*{-2pt}\mathcal{T}\backslash\{1\}$, $c_{t-1}\hspace*{-2pt}\in\hspace*{-2pt}\mathcal{C}_{t-1}$, $\pi_{t-1}\hspace*{-2pt}=\hspace*{-2pt}\gamma_{t-1}\hspace*{-1pt}(\hspace*{-1pt}c_{t-1}\hspace*{-1pt})$, $z_{t}\hspace*{-2pt}\in\hspace*{-2pt}\mathcal{Z}_{t}$, and $\pi_t\hspace*{-2pt}=\hspace*{-2pt}\gamma_t\hspace*{-1pt}(\hspace*{-1pt}\{\hspace*{-1pt}c_{t-1}\hspace*{-1pt},\hspace*{-1pt}z_t\hspace*{-1pt}\}\hspace*{-1pt})$ such that $\mathbb{P}^{\sigma^*_t}_{\pi_{t-1}}\hspace*{-1pt}\{\hspace*{-1pt}z_{t}\hspace*{-1pt}\}\hspace*{-2pt}=\hspace*{-2pt}0$, we have
		\begin{align*}
		\sum_{x_{t}\in\mathcal{X}_{t}}\pi_{t}(x_{t},s_{t})>0, \quad\forall s_{t}\in\mathcal{S}_{t}\vspace*{-2pt}
		\end{align*}
		if there exists an open-loop strategy $\hspace*{-1pt}(\hspace*{-1pt}A_{1\hspace*{-1pt}:t-1}\hspace*{-2pt}=\hspace*{-2pt}a_{1\hspace*{-1pt}:t-1}\hspace*{-1pt})$ such that $\mathbb{P}_{\pi_1}^{(\hspace*{-1pt}A_{1\hspace*{-1pt}:t-1}\hspace*{-1pt}=a_{1\hspace*{-1pt}:t-1}\hspace*{-1pt})}\hspace*{-1pt}\{\hspace*{-1pt}c_{t-1}\hspace*{-1pt},\hspace*{-1pt}z_t\hspace*{-1pt}\}\hspace*{-2pt}>\hspace*{-2pt}0$, and
		\begin{align}
		\sum_{x_{t}\in\mathcal{X}_{t}}\mathbb{P}_{\pi_1}^{(A_{1:t-1}=a_{1:t-1})}\{x_{t},s_{t}\}> 0.
		\label{eq:CIBconsistency-positive}\vspace*{-2pt}
		\end{align}
	\end{enumerate}
	\label{def:consistencycommon}
\end{definition} 

Parts (i) and (ii) of Definition \ref{def:consistencycommon} follow from rationales similar to their analogues  in Definition \ref{def:consistency}, and require a SIB belief system to satisfy a sets of constraints with respect to a SIB strategy profile that are similar to those for an assessment $(g^*\hspace*{-2pt},\hspace*{-1pt}\mu)$. Definition \ref{def:consistencycommon} requires an additional condition described by part (iii). By (\ref{eq:CIBconsistency-positive}), we require a SIB belief system $\gamma$ consistent with the SIB strategy profile $\sigma^*$ to assign a positive probability to every realization $s_t$ of the agents' sufficient private information $S_{t}$ that is ``plausible'' given the common information realization $c_{t}\hspace*{-2pt}=\hspace*{-2pt}\{\hspace*{-1pt}c_{t-1}\hspace*{-1pt},\hspace*{-1pt}z_t\hspace*{-1pt}\}$; plausibility of $s_{t}$ given $c_t$ means that there exists an open-loop strategy profile $(\hspace*{-1pt}A_{1\hspace*{-1pt}:t-1}\hspace*{-2pt}=\hspace*{-2pt}a_{1\hspace*{-1pt}:t-1}\hspace*{-1pt})$ consistent with the realization $c_t$ that leads to the realization of $s_{t}$ with positive probability. Therefore, part (iii) ensures that there exists no possible incompatibility between the SIB belief $\Pi_{t}$ and the agents' sufficient private information $S_{t+1}$. As we show later (Section \ref{subsec:CIBPBE}), such a compatibility condition allows each agent to refine the SIB belief $\Pi_t$ using his own private sufficient information $S_t^i$, and to form his private belief about the game. 

\begin{remark}\label{remark:CIBoff}
	Assumptions \ref{assump:positivetrans} and \ref{assump:positiveprob} imply that (\ref{eq:CIBconsistency-off}) holds for all ${(A_{1\hspace*{-1pt}:t-1}=a_{1\hspace*{-1pt}:t-1})}$ such that  $\mathbb{P}^{(\hspace*{-1pt}A_{1\hspace*{-1pt}:t-1}\hspace*{-1pt}=a_{1\hspace*{-1pt}:t-1}\hspace*{-1pt})}\hspace*{-1pt}$ $\{\hspace*{-1pt}c_{t}\hspace*{-1pt}\}\hspace*{-2pt}>\hspace*{-2pt}0$. Therefore, in the rest of the paper, we ignore part (ii) of the consistency condition for SIB belief systems. Moreover, under Assumptions \ref{assump:positivetrans} and \ref{assump:positiveprob}, condition (\ref{eq:CIBconsistency-positive}) is always satisfied. Therefore, condition (iii) is equivalent to having $\sum_{x_t\hspace*{-1pt}\in\mathcal{X}_t}\hspace*{-2pt}\pi_{t}\hspace*{-1pt}(\hspace*{-1pt}x_t,\hspace*{-1pt}p_t\hspace*{-1pt})\hspace*{-2pt}>\hspace*{-2pt}0$ whenever $\mathbb{P}^{\sigma^*}_{\pi_{t-1}}\hspace*{-2pt}\{\hspace*{-1pt}z_{t}\hspace*{-1pt}\}\hspace*{-2pt}=\hspace*{-2pt}0$.  
\end{remark}

Given a SIB strategy profile prediction $\sigma^*$, a consistent SIB belief must satisfy (\ref{eq:CIBconsistency-on}), which determines the SIB belief $\Pi_{t}$ at $t$ in terms of the SIB belief $\Pi_{t-1}$ at $t-1$ and the new common information $Z_{t}$ at $t$. We define a \textit{SIB belief update rule} as a mapping $\psi_t\hspace*{-1pt}:\hspace*{-1pt}\Delta(\hspace*{-1pt}\mathcal{X}_{t-1}\hspace*{-1pt}\times\hspace*{-1pt} \mathcal{S}_{t-1}\hspace*{-1pt})\hspace*{-1pt}\times \hspace*{-1pt}\mathcal{Z}_{t}\hspace*{-1pt}\rightarrow\hspace*{-2pt} \Delta(\hspace*{-1pt}\mathcal{X}_t\hspace*{-1pt}\times\hspace*{-1pt} \mathcal{S}_t\hspace*{-1pt}), t\hspace*{-2pt}\in\hspace*{-2pt}\mathcal{T}$  that determines recursively the SIB belief
\begin{align}
\Pi_t^\psi:=\psi_t(\Pi_{t-1}^\psi,Z_{t}),\label{eq:CIBupdaterule}
\end{align}
as a function of new common observation $Z_{t}$ at $t$ and the SIB belief $\Pi_{t-1}^{\psi}$ at $t\hspace*{-2pt}-\hspace*{-2pt}1$.\footnote{Upon reaching an information set of measure zero (parts (ii) and (iii) of Definition \ref{def:consistencycommon}), the revised SIB belief could be a function of $C_t\hspace*{-2pt}=\hspace*{-2pt}\{\hspace*{-1pt}C_{t-1}\hspace*{-1pt},\hspace*{-1pt}Z_{t}\hspace*{-1pt}\}$, \red{rather than only} $\Pi_{t-1}\hspace*{-1pt}(\hspace*{-1pt}C_{t-1}\hspace*{-1pt})$ and $Z_{t}$. Therefore, the set of SIB belief systems that \red{can be} generated from SIB update rules is a subset of all consistent SIB belief systems given by Definition \ref{def:consistencycommon}. However, we argue that upon reaching an information set of measure zero, it is more plausible to revise the SIB belief only as a function of relevant information $\Pi_{t-1}\hspace*{-1pt}(\hspace*{-1pt}C_{t-1}\hspace*{-1pt})$ and $Z_{t}$; $C_t$ is irrelevant given $\Pi_{t-1}\hspace*{-1pt}(\hspace*{-1pt}C_{t-1}\hspace*{-1pt})$ and $Z_{t}$.}
 The superscript $\psi$ in $\Pi_t^{\psi}$ indicates that the SIB belief $\Pi_t^\psi$ is generated using the SIB update rule $\psi$. Let $\gamma^{\psi}$ denote the common belief system that is equivalent to the SIB update rule $\psi$. We call a SIB belief update rule $\psi$ consistent with a SIB strategy profile $\sigma^*$ if the equivalent SIB belief system $\gamma^{\psi}$ is consistent with $\sigma^*$ (Definition \ref{def:consistencycommon}).

Define a SIB assessment $(\sigma^*\hspace*{-2pt},\hspace*{-1pt}\gamma)$ as a pair of SIB strategy profile $\sigma^*$ and a SIB belief system $\gamma$. Below, we show that a consistent SIB assessment $(\sigma^*\hspace*{-2pt},\hspace*{-1pt}\gamma)$ is equivalent to a consistent assessment $(g^*\hspace*{-2pt},\hspace*{-1pt}\mu)$ as defined in Section \ref{sec:PBE} (Definition \ref{def:consistency}).

\begin{lemma}\label{lemma:equivalence}
	For any given SIB assessment $(\sigma^*\hspace*{-2pt},\hspace*{-1pt}\gamma)$, there exists an equivalent assessment $(g^*\hspace*{-2pt},\hspace*{-1pt}\mu)$ of a behavioral strategy prediction $g^*$ and belief system $\mu$ such that:
	\begin{enumerate}[i)]
		\item the behavioral strategy $g^*$ is defined by
		\begin{align}
		g^{*i}_t(h_t^i):=\sigma^{*i}_t(\pi_t^\gamma,s_t^i); 
		\label{eq:CIBstrategy}
		\end{align}
		\item the belief system $\mu$ is consistent with $g^*$ and satisfies
		\begin{align}
			\mathbb{P}^{g^*}\left\{s_t^{-i}|h_t^i\right\} = \mathbb{P}\left\{s_t^{-i}|\pi_t,s_t^i\right\},\label{eq:lemma1} 
		\end{align}	
		for all $i\hspace*{-1pt}\in\hspace*{-1pt}\mathcal{N}$, $t\hspace*{-1pt}\in\hspace*{-1pt}\mathcal{T}$,$h_t^i\hspace*{-1pt}\in\hspace*{-1pt}\mathcal{H}_t^i$, and $s_t^{-i}\hspace*{-1pt}\in\hspace*{-1pt}\mathcal{S}_t^{-i}$.
		\end{enumerate}
\end{lemma}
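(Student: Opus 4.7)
The plan is to construct $(g^*, \mu)$ from $(\sigma^*, \gamma)$ by forward induction on $t$ and verify the required properties. I first define $g^*$ via the prescribed formula (\ref{eq:CIBstrategy}). Since $\pi_t^\gamma = \gamma_t(c_t)$ and $s_t^i = \zeta_t^i(p_t^i, c_t; g_{1:t-1}^*)$ are both measurable functions of the history $h_t^i = (c_t, p_t^i)$, the right-hand side is a well-defined distribution over $\mathcal{A}_t^i$, with the recursive dependence on $g^*_{1:t-1}$ resolved inductively over time.

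Next I define the belief system $\mu$ as follows. For every $h_t^i$ reachable with positive probability under $g^*$, set $\mu^i_t(h_t^i)(x_t, p_t^{-i}) := \mathbb{P}^{g^*}\{x_t, p_t^{-i} \mid h_t^i\}$, which satisfies Bayes' rule (\ref{eq:consistency1}) by construction. For off-equilibrium $h_t^i$, I first prescribe the marginal of $(x_t, s_t^{-i})$ under $\mu^i_t(h_t^i)$ to be the one obtained from $\pi_t$ by conditioning on $s_t^i$, and then extend to a distribution on $p_t^{-i}$ via any fixed conditional $\mathbb{P}\{p_t^{-i} \mid s_t^{-i}, c_t\}$ compatible with some open-loop profile; such a conditional is well-defined by Assumption \ref{assump:positiveprob}. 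Part (ii) of Definition \ref{def:consistency} is trivially satisfied under Assumptions \ref{assump:positivetrans} and \ref{assump:positiveprob} (cf.\ Remark \ref{remark:off}).

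Finally I verify equation (\ref{eq:lemma1}). Marginalizing the constructed $\mu^i_t(h_t^i)$ over $x_t$ and over all realizations of $p_t^{-i}$ that induce a given $s_t^{-i}$ yields $\mathbb{P}^{g^*}\{s_t^{-i} \mid h_t^i\}$ on the equilibrium path. By part (iv) of Definition \ref{def:sufficient} (the sufficiency condition (\ref{eq:sufficientinfo})), this equals $\mathbb{P}^{g^*}\{s_t^{-i} \mid s_t^i, c_t\}$. The consistency of $(\sigma^*, \gamma)$ via (\ref{eq:CIBconsistency-on}) ensures that $\pi_t(x_t, s_t) = \mathbb{P}^{\sigma^*}\{x_t, s_t \mid c_t\}$, so this conditional coincides with $\mathbb{P}\{s_t^{-i} \mid \pi_t, s_t^i\}$. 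Off-equilibrium the same identity holds by direct construction of $\mu^i_t$.

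The main obstacle is ensuring that the off-equilibrium $\mu^i_t$ simultaneously satisfies the Bayes-updating requirement (\ref{eq:consistency1}) across a transition from a possibly off-equilibrium $h_{t-1}^i$ to an $h_t^i$ that is reachable under $g^*$ given $h_{t-1}^i$, and the SIB-level identity (\ref{eq:lemma1}). Both requirements are reconciled inductively: once the compatibility $\mu^i_{t-1} \leftrightarrow \pi_{t-1}$ is in place, the SIB-consistency (\ref{eq:CIBconsistency-on}) together with part (iv) of Definition \ref{def:sufficient} propagate it forward to $t$, so that the construction remains self-consistent across time.
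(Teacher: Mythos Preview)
Your proposal is correct and follows essentially the same route as the paper: define $g^*$ by (\ref{eq:CIBstrategy}), construct $\mu$ via Bayes' rule on-path and via the $\pi_t$-conditioned marginal off-path, then verify (\ref{eq:lemma1}) using condition (\ref{eq:sufficientinfo}) of Definition~\ref{def:sufficient} together with the SIB consistency (\ref{eq:CIBconsistency-on}). The only cosmetic difference is the off-equilibrium extension from $s_t^{-i}$ to $p_t^{-i}$: the paper writes down a specific uniform formula, whereas you allow any conditional compatible with an open-loop profile; both choices work since only the $s_t^{-i}$-marginal matters for the statement, and your explicit flagging of the inductive propagation across an off-equilibrium $h_{t-1}^i$ is in fact more careful than the paper's own treatment.
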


Lemma \ref{lemma:equivalence} shows that the set of consistent SIB assessment $(\sigma^*\hspace*{-2pt},\hspace*{-1pt}\gamma)$ is equivalent to a subset of consistent assessments $(g^*\hspace*{-2pt},\hspace*{-1pt}\mu)$. 
That is, using the SIB belief system $\gamma$ and SIB strategy profile $\sigma^*$, agents can form a consistent assessment about the evolution of the game. Moreover, condition (\ref{eq:lemma1}) implies that the SIB belief $\Pi_t$ along with agent $i$'s sufficient information $S_t^i$ capture all the information in $H_t^i$ that is relevant to agent $i$'s belief about $S_t^{-i}$. 

\vspace*{-5pt}
\subsection{Sufficient Information based PBE}\label{subsec:CIBPBE}

Using the result of Lemma \ref{lemma:equivalence}, we can define a class of PBE, called Sufficient Information based PBE (SIB-PBE), as a set of equilibria for dynamic games with asymmetric information that can be expressed as SIB assessments.

\begin{definition}
	A SIB assessment $(\sigma^*\hspace*{-2pt},\hspace*{-1pt}\gamma)$ is called a SIB-PBE if $\gamma$ is consistent with $\sigma^*$ (Definition \ref{def:consistencycommon}), and the equivalent consistent assessment $(g^*,\mu)$, given by Lemma \ref{lemma:equivalence}, is a PBE.   
\end{definition} 

In the sequel, we also call a consistent pair $(\sigma^*\hspace*{-2pt},\hspace*{-1pt}\psi)$ of a SIB strategy prediction profile $\sigma^*$ and SIB belief update rule $\psi$ a SIB-PBE if $(\sigma^*\hspace*{-2pt},\hspace*{-1pt}\gamma^{\psi})$ is a SIB-PBE.

Throughout Section \ref{sec:CIB}, we \red{assumed} that agents play according to the strategy predictions $g^*$ (or SIB strategy predictions $\sigma^*$). However, an agent's, say agent $i\hspace*{-2pt}\in\hspace*{-2pt}\mathcal{N}$'s, actual strategy $g^i$ is his private information and could be different from $g^*$ if such a deviation is profitable for him. The proposed class of SIB assessments imposes two restrictions on agents' strategies and beliefs compared to the general class of assessment presented in Section \ref{sec:PBE}. First, it requires that each agent $i$, $i\hspace*{-2pt}\in\hspace*{-2pt}\mathcal{N}$, must play a SIB strategy $\sigma^{*i}$ instead of a general behavioral strategy $g^{*i}$. Second, it requires that each agent $i$, $i\hspace*{-2pt}\in\hspace*{-2pt}\mathcal{N}$, must form a belief about the status of the game using only the SIB belief $\Pi_t$ along with his sufficient private information $S_t^i$ (instead of a general belief $\mu_t^i$). A strategic agent $i\hspace*{-2pt}\in\hspace*{-2pt}\mathcal{N}$ does not restrict his choice of strategy to SIB strategies, and may deviate from $\sigma^{*i}$  to a non-SIB  strategy $g^i$ if it is profitable to him. Moreover, a strategic agent $i$ does not limit himself to form belief about the current status of the game only based on $\Pi_t$ and $S_t^i$, and may instead use a general belief $\mu^i$ if it enables him to improve his expected utility. In the next section, we address these strategic concerns, and show that no agent $i\hspace*{-2pt}\in\hspace*{-2pt}\mathcal{N}$ wants to deviate from $(\Pi,\hspace*{-1pt}\sigma^*\hspace*{-1pt})$ and play a non-SIB strategy  $g^i$ when all other agents are playing according to SIB assessment $(\Pi,\hspace*{-1pt}\sigma^*\hspace*{-1pt})$.  This result allows us to focus on the class of SIB assessments, and develop a methodology to sequentially decompose the dynamic game over time.

		\section{Main Results}\label{sec:CIB-PBE}

In this section, we show that the class of SIB assessments is rich enough to capture the agents' strategic interactions.
We first show that when agents $-i$ play according to a SIB assessment $(\sigma^*\hspace*{-2pt},\hspace*{-1pt}\gamma)$, agent $i$, $i\hspace*{-2pt}\in\hspace*{-2pt}\mathcal{N}$, cannot mislead these agents by playing a strategy $g^i$ different from $\sigma^{*i}$, thus, creating dual beliefs, one belief that is based on the SIB assessment $(\sigma^*\hspace*{-2pt},\hspace*{-1pt}\gamma)$ the functional form of which is known to all agents, and another belief that is based on his private strategy $g^i$ that is only known to him (Theorem \ref{thm:beliefindependence}). Then, we show that given that agents $-i$ play SIB strategy  $\sigma^{*-i}$, agent $i$ has a response that is a SIB strategy (Theorem \ref{thm:closeness}). 

	The result of Theorems \ref{thm:beliefindependence} (resp. \ref{thm:closeness}) for agent $i\in\mathcal{N}$ assumes that all other agents $-i$ are playing according to strategy prediction $g^{*-i}$ (resp. $\sigma^{*-1}$). The same results hold for every continuation game that starts at any time $t\in\mathcal{T}$ along an off-equilibrium path; they can be proved by relabeling  time $t$ as time $1$, and using the SIB belief $\pi_t=\gamma_t(c_t)$ and the corresponding belief $\mu_t^i(h_t^i)$, defined by Lemma \ref{lemma:equivalence}, as the initial common belief for the continuation game.

Using the results of Theorems \ref{thm:beliefindependence} and \ref{thm:closeness}, we present a methodology to determine the set of SIB-PBEs of stochastic dynamic games with asymmetric information (Theorem 3). The proposed methodology leads to a sequential decomposition of stochastic dynamic games with asymmetric information. This decomposition gives rise to a dynamic program that can be utilized to compute SIB-PBEs via backward induction. We proceed by stating the following results from the companion paper \cite[Theorem 1]{team}. 


\begin{theorem}[Policy-independence belief property {\cite{team}}]\label{thm:beliefindependence} \hspace*{10pt}
	
	(i) Consider a general strategy prediction profile $g^*$.
	If agents $-i$ play according to strategy predictions $g^{*-i}$, then for every strategy $g^i$ that agent $i$ plays,
	\begin{align}
	\mathbb{P}^{g^*\hspace*{-1pt},g^{-i}}\left\{x_{t},p_t^{-i}\Big|h_t^i\right\}=\mathbb{P}^{g^{*-i}}\left\{x_{t},p_t^{-i}\Big|h_t^i\right\}.\label{eq:beliefindependence1}
	\end{align}
	
	(ii) Consider a SIB strategy prediction profile $\sigma^*$ along with the associated consistent update rule $\psi$.
	If agents $-i$ play according to SIB strategy predictions $\sigma^{*-i}$, then for every general strategy ${g}^i$ that agent $i$ plays,
	\begin{align}
	\hspace*{-6pt}\mathbb{P}^{\sigma^{*-i}\hspace*{-2pt},g^i}_{\psi}\hspace*{-3pt}\left\{\hspace*{-1pt}x_{t},p_t^{-i}\Big|h_t^i\hspace*{-1pt}\right\}\hspace*{-2pt}=\hspace*{-2pt}\mathbb{P}^{\sigma^{*-i}}_{\psi}\hspace*{-3pt}\left\{\hspace*{-1pt}x_{t},p_t^{-i}\Big|h_t^i\hspace*{-1pt}\right\}\hspace*{-2pt}.\label{eq:beliefindependence2}
	\end{align}
\end{theorem}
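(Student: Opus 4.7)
The plan is to prove both parts by a direct Bayesian calculation that exhibits, in the joint distribution of all relevant random variables, an explicit factor depending on agent $i$'s strategy which then cancels out in the conditional probability. The key observation is that agent $i$'s past action realizations $a_{1:t-1}^i$ are part of his private information $p_t^i$ (by perfect recall), and his past information sets $h_{1:t-1}^i$ are themselves functions of $h_t^i$. Therefore when we condition on $h_t^i$, the probabilities $g_\tau^i(a_\tau^i|h_\tau^i)$ that agent $i$'s strategy contributes to the joint law become pre-determined numbers that do not depend on $x_t$ or $p_t^{-i}$ and can be pulled outside the sum.

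For part (i), I would start by writing the joint probability $\mathbb{P}^{g^*,g^{-i}}\{x_t,p_t^{-i},c_t,p_t^i\}$ as a sum over primitive random-variable realizations $(x_1, w_{1:t-1}^x, w_{1:t}^{1:N}, w_{1:t}^c)$ consistent with the given history, weighted by $\prod_{\tau=1}^{t-1}\prod_{j\in\mathcal{N}} g_\tau^j(a_\tau^j|h_\tau^j)$. Since $a_\tau^i$ and $h_\tau^i$ are determined by $h_t^i=(c_t,p_t^i)$, the product $\alpha(h_t^i):=\prod_{\tau=1}^{t-1}g_\tau^i(a_\tau^i|h_\tau^i)$ factors out. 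Thus
\begin{align*}
\mathbb{P}^{g^*,g^{-i}}\{x_t,p_t^{-i},c_t,p_t^i\}=\alpha(h_t^i)\cdot Q^{g^{*-i}}(x_t,p_t^{-i},c_t,p_t^i),
\end{align*}
where $Q^{g^{*-i}}$ depends only on $g^{*-i}$, the primitive distributions, and the system/observation kernels. Summing over $(x_t,p_t^{-i})$ gives $\mathbb{P}^{g^*,g^{-i}}\{h_t^i\}=\alpha(h_t^i)\sum_{x_t,p_t^{-i}}Q^{g^{*-i}}$, so dividing yields $\mathbb{P}^{g^*,g^{-i}}\{x_t,p_t^{-i}|h_t^i\}=Q^{g^{*-i}}/\sum Q^{g^{*-i}}$, which is independent of $g^i$. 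Under Assumptions \ref{assump:positivetrans}--\ref{assump:positiveprob}, the denominator is strictly positive for every $h_t^i$, so the ratio is well-defined on every information set (see Remark \ref{remark:off}).

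For part (ii), I would reduce to part (i) by invoking the equivalence between a SIB strategy profile $\sigma^*$ and its associated behavioral strategy profile $g^{(\sigma^*,\gamma^{\psi})}$, cf.\ equation (\ref{eq:CIBstrategy}). Since $\Pi_t=\psi_t(\Pi_{t-1},Z_t)$ is a deterministic function of the common information $c_t$, and $s_t^j=\zeta_t^j(p_t^j,c_t;\sigma^*_{1:t-1})$ is a deterministic function of $(p_t^j,c_t)$, each $\sigma^{*j}_t(\pi_t,s_t^j)$ defines a behavioral strategy on $h_t^j=(c_t,p_t^j)$. Replacing $g^{*-i}$ in part (i) by this induced behavioral profile gives the policy independence of $\mathbb{P}^{\sigma^{*-i},g^i}_{\psi}\{x_t,p_t^{-i}|h_t^i\}$ from $g^i$.

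The main obstacle is bookkeeping: writing the joint law over primitive random variables cleanly, showing that the only contribution of $g^i$ is the factor $\alpha(h_t^i)$, and verifying that the same factorization works for part (ii) once SIB strategies are translated into behavioral strategies and the consistent update rule $\psi$ is used to define $\Pi_t$ off-equilibrium. No deeper issue arises off-equilibrium because Assumptions \ref{assump:positivetrans} and \ref{assump:positiveprob} guarantee positive probability for every realization of private observations and states given any open-loop action sequence, so Bayes' rule applies uniformly and the off-path clauses of Definitions \ref{def:consistency} and \ref{def:consistencycommon} collapse, per Remarks \ref{remark:off} and \ref{remark:CIBoff}.
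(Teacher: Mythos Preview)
Your argument is correct and is the standard factorization/cancellation proof of belief policy-independence under perfect recall. Note, however, that this paper does not actually prove Theorem~\ref{thm:beliefindependence}: the statement is imported from the companion paper \cite{team} (see the citation in the theorem header and the sentence ``We proceed by stating the following results from the companion paper \cite[Theorem~1]{team}''), and no proof appears in the appendix here. So there is no in-paper argument to compare your proof against.

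Substantively, your proof matches the classical route: expand the joint law of $(x_t,p_t^{-i},h_t^i)$ over primitive noise and initial state, observe that the only $g^i$-dependent factor is $\alpha(h_t^i)=\prod_{\tau<t}g_\tau^i(a_\tau^i\mid h_\tau^i)$, which is determined by $h_t^i$ alone (perfect recall), and cancel it from numerator and denominator of Bayes' rule. Your reduction of part~(ii) to part~(i) via the identification $g^{*j}_t(h_t^j)=\sigma^{*j}_t(\gamma_t^\psi(c_t),\zeta_t^j(p_t^j,c_t))$ is exactly how the paper relates SIB strategies to behavioral strategies (eq.~(\ref{eq:CIBstrategy})), and your appeal to Assumptions~\ref{assump:positivetrans}--\ref{assump:positiveprob} and Remarks~\ref{remark:off}--\ref{remark:CIBoff} to avoid zero-probability denominators is the right device in this paper's setting. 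One small point worth making explicit: the cancellation requires $\alpha(h_t^i)>0$, i.e., that $h_t^i$ be reachable under $g^i$; for unreachable $h_t^i$ the conditional is not defined by Bayes' rule, but since the right-hand side $\mathbb{P}^{g^{*-i}}\{x_t,p_t^{-i}\mid h_t^i\}$ is defined independently of $g^i$, one simply takes it as the value of the belief there, consistent with the paper's convention that beliefs are specified at all information sets.
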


Part (i) of Theorem \ref{thm:beliefindependence} states 
that under perfect recall agent $i$'s belief is independent of his actual strategy $g^i$. Therefore, agent $i$ cannot mislead agents $-i$ by deviating from the SIB strategy prediction $g^{*i}$ to a behavioral strategy $g^i$ so as to create dual beliefs (described above) that he can use to his advantage. Part (ii) of Theorem \ref{thm:beliefindependence} concerns situations where all agents except agent $i$ play SIB strategies $\sigma^{*-i}$. Given SIB update rule $\psi$ that is consistent with $\sigma^*$,  it states that agent $i$'s beliefs independent of his actual strategy $g^i$. We note when agents $-i$ play SIB strategies  the consistency condition for SIB update rule $\psi$ depends on strategy prediction $\sigma^{*i}$ they have about agent $i$.

Using the result of Theorem \ref{thm:beliefindependence}, we show that agent $i\hspace*{-2pt}\in\hspace*{-2pt}\mathcal{N}$ does not gain by playing a non-SIB strategy $\tilde{g}^i$ when all other agents $-i$ are playing SIB strategies $\sigma^{*-i}$.

\begin{theorem}[Closedness of SIB strategies] Consider a consistent SIB assessment $(\sigma^*\hspace*{-2pt},\hspace*{-1pt}\gamma^\psi)$ where $\psi$ is a SIB update rule consistent with $\sigma^*$. If every agent $j\hspace*{-2pt}\in\hspace*{-2pt}\mathcal{N}$, $j\hspace*{-2pt}\neq\hspace*{-2pt} i$ plays the SIB strategy $\sigma^{*j}$, then, there exists a SIB strategy $\sigma^i$ for agent $i$ that is a best response to $\sigma^{*-i}$.
\label{thm:closeness}
\end{theorem}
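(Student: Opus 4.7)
The plan is to reduce agent $i$'s best-response problem, given that all other agents play the SIB strategies $\sigma^{*-i}$ with consistent update rule $\psi$, to a finite-horizon fully observed Markov decision process (MDP) whose state is $(\Pi_t, S_t^i)$ and whose control is $A_t^i$. Once this reduction is in place, the existence of a deterministic optimal policy of the form $\sigma^i_t(\pi_t, s_t^i)$ follows from standard finite-horizon dynamic programming arguments (Assumption \ref{assump:finite} ensures finite action sets and a finite horizon).

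The key steps in order: (1) Show that agent $i$'s belief about the ``hidden state'' $(X_t, S_t^{-i})$ given his full history $H_t^i$ depends on $H_t^i$ only through $(\Pi_t, S_t^i)$. This follows from Theorem \ref{thm:beliefindependence}(ii), which ensures that the belief $\mathbb{P}^{\sigma^{*-i},g^i}_\psi\{x_t,p_t^{-i}|h_t^i\}$ is independent of $g^i$, combined with Lemma \ref{lemma:equivalence}, which via equation (\ref{eq:lemma1}) expresses the belief about $S_t^{-i}$ as a function of $(\Pi_t, S_t^i)$ alone. (2) Show that the expected instantaneous utility $\mathbb{E}^{\sigma^{*-i}}\{u_t^i(X_t,A_t)|h_t^i,a_t^i\}$ depends on $h_t^i$ only through $(\pi_t, s_t^i)$: the conditional expectation over $(X_t, A_t^{-i})$ given $(h_t^i, a_t^i)$ marginalizes over $(X_t, S_t^{-i})$ using the belief from Step 1, and the action $A_t^{-i}$ is generated by $\sigma^{*-i}_t(\pi_t, S_t^{-i})$; invoking property (iii) of Definition \ref{def:sufficient} (payoff relevance) then yields the desired reduction.

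(3) Establish that $(\Pi_t, S_t^i)$ is a controlled Markov chain under any behavioral strategy $g^i$ agent $i$ may use. Specifically, I would show that the conditional distribution of $(\Pi_{t+1}, S_{t+1}^i)$ given $(H_t^i, A_t^i)$ factors through $(\Pi_t, S_t^i, A_t^i)$. The $\Pi_{t+1}$ component is governed by (\ref{eq:CIBupdaterule}), which requires only $(\Pi_t, Z_{t+1})$, while $S_{t+1}^i$ is given by the recursion (\ref{eq:sufficientupdate}) as a function of $(S_t^i, Y_{t+1}^i, Z_{t+1}, A_t^i)$. Thus it suffices to show that the conditional joint distribution of $(Z_{t+1}, Y_{t+1}^i)$ given $(H_t^i, A_t^i)$ depends only on $(\Pi_t, S_t^i, A_t^i)$. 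This follows by first marginalizing over $(X_t, S_t^{-i}, A_t^{-i})$ using the belief from Step 1 and the SIB strategies $\sigma^{*-i}$ (which act through $(\Pi_t, S_t^{-i})$), and then invoking the joint-dynamics property (ii) of Definition \ref{def:sufficient}, which states that $(S_{t+1}, Z_{t+1})$ given $(P_t, C_t, A_t)$ has the same distribution as given $(S_t, C_t, A_t)$; an analogous argument applies to $Y_{t+1}^i$ through the observation equation (\ref{eq:systemdynamic2}).

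(4) With the MDP structure in hand, the optimality equations \vspace*{-2pt}
\begin{align*}
V_t^i(\pi_t, s_t^i) = \max_{a_t^i \in \mathcal{A}_t^i}\left\{ \bar{u}_t^i(\pi_t, s_t^i, a_t^i) + \mathbb{E}\left[V_{t+1}^i(\Pi_{t+1}, S_{t+1}^i)\,\big|\,\pi_t, s_t^i, a_t^i\right]\right\},
\end{align*}
with terminal condition $V_{T+1}^i \equiv 0$, admit a measurable maximizer $\sigma^{i}_t(\pi_t, s_t^i)$ at every $t$ by finiteness of $\mathcal{A}_t^i$, and this defines the desired SIB best response. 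The main obstacle will be Step 3: because agent $i$ is allowed to use arbitrary behavioral strategies that depend on all of $H_t^i$, care is needed to show that the transition kernel of $(\Pi_t, S_t^i)$ genuinely factors through the coarser statistic, which relies simultaneously on the policy-independence result, on sufficiency property (iv) of Definition \ref{def:sufficient} for beliefs about $S_t^{-i}$, and on the consistency of $\psi$ with $\sigma^{*}$ so that $\Pi_{t+1}$ is indeed recoverable from $(\Pi_t, Z_{t+1})$.
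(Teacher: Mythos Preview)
Your proposal is correct and takes essentially the same approach as the paper: both establish that $(\Pi_t, S_t^i)$ is an information state for agent $i$'s best-response problem against $\sigma^{*-i}$ (your Steps~1--3 correspond to the paper's conditions (1)--(3) for an information state, invoking Theorem~\ref{thm:beliefindependence}, Lemma~\ref{lemma:equivalence}, and the sufficiency properties of Definition~\ref{def:sufficient}), and then conclude via dynamic programming. The only presentational difference is that the paper first casts agent $i$'s problem as a POMDP with enlarged state $\tilde{X}_t=\{X_t,\Pi_t,S_t,\Pi_{t-1},S_{t-1}\}$ before extracting $(\Pi_t,S_t^i)$ as the information state, and it packages your Step~3 transition argument into a companion-paper lemma (Lemma~\ref{lemma:lem}) rather than re-deriving it from property~(ii) of Definition~\ref{def:sufficient}.
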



The results of Theorems \ref{thm:beliefindependence} and \ref{thm:closeness} address the two restrictions (discussed above) imposed in SIB assessments on the agents' beliefs and strategies, respectively. 

Based on these results, we restrict attention to SIB assessments, and provide a sequential decomposition of dynamic games with asymmetric information. A SIB-PBE is SIB assessment that is a fixed point under the best response map for all agents. Below, we formulate a dynamic program that enables us to compute SIB-PBEs of dynamic games with asymmetric information.

Consider a dynamic program over time horizon $\mathcal{T}\hspace*{-1pt}\cup\hspace*{-1pt}\{\hspace*{-1pt}T\hspace*{-1pt}+\hspace*{-1pt}1\hspace*{-1pt}\}$ with information state $\{\hspace*{-1pt}\Pi_t,\hspace*{-1pt}S_t\hspace*{-1pt}\}$, $t\hspace*{-2pt}\in\hspace*{-2pt}\mathcal{T}$. Let $V_t\hspace*{-2pt}:=\hspace*{-2pt}\{\hspace*{-1pt}V_t^i\hspace*{-1pt}:\hspace*{-1pt}\Delta(\hspace*{-1pt}\mathcal{X}_t\hspace*{-1pt}\times\hspace*{-1pt}\mathcal{S}_t\hspace*{-1pt})\hspace*{-1pt}\times\hspace*{-1pt}\mathcal{S}_t\hspace*{-1pt}\rightarrow\hspace*{-1pt}\mathbb{R},i\hspace*{-2pt}\in\hspace*{-2pt}\mathcal{N}\}$ denote the value function that captures the continuation payoffs for all agents, for all realizations of the SIB belief $\Pi_t$ and the agents' private sufficient information $S_t$, $t\in \mathcal{T}$. Set $V_{T+1}^i=0$ for all $i\in\mathcal{N}$. For each stage $t\in\mathcal{T}$ of the dynamic program consider the following static game.

\textbf{Stage game $\mathbf{G}_t(\hspace*{-1pt}\pi_t\hspace*{-1pt},\hspace*{-2pt}V_{t+1}\hspace*{-1pt},\hspace*{-1pt}\psi_{t+1}\hspace*{-1pt})$:} Given the value function $V_{t+1}$ and SIB update rule $\psi_{t+1}$, we define the stage game $\mathbf{G}_t(\hspace*{-1pt}\pi_t\hspace*{-1pt},\hspace*{-2pt}V_{t+1}\hspace*{-1pt},\hspace*{-1pt}\psi_t\hspace*{-1pt})$ as a static game of asymmetric information among agents for every realization $\pi_t$. Each agent $i\hspace*{-2pt}\in\hspace*{-2pt}\mathcal{N}$ has private information $S_t^i$ that is distributed according to $\pi_t$, which is common knowledge among the agents. Given a realization $a_t$ of the agents' collective action profile and a realization $s_t$ of the agents' sufficient private information, agent $i$'s utility is given by
\begin{align}
&\hspace*{-8pt}\bar{U}_t^i\hspace*{-1pt}(\hspace*{-1pt}a_t\hspace*{-1pt},\hspace*{-2pt}s_t\hspace*{-1pt},\hspace*{-2pt}\pi_t\hspace*{-1pt},\hspace*{-2pt}V_{t+1}\hspace*{-1pt},\hspace*{-2pt}\psi_{t+1}\hspace*{-1pt})\hspace*{-2pt}:=\mathbb{E}_{\pi_t}\hspace*{-2pt}\left\{\hspace*{-1pt}u_t^i\hspace*{-1pt}(\hspace*{-1pt}X_t\hspace*{-1pt},\hspace*{-2pt}a_t\hspace*{-1pt})\hspace*{-2pt}+\hspace*{-2pt}V_{t+1}^i\hspace*{-1pt}\left(\hspace*{-1pt}\psi_{t+1}\hspace*{-1pt}(\hspace*{-1pt}\pi_t\hspace*{-1pt},\hspace*{-2pt}Z_{t+1}\hspace*{-1pt})\hspace*{-1pt},\hspace*{-1pt}S_{t+1}\hspace*{-1pt}\right)\hspace*{-1pt}\Big|\pi_t\hspace*{-1pt},\hspace*{-2pt}s_t\hspace*{-1pt},\hspace*{-2pt}a_t\hspace*{-1pt}\right\}\hspace*{-1pt}.\hspace*{-5pt}\label{eq:stageutility}
\end{align}  

\textbf{BNE correspondence:} We define the correspondence $\text{\textbf{BNE}}_t\hspace*{-1pt}\left(\hspace*{-1pt}V_{t+1}\hspace*{-1pt},\hspace*{-2pt}\psi_{t+1}\hspace*{-1pt}\right)$, $t\hspace*{-2pt}\in\hspace*{-2pt}\mathcal{T}$, as the correspondence mapping that characterizes the set of BNEs of the stage game $\mathbf{G}_t\hspace*{-1pt}(\hspace*{-1pt}\pi_t\hspace*{-1pt},\hspace*{-2pt}V_{t+1}\hspace*{-1pt},\hspace*{-2pt}\psi_{t+1}\hspace*{-1pt})$ for every realization of $\pi_t$; this correspondence is given by
\begin{align}
&\text{\textbf{BNE}}_t\hspace*{-1pt}\left(\hspace*{-1pt}V_{t+1}\hspace*{-1pt},\hspace*{-2pt}\psi_{t+1}\hspace*{-1pt}\right)\hspace*{-2pt}:=\hspace*{-2pt}\big\{\hspace*{-1pt}\sigma_t^*\hspace*{-3pt}:\hspace*{-2pt}\forall \pi_t\hspace*{-2pt}\in\hspace*{-2pt}\Delta(\hspace*{-1pt}\mathcal{X}_t\hspace*{-2pt}\times\hspace*{-2pt}\mathcal{S}_t\hspace*{-1pt}), \hspace*{3pt}\sigma(\hspace*{-1pt}\pi_t,\hspace*{-1pt}\cdot\hspace*{-1pt})\hspace*{-1pt} \text{ is a BNE of } \mathbf{G}_t\hspace*{-1pt}(\hspace*{-1pt}\pi_t\hspace*{-1pt},\hspace*{-2pt}V_{t+1}\hspace*{-1pt},\hspace*{-2pt}\psi_{t+1}\hspace*{-1pt})\hspace*{-1pt}\big\}\hspace*{-1pt}.\hspace*{-5pt}\label{eq:BNEcorrespondence}
\end{align}
We say $\sigma^*\hspace*{-1pt}(\hspace*{-1pt}\pi_t,\hspace*{-1pt}\cdot\hspace*{-1pt})$ is a BNE of the stage game $\mathbf{G}_t\hspace*{-1pt}(\hspace*{-1pt}\pi_t\hspace*{-1pt},\hspace*{-2pt}V_{t+1}\hspace*{-1pt},\hspace*{-2pt}\psi_t\hspace*{-1pt})$ if for all agents $i\hspace*{-2pt}\in\hspace*{-2pt}\mathcal{N}$, and for all $s_t^i\hspace*{-2pt}\in\hspace*{-2pt}\mathcal{S}_t^i$,
\begin{align}
&\hspace*{-4pt}\sigma^{\hspace*{-1pt}*i}\hspace*{-1pt}(\hspace*{-1pt}\pi_t\hspace*{-1pt},\hspace*{-2pt}s_t^i\hspace*{-1pt})\hspace*{-2pt}\in \argmax_{\alpha\in \Delta(\mathcal{A}_t^i)}\mathbb{E}_{\pi_t}\hspace*{-3pt}\left\{\hspace*{-2pt}\bar{U}_t^i\hspace*{-1pt}(\hspace*{-1pt}(\hspace*{-1pt}\alpha,\hspace*{-2pt}\sigma^{\hspace*{-1pt}*-i}\hspace*{-1pt}(\hspace*{-1pt}\pi_t\hspace*{-1pt},\hspace*{-2pt}S_t^{-i}\hspace*{-1pt})\hspace*{-1pt})\hspace*{-1pt},\hspace*{-2pt}S_t\hspace*{-1pt},\hspace*{-2pt}\pi_t\hspace*{-1pt},\hspace*{-2pt}V_{t+1}\hspace*{-1pt},\hspace*{-2pt}\psi_{t+1}\hspace*{-1pt})\hspace*{-1pt}\Big|\pi_t\hspace*{-1pt},\hspace*{-2pt}s_t^i\hspace*{-1pt}\right\}\hspace*{-2pt}.\hspace*{-5pt}
\end{align}

Below, we provide a sequential decomposition of dynamic games with asymmetric information using the stage game and the BNE correspondence defined above. 

\begin{theorem}[Sequential decomposition] \label{thm:decomposition} A pair $(\sigma^*\hspace*{-2pt},\hspace*{-1pt}\psi)$ of a SIB strategy profile $\sigma^*$ and a SIB update rule $\psi$ (equivalently, a SIB assessment $(\sigma^*\hspace*{-2pt},\hspace*{-1pt}\gamma^\psi)$) is a SIB-PBE if  $(\sigma^*,\psi)$ solves the following dynamic program:
		
	\indent for $t\hspace*{-2pt}=\hspace*{-2pt}T\hspace*{-2pt}+\hspace*{-2pt}1$:\vspace*{-2pt}
	\begin{align}
	&V_{T+1}^i(\cdot):=0		\quad\quad \forall\;i\in\mathcal{N};\label{eq:decomposition-initial}\vspace*{-3pt}
	\end{align}
	\indent for $t\in\mathcal{T}$:\vspace*{-2pt}
	\begin{align}
	&\hspace*{-10pt}\sigma^*_t\in \text{\textbf{BNE}}_t\left(V_{t+1},\psi_{t+1}\right), \label{eq:decomposition-BNE}\\
	&\hspace*{-10pt}\psi_{t+1} \text{ is consistent with } \sigma^*, \label{eq:decomposition-update}\\
	&\hspace*{-10pt}V_t^i\hspace*{-1pt}(\hspace*{-1pt}\pi_t\hspace*{-1pt},\hspace*{-1pt}s_t\hspace*{-1pt})\hspace*{-2pt}:=\hspace*{-2pt}\mathbb{E}_{\psi_{t+1}}^{\sigma^*_t}\hspace*{-3pt}\left\{\hspace*{-2pt}\bar{U}_t^i\hspace*{-1pt}(\hspace*{-1pt}(\hspace*{-1pt}\sigma^{\hspace*{-1pt}*}\hspace*{-1pt}(\hspace*{-1pt}\pi_t\hspace*{-1pt},\hspace*{-2pt}S_t\hspace*{-1pt})\hspace*{-1pt})\hspace*{-1pt},\hspace*{-2pt}S_t\hspace*{-1pt},\hspace*{-2pt}\pi_t\hspace*{-1pt},\hspace*{-2pt}V_{t+1}\hspace*{-1pt},\hspace*{-2pt}\psi_{t+1}\hspace*{-1pt})\hspace*{-1pt}\Big|\pi_t\hspace*{-1pt},\hspace*{-2pt}s_t^i\hspace*{-2pt}\right\}\hspace*{-3pt},\hspace*{10pt}\forall i\hspace*{-2pt}\in\hspace*{-2pt}\mathcal{N}\hspace*{-1pt}.\hspace*{-5pt}\label{eq:decomposition-value}\vspace*{-2pt}
	\end{align}
\end{theorem}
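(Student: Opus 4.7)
The plan is to prove the \emph{if} direction: supposing $(\sigma^*,\psi)$ solves the dynamic program \eqref{eq:decomposition-initial}--\eqref{eq:decomposition-value}, I want to show that the equivalent SIB assessment $(\sigma^*,\gamma^{\psi})$ is a SIB-PBE. Two things must be verified: (a) $\gamma^{\psi}$ is consistent with $\sigma^*$ in the sense of Definition \ref{def:consistencycommon}, and (b) the behavioral assessment $(g^*,\mu)$ constructed in Lemma \ref{lemma:equivalence} satisfies the sequential rationality condition \eqref{SeqR}. Part (a) is immediate from \eqref{eq:decomposition-update} together with the equivalence between consistent update rules and consistent SIB belief systems, so the content of the proof is part (b).

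For (b), my first move is to reduce the class of deviations I have to consider. Fix $i\in\mathcal{N}$, $t\in\mathcal{T}$, and $h_t^i\in\mathcal{H}_t^i$; I need to show that $g^{*i}_{t:T}$ solves \eqref{SeqR} under belief $\mu_t^i(h_t^i)$ when agents $-i$ use $g^{*-i}_{t:T}$. By Theorem \ref{thm:closeness} it suffices to check that no SIB strategy $\sigma^i_{t:T}$ strictly improves upon $\sigma^{*i}_{t:T}$ in the continuation game starting at $t$. Moreover, by Theorem \ref{thm:beliefindependence}(ii), agent $i$'s belief about $(X_t,P_t^{-i})$ does not depend on his own choice of strategy, and by \eqref{eq:lemma1} in Lemma \ref{lemma:equivalence} the conditional distribution of $S_t^{-i}$ given $h_t^i$ depends on $h_t^i$ only through $(\pi_t,s_t^i)$. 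Hence the continuation optimization for agent $i$ collapses to one parametrized by $(\pi_t,s_t^i)$.

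The core step is an induction on $t$ that identifies the dynamic-program value function with a continuation payoff. Concretely, I claim that for each $t\in\mathcal{T}\cup\{T+1\}$,
\begin{align}
V_t^i(\pi_t,s_t)=\mathbb{E}^{\sigma^*_{t:T}}_{\psi}\!\left\{\sum_{\tau=t}^{T}u_\tau^i(X_\tau,A_\tau)\,\Big|\,\pi_t,s_t\right\}. \label{eq:prop-induction}
\end{align}
The base case $t=T+1$ is \eqref{eq:decomposition-initial}. For the inductive step, I substitute the hypothesis at $t+1$ into the definition \eqref{eq:stageutility} of $\bar U_t^i$, apply the tower property conditioning on $(\Pi_{t+1},S_{t+1})=(\psi_{t+1}(\pi_t,Z_{t+1}),S_{t+1})$, and use the sufficiency properties of $S_t$ from Definition \ref{def:sufficient} together with the fact that $\psi_{t+1}$ is consistent with $\sigma^*$ to conclude \eqref{eq:prop-induction} at $t$. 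Once \eqref{eq:prop-induction} is in hand, the BNE requirement \eqref{eq:decomposition-BNE} says exactly that, for every $(\pi_t,s_t^i)$, the action distribution $\sigma^{*i}_t(\pi_t,s_t^i)$ maximizes agent $i$'s one-stage payoff plus expected $V_{t+1}^i$ against $\sigma^{*-i}_t$. A standard one-shot deviation principle, justified in our setting by the finite horizon and by Theorem \ref{thm:closeness}, then lifts this stagewise optimality to optimality of the full continuation strategy $\sigma^{*i}_{t:T}$, which via Lemma \ref{lemma:equivalence} translates back to sequential rationality of $g^{*i}_{t:T}$ under $\mu_t^i(h_t^i)$.

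The main obstacle is handling information sets $h_t^i$ that are reached with zero probability under $\sigma^*$. The BNE correspondence in \eqref{eq:BNEcorrespondence} is defined for \emph{every} $\pi_t\in\Delta(\mathcal{X}_t\times\mathcal{S}_t)$, not only those on the equilibrium path, and $\psi$ is required to be consistent off-path in the sense of parts (ii)--(iii) of Definition \ref{def:consistencycommon}. The delicate point is to check that (i) after any off-path realization, the revised $(\pi_t,s_t^i)$ is still a valid parameter for the stage game, (ii) the induction \eqref{eq:prop-induction} continues to hold at such $\pi_t$ because the update rule $\psi_{t+1}$ is applied recursively from $\pi_t$ onward, and (iii) Theorem \ref{thm:beliefindependence}(ii) still applies relative to the revised belief, so that agent $i$'s deviations cannot distort future beliefs. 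Once these three items are in place, the off-path sequential rationality of $(g^*,\mu)$ follows by the same stagewise argument as on-path, completing the proof that $(\sigma^*,\gamma^{\psi})$ is a SIB-PBE.
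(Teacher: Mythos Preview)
Your proposal is correct and takes essentially the same approach as the paper: consistency via \eqref{eq:decomposition-update}, then sequential rationality via the one-shot deviation principle combined with Lemma~\ref{lemma:equivalence} (reducing the conditioning to $(\pi_t,s_t^i)$) and the BNE condition \eqref{eq:decomposition-BNE}. The paper applies the one-shot deviation principle directly for behavioral deviations (citing \cite{hendon1996one}) and leaves the identity between $V_t^i$ and the continuation payoff implicit, whereas you reduce to SIB deviations via Theorem~\ref{thm:closeness} first and spell that identity out as an explicit backward induction---these are expository rather than substantive differences.
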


\vspace*{-3pt}

		\vspace*{-10pt} 	
\section{Discussion}\label{sec:discussion}
%
\label{subsec:PBE}

 We showed that SIB assessments proposed in this chapter are rich enough to capture a set of PBE. However, we would like to point out that the concept of SIB-PBE does not capture all PBEs of a dynamic game in general. We elaborate on the relation between the  of SIB-PBE and PBE below. We argue that the set of SIB-PBEs are more plausible to arise as the information asymmetry among the agents increases and the underlying system is dynamic.
 
 We presented an approach to compress the agents' private and common information by providing conditions sufficient  to characterize the information that is relevant for decision making purposes. Such information compression means that the agents do not incorporate into their decision making processes their observations that are irrelevant to the continuation game. As we show in \cite{team}, this information compression is without loss of generality for dynamic team problems. However, this is not the case in dynamic games. In general, the set of SIB-PBEs of a dynamic game is a subset of all PBEs of that game. This is because in a dynamic game agents can incorporate their past irrelevant observations into their future decisions so as to create  rewards (resp. punishments) that incentivize the agents to play (resp. not play) specific actions over time. By compressing the agents' private and common information in SIB assessments, we do not capture such punishment/reward schemes that are based on past irrelevant observations. Below, we present an example where there exists a PBE that cannot be captured as a SIB-PBE. 

 Consider a two-agent repeated game with $T\hspace*{-2pt}=\hspace*{-2pt}2$ and a payoff matrix given in Table \ref{table:1}. At each stage, agent $1$ chooses from $\{\hspace*{-1pt}U\hspace*{-1pt},\hspace*{-1pt}D\hspace*{-1pt}\}$, and agent $2$ chooses from $\{\hspace*{-1pt}L\hspace*{-1pt},\hspace*{-1pt}M\hspace*{-1pt},\hspace*{-1pt}R\}$. We assume that agents' actions are observable. Therefore, the agents have no private information, and the sufficient private information and SIB belief are trivial. The stage game has two equilibria in pure strategies given by $(\hspace*{-1pt}D\hspace*{-1pt},\hspace*{-1pt}M)$ and $(U\hspace*{-1pt},\hspace*{-1pt}R)$. Using the results of Theorem \ref{thm:decomposition}, we can characterize four SIB-PBEs of the repeated game that correspond to the different combinations of the two equilibria of the stage game as follows: $(\hspace*{-1pt}D\hspace*{-1pt}D,\hspace*{-1pt}M\hspace*{-1pt}M)$, $(		UU,\hspace*{-1pt}RR)$, $(\hspace*{-1pt}DU,\hspace*{-1pt}M\hspace*{-1pt}R)$, and $(U\hspace*{-1pt}D,\hspace*{-1pt}RM)$. However, there exists an another PBE of the repeated game that cannot be captured as a SIB-PBE. Consider the following equilibrium: \textit{Play $(U\hspace*{-1pt},\hspace*{-1pt}L)$ at $t\hspace*{-2pt}=\hspace*{-2pt}1$. If agent $2$ plays $L$ at $t\hspace*{-2pt}=\hspace*{-2pt}1$ then play $(U\hspace*{-1pt},\hspace*{-1pt}R)$; otherwise, play $(\hspace*{-1pt}D\hspace*{-1pt},\hspace*{-1pt}M)$ at $t\hspace*{-2pt}=\hspace*{-2pt}2$.} Note that agent $1$'s decision at $t\hspace*{-2pt}=\hspace*{-2pt}2$ depends on agent $2$'s action at $t\hspace*{-2pt}=\hspace*{-2pt}1$, which is a payoff-irrelevant information since the two stages of the game are independent. Nevertheless, agent $1$ utilizes agent $2$'s action at $t\hspace*{-2pt}=\hspace*{-2pt}1$, and \textit{punishes} him (agent $2$) by playing $D$ at $t\hspace*{-2pt}=\hspace*{-2pt}2$ if he deviates from playing $L$ at $t\hspace*{-2pt}=\hspace*{-2pt}1$.
 \vspace*{-3pt}

\begin{table}[h!]
\vspace*{-5pt} 	
	\centering
		 \hspace*{30pt}\caption{Payoff matrix}
		 \label{table:1}
		\begin{tabular}{ c|c|c|c| } 
			\multicolumn{1}{c}{}  & \multicolumn{1}{c}{L}  & \multicolumn{1}{c}{M} & \multicolumn{1}{c}{R}  \\ 	 	\cline{2-4}
			U & (8,3) & (0,2) & (2,10) \\ \cline{2-4} 
			D &(0,1) & (1,2) & (0,0) \\ \cline{2-4}
		 \end{tabular} \vspace*{-8pt}
	 \end{table} \vspace*{-8pt} 	
We would like to point out that there are instances of dynamic games with asymmetric information, such as zero-sum dynamic games \cite{sorin2002first}, where the equilibrium payoffs for the agents are unique. In these games it is not possible to incorporate pay-off irrelevant information so as to construct additional equilibria where the agents' payoffs are different from the ones corresponding to  SIB-PBEs; this is clearly the case  for zero-sum games since the agents do not cooperate on creating punishment/reward schemes due to the zero-sum nature of the game. In Section \ref{sec:existence}, we \red{utilize this fact} and  prove the existence of SIB-PBEs in zero-sum games. \red{Furthermore,} we show that SIB-PBEs are equivalent to PBEs, in terms of the agents' equilibrium payoffs, in these games.

While it is true that in general, the set of PBEs of a dynamic game is larger than the set of SIB-PBEs of that game, in the remainder of this section, we provide three reasons on why in a highly dynamic environment with information asymmetry among agents, SIB-PBEs are more plausible to arise as an outcome of a game.

First, we argue that in the face of a highly dynamic environment, an agent with partial observations of the environment should not behave fundamentally different whether he interacts in a strategic or cooperative environment. From the single-agent decision making point of view (\textit{i.e.} control theory), SIB strategies are the natural choice of an agent for decision making purposes (See Thm. 2 in the companion paper \cite{team}).\footnote{The SIB approach proposed in this paper for dynamic games along with the SIB approach to dynamic teams proposed in the companion paper \cite{team} provide a unified approach to the study of agents' behavior in a dynamic environment with information asymmetry among them.}

Second, we argue that in a highly dynamic environment with information asymmetry among the agents, the formation of punishment/reward schemes that utilize the agents' payoff-irrelevant information requires prior complex agreements among the agents; these complex agreements are sensitive to the parameters of the model and are not very plausible to arise in practice when the decision making problem for each agent is itself a complex task.  We note that the set of  PBEs that cannot be captured as SIB-PBEs are the ones that utilize payoff-irrelevant information to create punishment/reward schemes in the continuation game as in the example above. However, such punishment/reward schemes require the agents to form a common agreement among themselves on how to utilize such payoff-irrelevant information and how to implement such punishment/reward schemes. The formation of such a common agreement among the agents is more likely in games where the underlying system is not highly dynamic  (as in repeated games \cite{mailath2006repeated}) and there is no much information asymmetry among agents. 
However, in a highly dynamic environment with information asymmetry among agents the formation of such common agreement becomes less likely for the following reasons. First, in those environments each agent's individual decision making process is described by a complex POMDP; thus, strategic agents are less likely to form a prior common agreement (that depend on the solution of the individual POMDPs) in addition to solving their individual POMDPs. Second, as the information asymmetry increases among agents, punishment/reward schemes that utilize payoff-irrelevant information require a complex agreement among the agents that is sensitive and not robust to changes in the assumptions on the information structure of the game. For instance, consider the example described above, but assume that agents observe imperfectly each others' actions at each stage (Assumption \ref{assump:positiveprob}). Let $1\hspace*{-1pt}-\hspace*{-1pt}\epsilon$, $\epsilon\hspace*{-2pt}\in\hspace*{-2pt}(0,\hspace*{-1pt}1)$, denote the probability that agents observe each others' actions perfectly, and $\epsilon$ denote the probability that their observation is different from the true action of the other agent. Then, the described non-SIB strategy profile above remains as a PBE of the game only if $\epsilon\hspace*{-2pt}\leq\hspace*{-1pt} \frac{1}{5}$. 
The author of \cite{miller2012robust} provides a general result on the robustness of above-mentioned punishment/reward schemes in repeated games; he shows that the set of equilibria that are robust to changes in information structure that affect only payoff-irrelevant signals does not include the set of equilibria that utilize punishment/reward schemes described above. 

Third, the proposed notion of SIB-PBE can be viewed as a generalization of Markov Perfect Equilibrium \cite{maskin2001markov} to dynamic games with asymmetric  information. Therefore, a similar set of rationales that support the notion of MPE also applies to the notion of SIB-PBE as follows. First, the set of SIB assessments describe the simplest form  of strategies capturing the agents' behavior that is consistent with the agents' rationality. Second, the class of SIB assessments captures the notion that ``bygones are bygones'', which also underlies the requirement of subgame perfection in equilibrium concepts for dynamic games. That is, the agents' strategies in two continuation games that only differ in the agents' information about payoff-irrelevant events must be identical. 
Third, the class of SIB assessments embodies the principle that ``minor changes in the past should have minor effects''. This implies that if there exists a small perturbation in the specifications of the game or the agents' past strategies that are irrelevant to the continuation game, the outcome of the continuation game should not change drastically. The two-step example above presents one such situation, where one equilibrium that is not SIB-PBE  disappears suddenly as  $\epsilon\hspace*{-1pt}\rightarrow \hspace*{-1pt}\frac{1}{5}$. 


		\vspace*{-5pt}
		\vspace*{-3pt}
\section{Existence of Equilibria} \label{sec:existence}\vspace*{-2pt}

As we discussed in Section \ref{sec:discussion}, there exist PBEs that cannot be described as SIB-PBEs in general. Therefore, the standard results that guarantee the existence of a PBE for dynamic games with asymmetric information \cite[Proposition. 249.1]{osborne1994course} cannot be used to guarantee the existence of a SIB-PBE in these games. In this Section, we discuss the existence of SIB-PBEs for dynamic games with asymmetric information. 
We provide conditions that are sufficient to guarantee the existence of SIB-PBEs  (Lemmas \ref{lemma:sufficient1} and \ref{lemma:sufficient2}). Using the result of Lemma \ref{lemma:sufficient1}, we prove the existence of SIB-PBEs for zero-sum dynamic games with asymmetric information (Theorem \ref{thm:zerosum}). 
Using the result of Lemma \ref{lemma:sufficient2}, we identify instances of non-zero-sum dynamic games with asymmetric information where we can guarantee the existence of SIB-PBEs.		\vspace*{-2pt}

\begin{lemma}\label{lemma:sufficient1}
	The dynamic program given by (\ref{eq:decomposition-BNE})-(\ref{eq:decomposition-value}) has at least one solution 
	at stage $t$ if the value function $V_{t+1}$ is continuous in $\pi_{t+1}$.		\vspace*{-2pt}
\end{lemma}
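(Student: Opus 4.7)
The plan is to apply a fixed-point argument to the best-response correspondence of the stage game $\mathbf{G}_t(\pi_t,V_{t+1},\psi_{t+1})$, taking care that the continuation value entering (\ref{eq:stageutility}) depends on $\sigma_t^*$ through the update rule $\psi_{t+1}$, which Bayes' rule pins down only on events of positive probability. First, I would note that by Assumption \ref{assump:finite} every agent's mixed-strategy space $\Delta(\mathcal{A}_t^i)^{|\mathcal{S}_t^i|}$ is nonempty, compact, and convex, so one has a finite Bayesian stage game for each fixed $\pi_t$.

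The central obstacle is the potential discontinuity of $\sigma_t^* \mapsto \psi_{t+1}$ at strategies under which some $z_{t+1}$ has zero probability, because at those points Bayes' rule breaks and the continuation term $V_{t+1}^i(\psi_{t+1}(\pi_t,Z_{t+1}),S_{t+1})$ in (\ref{eq:stageutility}) becomes ambiguous. To bypass this, I would use a trembling-hand device: for each $\varepsilon\in(0,1)$ restrict attention to totally mixed strategies in which every action is played with probability at least $\varepsilon$. Under Assumptions \ref{assump:positivetrans} and \ref{assump:positiveprob}, together with total mixing, every $z_{t+1}$ that is reachable by some open-loop sequence has strictly positive probability, so the Bayes-rule update $\psi_{t+1}^\varepsilon(\pi_t,z_{t+1})$ is uniquely defined and depends continuously on $\sigma_t^*$. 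Consequently, by the hypothesis that $V_{t+1}$ is continuous in $\pi_{t+1}$, the composition $V_{t+1}^i(\psi_{t+1}^\varepsilon(\pi_t,\cdot),\cdot)$ is continuous in $\sigma_t^*$, and so is $\bar{U}_t^i$ and each agent's interim expected payoff.

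Given continuity, I would invoke the standard Kakutani fixed-point argument: the best-response correspondence of the perturbed stage game has nonempty convex compact values and is upper hemicontinuous, hence admits a fixed point $\sigma_t^{*,\varepsilon}$, i.e.\ a BNE of the $\varepsilon$-perturbed game. By compactness of the strategy space and of the space of SIB beliefs, I then extract a convergent subsequence with $\sigma_t^{*,\varepsilon_n}\to \sigma_t^*$ and $\psi_{t+1}^{\varepsilon_n}\to \psi_{t+1}$.

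Finally, I would verify that the limit pair solves the dynamic program. Continuity of Bayes' rule at on-path $(\pi_t,z_{t+1})$ shows that $\psi_{t+1}$ satisfies (\ref{eq:CIBconsistency-on}) with respect to $\sigma_t^*$, so (\ref{eq:decomposition-update}) holds; in view of Remark \ref{remark:CIBoff}, parts (ii)-(iii) of Definition \ref{def:consistencycommon} are automatic under our assumptions, so off-path freedom is absorbed harmlessly into the limit. Upper hemicontinuity of the best-response correspondence passes the equilibrium property through the limit, yielding $\sigma_t^*\in \text{\textbf{BNE}}_t(V_{t+1},\psi_{t+1})$, which establishes (\ref{eq:decomposition-BNE}); then (\ref{eq:decomposition-value}) simply defines $V_t^i$. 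The hard step throughout is keeping $\bar{U}_t^i$ jointly continuous in $\sigma_t^*$; once the trembling device restores Bayesian continuity, the remainder is a routine Kakutani-plus-limits argument.
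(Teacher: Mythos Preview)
Your proposal is correct and follows essentially the same route as the paper's proof: an $\varepsilon$-trembling perturbation to make Bayes' rule well-defined and continuous, continuity of $\bar U_t^i$ via the hypothesis on $V_{t+1}$, Kakutani to obtain a fixed point $\sigma_t^{*,\varepsilon}$, and a compactness/upper-hemicontinuity limit as $\varepsilon\to 0$ to recover a BNE together with a consistent $\psi_{t+1}$. The paper's argument differs only in presentation details (e.g., it explicitly writes the best-response problem as a linear program to get convexity and closedness), not in substance.
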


We note that the condition of Lemma \ref{lemma:sufficient1} is always satisfied for $t\hspace{-2pt}=\hspace{-2pt}T$ by definition of $V_{T+1}$; see (\ref{eq:stageutility}) and (\ref{eq:decomposition-initial}). However, for $t\hspace{-2pt}<\hspace{-2pt}T$, it is not straightforward, in general, to prove the continuity of the value function $V_t$ in $\pi_t$.  Given $V_{t+1}$ is continuous in $\pi_{t+1}$, the result of \cite[Theorem 2]{milgrom1985distributional} implies that the set of equilibrium payoffs for the state game at $t$ is \textit{upper-hemicontinuous} in $\pi_t$. Therefore, if the stage game ${\mathbf{G}_t\hspace*{-1pt}(\hspace*{-1pt}\pi_t\hspace{-1pt},\hspace{-2pt}V_{t+1}\hspace{-1pt},\hspace{-2pt}\psi_{t+1}\hspace*{-1pt})}$ has a unique equilibrium payoff for every $\pi_t$, we can show that $V_t$ is continuous in $\pi_t$ for $t\hspace*{-2pt}<\hspace*{-2pt}T$. Using this approach, we prove the existence of SIB-PBEs for zero-sum games below.  
		\vspace*{-2pt}

\begin{theorem}\label{thm:zerosum}
	For every dynamic zero-sum game with asymmetric information there exists a SIB-PBE that is a solution to the dynamic program given by (\ref{eq:decomposition-BNE})-(\ref{eq:decomposition-value}). 		\vspace*{-2pt}
\end{theorem}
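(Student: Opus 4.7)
The plan is to establish existence by backward induction on $t$, using Lemma \ref{lemma:sufficient1} at each step. Starting from $V_{T+1}^i(\cdot)\equiv 0$, which is trivially continuous in $\pi_{T+1}$, I would inductively show that if $V_{t+1}$ is continuous in $\pi_{t+1}$, then (a) the dynamic program admits a solution $\sigma^*_t$ at stage $t$, and (b) the resulting value function $V_t$ inherits continuity in $\pi_t$. Iterating this from $t=T$ down to $t=1$ produces a SIB strategy profile $\sigma^*$ and consistent update rule $\psi$ solving (\ref{eq:decomposition-BNE})--(\ref{eq:decomposition-value}), hence by Theorem \ref{thm:decomposition} a SIB-PBE.

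The first step of the inductive argument is to verify that the zero-sum structure of the original game is preserved in the stage game $\mathbf{G}_t(\pi_t,V_{t+1},\psi_{t+1})$. Since $\sum_{i\in\mathcal{N}} u_t^i(X_t,A_t)=0$ for all $t$ by assumption, and since I will show inductively that $\sum_{i\in\mathcal{N}} V_{t+1}^i \equiv 0$, the stage utilities satisfy $\sum_{i\in\mathcal{N}} \bar{U}_t^i(a_t,s_t,\pi_t,V_{t+1},\psi_{t+1})=0$ pointwise. Thus $\mathbf{G}_t(\pi_t,V_{t+1},\psi_{t+1})$ is a finite Bayesian zero-sum game (by Assumption \ref{assump:finite}), for which the classical minimax theorem guarantees a unique equilibrium payoff $V_t^i(\pi_t,\cdot)$ for each agent (even though the equilibrium strategy may not be unique). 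This uniqueness of payoffs ensures that $\sum_{i} V_t^i \equiv 0$, sustaining the inductive hypothesis on the zero-sum structure.

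The continuity step is the main technical point. Given continuity of $V_{t+1}$ in $\pi_{t+1}$ and the fact that $\psi_{t+1}$ is obtained from Bayes' rule on finite alphabets, $\bar{U}_t^i$ is continuous in $\pi_t$ jointly with $(a_t,s_t)$. By Lemma \ref{lemma:sufficient1} the BNE correspondence $\text{\textbf{BNE}}_t(V_{t+1},\psi_{t+1})$ is nonempty for every $\pi_t$, and by the distributional-strategy upper-hemicontinuity result invoked after Lemma \ref{lemma:sufficient1} (Milgrom--Weber \cite{milgrom1985distributional}), the equilibrium-payoff correspondence $\pi_t\mapsto \{V_t^i(\pi_t,\cdot)\}$ is upper-hemicontinuous. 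Combined with the uniqueness of equilibrium payoffs in zero-sum games derived above, upper-hemicontinuity of a single-valued correspondence is equivalent to continuity; therefore $V_t$ is continuous in $\pi_t$, closing the induction.

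The main obstacle I foresee is making the last step rigorous: one must ensure that the upper-hemicontinuity of the equilibrium correspondence extends to the payoff map despite the presence of the Bayes-updated continuation belief $\psi_{t+1}(\pi_t,Z_{t+1})$ inside $V_{t+1}$. This requires checking that the consistency condition (\ref{eq:decomposition-update}) can be met while keeping $\psi_{t+1}$ continuous in $\pi_t$ on the equilibrium path, and that the off-path values of $\psi_{t+1}$ can be chosen so as not to destroy the continuity of the value—something that is compatible with part (iii) of Definition \ref{def:consistencycommon}. Once this bookkeeping is carried out, the argument reduces to the standard observation that a unique-valued upper-hemicontinuous correspondence is a continuous function, and backward induction delivers the desired SIB-PBE.
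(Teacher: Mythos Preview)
Your proposal is correct and follows essentially the same backward-induction scheme as the paper: invoke Lemma~\ref{lemma:sufficient1} for existence at each stage once $V_{t+1}$ is continuous, and use the zero-sum structure to guarantee that $V_t$ is uniquely determined and continuous in $\pi_t$, closing the induction. The only difference is cosmetic: the paper obtains the continuity of $V_t$ in $\pi_t$ by appealing directly to a result of Einy et al.\ \cite{einy2012continuity} on continuity of the value of finite Bayesian zero-sum games in the common prior (applied to the continuation game from $t$ to $T$), whereas you reconstruct the same conclusion at the stage-game level via ``uniqueness of the minimax value $+$ upper-hemicontinuity of the equilibrium-payoff correspondence implies continuity.'' Your route is slightly more self-contained and makes the role of the zero-sum hypothesis explicit; the paper's citation absorbs your concern about the dependence of $\psi_{t+1}$ on $\sigma^*_t$ into the quoted theorem, so the bookkeeping you flag is handled implicitly there.
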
  

For dynamic non-zero-sum games, it is harder to establish that  $V_t$ is continuous in  $\pi_t$ for $t\hspace{-2pt}<\hspace{-2pt}T$ since the set of equilibrium payoffs is not a singleton in general. 
However, we conjecture that for every dynamic game with asymmetric information described in Section \ref{sec:model}, at every stage of the corresponding dynamic program, it is possible to select a BNE for every realization of $\pi_t$ so that the resulting $V_t$ is continuous in $\pi_t$.

In addition to the results of Lemma \ref{lemma:sufficient1} and Theorem \ref{thm:zerosum}, we provide another condition below that guarantees the existence of SIB-PBEs in some instances of dynamic games with asymmetric information.		\vspace*{-2pt}

\begin{lemma}\label{lemma:sufficient2}
	A dynamic game with asymmetric information described in Section \ref{sec:model} has at least one SIB-PBE 
		if there exits sufficient information $S_{1:t}^{1:N}$ such that the SIB update rule $\psi_{1:T}$ is independent of $\sigma^*$.		\vspace*{-2pt}
\end{lemma}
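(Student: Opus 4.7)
The key leverage is that when the SIB update rule $\psi_{1:T}$ is independent of $\sigma^*$, the consistency requirement (\ref{eq:decomposition-update}) in the dynamic program of Theorem~\ref{thm:decomposition} becomes vacuous, and the inter-temporal coupling between beliefs and strategies that complicates the general SIB-PBE existence problem is severed. The plan is therefore to construct a solution to (\ref{eq:decomposition-initial})--(\ref{eq:decomposition-value}) by backward induction on $t$, at each step needing only to exhibit $\sigma_t^*$ as a BNE of the stage game $\mathbf{G}_t(\pi_t,V_{t+1},\psi_{t+1})$.

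I would set up the induction with the invariant that $V_{t+1}^i$ is continuous in $\pi_{t+1}$ for every $i\in\mathcal{N}$. The base case $V_{T+1}^i\equiv 0$ is immediate. For the inductive step, assuming $V_{t+1}$ is continuous, I would verify that the stage utility $\bar{U}_t^i$ defined in (\ref{eq:stageutility}) is jointly continuous in $\pi_t$ and the action profile: the conditional law of $(X_t,Z_{t+1},S_{t+1})$ given $(\pi_t,s_t,a_t)$ is linear in $\pi_t$ (using Assumption~\ref{assump:finite}), $\psi_{t+1}$ is a fixed mapping by the hypothesis of the lemma, and composition with the continuous $V_{t+1}$ preserves continuity. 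Invoking Lemma~\ref{lemma:sufficient1}, a BNE $\sigma_t^*(\pi_t,\cdot)$ of $\mathbf{G}_t(\pi_t,V_{t+1},\psi_{t+1})$ exists for each $\pi_t$.

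The main obstacle is completing the induction by selecting $\sigma_t^*(\pi_t,\cdot)$ as $\pi_t$ varies so that the resulting $V_t^i$ given by (\ref{eq:decomposition-value}) is itself continuous in $\pi_t$, thereby preserving the inductive hypothesis. The route I would follow combines (i) upper-hemicontinuity of the correspondence $\pi_t\mapsto\text{\textbf{BNE}}_t(V_{t+1},\psi_{t+1})$, which follows from joint continuity of stage payoffs together with Berge's maximum theorem applied to each agent's best-response correspondence and Glicksberg's existence theorem for the finite Bayesian stage game, with (ii) a selection argument that exploits the finiteness of $\mathcal{S}_t$ and $\mathcal{A}_t$ to partition $\Delta(\mathcal{X}_t\times\mathcal{S}_t)$ into finitely many semi-algebraic regions on each of which an equilibrium branch varies continuously. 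Without the independence hypothesis on $\psi$, this selection would have to be coordinated with the consistency requirement (\ref{eq:decomposition-update}), creating an additional fixed-point constraint that is generally infeasible; it is precisely the lemma's hypothesis that makes the selection tractable.

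Having completed the induction, the constructed pair $(\sigma^*,\psi)$ satisfies (\ref{eq:decomposition-BNE})--(\ref{eq:decomposition-value}) at every $t\in\mathcal{T}$, and (\ref{eq:decomposition-update}) holds trivially. By Theorem~\ref{thm:decomposition}, the SIB assessment $(\sigma^*,\gamma^{\psi})$ is a SIB-PBE, proving the lemma.
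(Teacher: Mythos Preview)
Your overall strategy---backward induction through the dynamic program of Theorem~\ref{thm:decomposition}, with consistency (\ref{eq:decomposition-update}) trivially satisfied because $\psi$ is fixed---is exactly the paper's approach. But you have taken an unnecessary detour that creates a genuine gap.

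The detour is the continuity invariant on $V_{t+1}$ and the appeal to Lemma~\ref{lemma:sufficient1}. Lemma~\ref{lemma:sufficient1} is designed for the situation where $\psi_{t+1}$ depends on $\sigma_t^*$ through Bayes' rule; there a Kakutani fixed point is needed, and continuity of $V_{t+1}$ is what makes the best-response correspondence upper hemicontinuous. Once $\psi$ is strategy-independent, that fixed-point coupling disappears entirely: for each $\pi_t$, the stage game $\mathbf{G}_t(\pi_t,V_{t+1},\psi_{t+1})$ is a finite Bayesian game (finite types $\mathcal{S}_t$, finite actions $\mathcal{A}_t$) and has a BNE by the standard Nash existence theorem, regardless of whether $V_{t+1}$ is continuous. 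You can then pick \emph{any} BNE at each $\pi_t$, define $V_t$ by (\ref{eq:decomposition-value}), and proceed---no continuity of $V_t$ is ever used. This is the paper's argument in full.

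Your proposed continuous-selection step is not only unnecessary but actually the known obstruction. Partitioning the simplex into semi-algebraic regions on which an equilibrium branch is continuous does not yield a globally continuous selection: across region boundaries the equilibrium payoffs can jump, so $V_t$ would in general be only piecewise continuous, and the induction hypothesis would fail at the next step. This is precisely why the paper cannot prove SIB-PBE existence for general non-zero-sum games via Lemma~\ref{lemma:sufficient1} (see the discussion immediately following that lemma), and why the hypothesis of Lemma~\ref{lemma:sufficient2} is valuable: it lets you bypass the continuity route altogether.
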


The independence of SIB update rule from $\sigma^*$ is a condition that is not satisfied for all dynamic games with asymmetric information. Nevertheless, we present below special instances where this condition is satisfied. 


\textbf{1) Nested information structure with one controller \cite{renault2006value}:} Consider the nested information structure case described in Section \ref{sec:model}. Assume that the evolution of the system is controlled only by the uniformed player and is given by $X_{t+1}\hspace{-2pt}=\hspace{-2pt}f_t(X_t\hspace{-1pt},\hspace{-1pt}A_t^2\hspace{-1pt},\hspace{-1pt}W_t)$. For $S_t^1\hspace{-2pt}=\hspace{-2pt}X_t$ and $S_t^2\hspace{-2pt}=\hspace{-2pt}0$, it is easy to check that $\mathbb{P}^{\sigma^*}\hspace{-2pt}\{\hspace{-1pt}\pi_{t+1}|\pi_t\hspace{-1pt},\hspace{-1pt}a_t\}\hspace{-2pt}=\hspace{-2pt}\mathbb{P}\{\hspace{-1pt}\pi_{t+1}|\pi_t\hspace{-1pt},\hspace{-1pt}a_t\hspace{-1pt}\}$ for all 
$t\hspace{-2pt}\in\hspace{-2pt}\mathcal{T}$. 

\textbf{2) Independent dynamics with observable actions and no private valuation \cite{ouyang2016TAC}:} Consider the model with independent dynamics and observable actions described in Section \ref{sec:model}. Assume that agent $i$'s, $i\hspace{-2pt}\in\hspace{-2pt}\mathcal{N}$, instantaneous utility is given by $u_t^i(A_t\hspace{-1pt},\hspace{-1pt}X_t^{-i})$ (no private valuation); that is, agent $i$'s utility at $t$ does not depend on $X_t^i$. It is easy to verify that $S_t^i=\emptyset$ is sufficient private information for agent $i$. Hence, the condition of Lemma \ref{lemma:sufficient2} is trivially satisfied.   

\textbf{3) Delayed sharing information structure with $\mathbf{d\hspace{-2pt}=\hspace{-2pt}1}$ \cite{ba1978two,basar1978decentralized}:} Consider a delayed sharing information structure where agents' actions and observations are revealed publicly with $d$ time step delay. When delay $d\hspace{-2pt}=\hspace{-2pt}1$, we have  $P_t^i\hspace{-2pt}=\hspace{-2pt}\{\hspace{-1pt}Y_{t}^i\hspace{-1pt}\}$. Let $S_t^i\hspace{-2pt}=\hspace{-2pt}P_t^i\hspace{-2pt}=\hspace{-2pt}Y_t^i$. Then, it is easy to verify that the condition of Lemma \ref{lemma:sufficient2} is satisfied. 

\textbf{4) Uncontrolled state process with hidden actions:} Consider an $N$-player game with uncontrolled dynamics given by $X_{t+1}\hspace{-2pt}=\hspace{-2pt}f_t(X_t,W_t)$, $t\hspace{-2pt}\in\hspace{-2pt}\mathcal{T}$. At every time $t\in\mathcal{T}$, agent $i$, $i\in\mathcal{N}$, receives a noisy observation $Y_t^i\hspace{-2pt}=\hspace{-2pt}O_t^i(X_t^i,Z_t^i)$. The agents' actions are hidden. Thus, $P_t^i\hspace{-2pt}=\hspace{-2pt}\{Y_{1:t}^i,A_{1:t-1}^i\}$ and $C_t=\emptyset$.  Hence, the condition of Lemma \ref{lemma:sufficient2} is trivially satisfied.   We note that in the case where a subset of the agents' observations is revealed to all agents' with some delay, \textit{i.e.} $C_t\hspace{-2pt}\subseteq\hspace{-2pt}\{Y_{1:t}\}$, the condition of Lemma \ref{lemma:sufficient2} is also satisfied.
		\vspace*{-7pt}
		\section{Conclusion} \label{sec:conclusionG}		\vspace*{-2pt}

We proposed a general approach to study dynamic games with asymmetric information. We presented a set of conditions sufficient to characterize an information state for each agent that effectively compresses his common and private information in a mutually consistent manner. Along with the results in the companion paper \cite{team}, we showed that the characterized information state provides a sufficient statistic for decision making purposes in strategic and non-strategic settings. We introduced the notion of Sufficient Information based Perfect Bayesian Equilibrium that characterizes a set of outcomes \red{in} a dynamic game. We provided a sequential decomposition of the dynamic game over time, which leads to a dynamic program for the computation of the set of SIB-PBEs of the dynamic game. We determined conditions 	that guarantee the existence of SIB-PBEs. Using these conditions, we proved the existence of SIB-PBE for dynamic zero-sum games and for special instances of dynamic non-zero-sum games. 

		\vspace*{-9pt}
		\bibliographystyle{ieeetr}
		\bibliography{thesis-bib}
		
\iflongversion
\else 
		\vspace*{-30pt}				
		\begin{IEEEbiography}[{\includegraphics[width=1in]{tavafoghi}}]{Hamidreza Tavafoghi}
			received the Bachelor's degree in Electrical Engineering at Sharif University of Technology, Iran, 2011, and 
			the M.Sc and Ph.D in Electrical Engineering and M.A. in Economics from the University of Michigan in 2013, 2017, and 2017, respectively.
			He is currently a postdoctoral researcher at the University of California, Berkeley. His research interests lie in reinforcement learning, stochastic control, game theory, mechanism design, and stochastic control and their applications to transportation networks, power systems.			
		\end{IEEEbiography}		
		\vspace*{-30pt}				
		\begin{IEEEbiography}[{\includegraphics[width=1in,height=1.25in,clip,keepaspectratio]{ouyang.jpg}}]
			{Yi Ouyang}(S'13-M'16)
			received the B.S. degree in Electrical Engineering from the National Taiwan University, Taipei, Taiwan in 2009, and the M.Sc and Ph.D. in Electrical Engineering at the University of Michigan, in 2012 and 2015, respectively. He is currently a researcher at Preferred Networks America, Inc. His research interests include reinforcement learning, stochastic control, and stochastic dynamic games.
		\end{IEEEbiography}		
		\vspace*{-35pt}				
		\begin{IEEEbiography}[{\includegraphics[width=1in]{teneketzis}}]{Demosthenis Teneketzis}(M'87--SM'97--F'00)
			is currently Professor of Electrical Engineering and Computer Science at the University of Michigan,
			Ann Arbor, MI, USA.  Prior to joining the University of Michigan, he worked for Systems Control, Inc., Palo Alto,and Alphatech, Inc., Burlington, MA.
			His research interests are in stochastic control, decentralized systems, queuing and communication networks, resource allocation, mathematical economics, and discrete-event systems.
		\vspace*{-5pt}
		\end{IEEEbiography}
\fi
		\appendix\label{appG}

\begin{proof}[Proof of Lemma \ref{lemma:equivalence}]
	
	For any given consistent SIB assessment $(\sigma^*\hspace*{-1pt},\hspace*{-1pt}\gamma)$, let $g^*$ denote the behavioral strategy profile constructed according to (\ref{eq:CIBstrategy}). In the following, we construct recursively a belief system $\mu$ that is consistent with $g^*$ and satisfies (\ref{eq:lemma1}).

	For $t\hspace*{-1pt}=\hspace*{-1pt}1$, we have $P_1^i\hspace*{-2pt}=\hspace*{-2pt}Y_1^i$ and $C_1\hspace*{-2pt}=\hspace*{-2pt}Z_1$. Define
	\begin{align}
	\mu_t^i(h_1^i)(x_1,p_1^{-i}):=\frac{\mathbb{P}\{y_1,z_1|x_1\}\eta(x_1)}{\sum_{\hat{x}_1\in\mathcal{X}_1}\mathbb{P}\{y_1^i,z_1|\hat{x}_1\}\eta(\hat{x}_1)}.\label{eq:proof:lemma1-1}
	\end{align} 

	For $t\hspace*{-2pt}>\hspace*{-2pt}1$, if $\mathbb{P}_{\mu_{t-1}^i}^{g^*}\hspace*{-2pt}\{h_t^i|h_{t-1}^i\}\hspace*{-2pt}>\hspace*{-2pt}0$ (\textit{i.e.} no deviation from $g^*_{t-1}$ at $t\hspace*{-2pt}-\hspace*{-2pt}1$), define $\mu_t^i$ recursively by Bayes' rule,
	\begin{align}
	\mu_t^i(h_t^i)(x_t,p_t^{-i}):=\frac{\mathbb{P}_{\mu_{t-1}^i}^{g^*}\{h_t^i,x_{t},p_t^{-i}|h_{t-1}^i\}}{\mathbb{P}_{\mu_{t-1}^i}^{g^*}\{h_t^i|h_{t-1}^i\}}.\label{eq:proof:lemma1-2}
	\end{align}

	For $t\hspace*{-2pt}>\hspace*{-2pt}1$, if $\mathbb{P}_{\mu_{t-1}^i}^{g^*}\hspace*{-2pt}\{h_t^i|h_{t-1}^i\}\hspace*{-2pt}=\hspace*{-2pt}0$ (\textit{i.e.} there is a deviation from $g^*_{t-1}$ at $t\hspace*{-2pt}-\hspace*{-2pt}1$), define $\mu_t^i$ as,
	\begin{align}
	\mu_t^i(h_t^i)(x_t,p_t^{-i}):=\frac{\left|\mathcal{P}_t\right|}{\left|\mathcal{S}_t\right|}\frac{\gamma_t(c_t)(x_t,s_t)}{\sum_{\hat{s}_t^{-i}\in\mathcal{S}_t^{-i}}\gamma(c_t)(x_t,\hat{s}_t^{-i},s_t^i)},\label{eq:proof:lemma1-3}
	\end{align}
	where $s_t^j\hspace*{-1pt}=\hspace*{-1pt}\zeta_t^j(p_t^j,c_t;g_{1:t-1}^*)$ for all $j\hspace*{-1pt}\in\hspace*{-1pt}\mathcal{N}$.
	
	At $t\hspace*{-1pt}=\hspace*{-1pt}1$, (\ref{eq:lemma1}) holds by construction from (\ref{eq:proof:lemma1-1}).
	For $t>1$, 
	\begin{align*}
	\mathbb{P}^{g^*}\{S_t^{-i}|h_t^i\}&=\mathbb{P}^{g^*}\{S_t^{-i}|p_t^i,c_t\}=\mathbb{P}^{g^*}\{S_t^{-i}|s_t^i,c_t\}=\frac{\mathbb{P}^{g^*}\{S_t^{-i},s_t^i|c_t\}}{\mathbb{P}^{g^*}\{s_t^i|c_t\}}=\frac{\pi_t(S_t^{-i},s^i_t)}{\sum_{\hat{s}_t^{-i}\in\mathcal{S}_t^{-i}}\pi_t(\hat{s}_t^{-i},s_t^i)}\\&=\mathbb{P}\{S_t^{-i}|s_t^i,\pi_t\}
	\end{align*} 
	where the second equality follows from (\ref{eq:sufficientinfo}). Therefore, (\ref{eq:lemma1}) holds for all $t\hspace*{-1pt}\in\hspace*{-1pt}\mathcal{T}$.
\end{proof}


\begin{lemma}\cite[Lemma 2]{team}
	\label{lemma:lem}
	Given a SIB strategy profile $\sigma^*$ and update rule $\psi$ consistent with $\sigma^*$,
	\begin{align}
	\mathbb{P}^{\sigma^*}_\psi\hspace*{-1pt}\{S_{t+1},\hspace*{-1pt}\Pi_{t+1}|p_t,c_t,a_t\}\hspace*{-2pt}=\hspace*{-2pt}\mathbb{P}^{\sigma^*}_\psi\hspace*{-1pt}\{S_{t+1},\hspace*{-1pt}\Pi_{t+1}|s_t,\pi_t,a_t\}.
	\end{align}
	for all $s_t,\pi_t,a_t$.
\end{lemma}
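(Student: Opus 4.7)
The plan is to prove the identity in two stages: first replace the private information $p_t$ on the right of the conditioning bar by the sufficient private information $s_t$, and then replace the common information $c_t$ by the SIB belief $\pi_t = \gamma_t(c_t)$.

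For the first stage, I would invoke property (\ref{eq:sufficientdynamic}) of Definition \ref{def:sufficient}, which directly gives $\mathbb{P}^{\sigma^*}\{s_{t+1}, z_{t+1}|p_t, c_t, a_t\} = \mathbb{P}^{\sigma^*}\{s_{t+1}, z_{t+1}|s_t, c_t, a_t\}$. Since $\Pi_t = \gamma_t(C_t)$ is determined by $c_t$ and $\Pi_{t+1} = \psi_{t+1}(\Pi_t, Z_{t+1})$ is a deterministic function of $(\Pi_t, Z_{t+1})$, the joint distribution of $(S_{t+1}, \Pi_{t+1})$ can be recovered from that of $(S_{t+1}, Z_{t+1})$ by marginalizing against the indicator of the update $\pi_{t+1} = \psi_{t+1}(\gamma_t(c_t), z_{t+1})$. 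The sufficient information property therefore lifts to give $\mathbb{P}^{\sigma^*}_\psi\{S_{t+1}, \Pi_{t+1}|p_t, c_t, a_t\} = \mathbb{P}^{\sigma^*}_\psi\{S_{t+1}, \Pi_{t+1}|s_t, c_t, a_t\}$.

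For the second stage, I would show that $\mathbb{P}^{\sigma^*}_\psi\{s_{t+1}, z_{t+1}|s_t, c_t, a_t\}$ depends on $c_t$ only through $\pi_t$. Decomposing through the state $X_t$ and using the fact that $(S_{t+1}, Z_{t+1})$ is built from $(X_t, S_t, A_t)$ together with the fresh noise terms $W_t^x, W_{t+1}^{1:N}, W_{t+1}^c$ (which are mutually independent of past history), I can write
\begin{align*}
\mathbb{P}^{\sigma^*}_\psi\{s_{t+1}, z_{t+1}|s_t, c_t, a_t\} = \sum_{x_t} q(s_{t+1}, z_{t+1}|x_t, s_t, a_t)\, \mathbb{P}^{\sigma^*}_\psi\{x_t|s_t, c_t, a_t\},
\end{align*}
where $q$ depends only on the primitives (\ref{eq:systemdynamic1})--(\ref{eq:systemdynamic3}) and the recursion $\phi_{t+1}^{1:N}$, and not on $c_t$. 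Because the SIB profile prescribes $A_t^i$ as a function of $(\pi_t, s_t^i)$ alone, $A_t$ is conditionally independent of $X_t$ given $(s_t, c_t)$, so $\mathbb{P}^{\sigma^*}_\psi\{x_t|s_t, c_t, a_t\} = \mathbb{P}^{\sigma^*}_\psi\{x_t|s_t, c_t\}$, which by the definition of $\gamma_t$ equals $\pi_t(x_t, s_t) / \sum_{x'_t} \pi_t(x'_t, s_t)$, a function of $(\pi_t, s_t)$ alone. A final marginalization against the deterministic map $\pi_{t+1} = \psi_{t+1}(\pi_t, z_{t+1})$ completes the second substitution.

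The main obstacle I anticipate lies in the second stage: carefully unwinding the update $S_{t+1}^i = \phi_{t+1}^i(S_t^i, H_{t+1}^i \setminus H_t^i; g^*_{1:t})$ to verify that the increment $H_{t+1}^i \setminus H_t^i$ consists only of variables generated from $(X_{t+1}, A_t)$ via fresh noise, so that the kernel $q$ is indeed independent of $c_t$. A secondary but important point is that consistency of $\psi$ with $\sigma^*$ must be invoked to identify $\gamma_t(c_t)$ with the correct conditional of $(X_t, S_t)$ given $c_t$; for histories of zero probability under $\sigma^*$, this would need the open-loop clause in Definition \ref{def:consistencycommon}, but is automatic under Assumptions \ref{assump:positivetrans}--\ref{assump:positiveprob} (cf. Remarks \ref{remark:off} and \ref{remark:CIBoff}).
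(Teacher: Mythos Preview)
The paper does not actually supply a proof of this lemma here; it is merely restated and attributed to the companion paper \cite[Lemma 2]{team}, and then used as a black box inside the proof of Theorem \ref{thm:closeness}. So there is no proof in this paper to compare your argument against line by line.

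That said, your two-stage reduction is the natural one and is correct. The first stage is exactly an application of condition (\ref{eq:sufficientdynamic}) from Definition \ref{def:sufficient}, followed by pushing through the deterministic map $\Pi_{t+1}=\psi_{t+1}(\gamma_t(c_t),Z_{t+1})$. The second stage --- factoring through $X_t$, observing that the kernel $q(s_{t+1},z_{t+1}\mid x_t,s_t,a_t)$ is built entirely from the primitives (\ref{eq:systemdynamic1})--(\ref{eq:systemdynamic3}) and the recursion $\phi_{t+1}^{1:N}$ applied to the increment $\{Y_{t+1}^j,Z_{t+1},a_t^j\}$, and then identifying $\mathbb{P}^{\sigma^*}\{x_t\mid s_t,c_t\}$ with $\pi_t(x_t,s_t)/\sum_{x_t'}\pi_t(x_t',s_t)$ via consistency of $\psi$ --- is precisely the mechanism the paper itself exploits in the proof of Theorem \ref{thm:closeness} (see the chain of equalities establishing property (a) and the displayed identity $\mathbb{P}\{x_t\mid s_t,\pi_t,c_t\}=\pi_t(x_t,s_t)/\sum_{\hat x_t}\pi_t(\hat x_t,s_t)$ in the verification of condition (3)). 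Your anticipated obstacles are real but routine: the increment $H_{t+1}^i\setminus H_t^i=\{Y_{t+1}^i,Z_{t+1},A_t^i\}$ depends on the past only through $(X_{t+1},A_t)$ and fresh noise, and under Assumptions \ref{assump:positivetrans}--\ref{assump:positiveprob} the zero-probability caveat is moot (Remarks \ref{remark:off} and \ref{remark:CIBoff}).
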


\begin{proof}[Proof of Theorem \ref{thm:closeness}] Consider a ``super dynamic system'' as the collection of the original dynamic system along with agents $-i$ who play according to SIB assessment $(\sigma^*\hspace*{-1pt},\gamma^{\psi})$. This superdynamic system captures the system agent $i$ ``sees''. We establish the claim of Theorem \ref{thm:closeness} in two steps: (i) we show that from agent $i$'s viewpoint the super dynamic system is a POMDP, and (ii) we show that $\{\Pi_t,S_t^i\}$ is an information state for agent $i$ when he faces the super dynamic system with the original utility $u_{1:T}^i(\cdot,\cdot)$. Therefore, without loss of optimality, agent $i$ can choose his optimal decision strategy (best response) from the class of strategies that are functions of the information state $\{\Pi_t,S_t^i\}$, \textit{i.e.} the class of SIB strategies.

To establish step (i), consider $\tilde{X}\hspace*{-2pt}:=\hspace*{-2pt}\{\hspace*{-1pt}X_t\hspace*{-1pt},\hspace*{-1pt}\Pi_t,\hspace*{-1pt}S_t,\Pi_{t\hspace*{-1pt}-\hspace*{-1pt}1}\hspace*{-1pt},\hspace*{-1pt}S_{t\hspace*{-1pt}-\hspace*{-1pt}1}\hspace*{-1pt}\}$ as the system state at $t$ for the super dynamic system. Agent $i$'s observation at time $t$ is given by $\tilde{Y}_t^i\hspace*{-2pt}:=\hspace*{-2pt}\{\hspace*{-1pt}Y_t^i\hspace*{-1pt},\hspace*{-1pt}Z_t\hspace*{-1pt}\}$. To show that the super dynamic system is a POMDP, we need to show that it satisfies the following properties:\\
(a) it has a controlled Markovian dynamics, that is,
\begin{align}
\hspace*{-5pt}\mathbb{P}^{\sigma^*}_\psi\{\tilde{x}_{t+1}|\tilde{x}_{1:t},a_{1:t}^i,\tilde{y}_{1:t}^i\}\hspace*{-2pt}=\hspace*{-1pt}\mathbb{P}^{\sigma^{*-i}}_\psi\{\tilde{x}_{t+1}|\tilde{x}_{t},a_{t}^i\}, \hspace*{-3pt}\quad \forall t\hspace*{-1pt}\in\hspace*{-1pt}\mathcal{T}\hspace*{-1pt},\hspace*{-3pt}\label{eq:thm2-1}
\end{align}	
(b) agent $i$'s observation $\tilde{Y}_t^i$ is a function of system state 	$\tilde{X}_t$ along with the previous action $A_{t-1}^i$, that is,
\begin{align}
\hspace*{-4pt}\mathbb{P}^{\sigma^*}_\psi\hspace*{-1pt}\{\tilde{y}_{t}^i|\tilde{x}_{1:t},\hspace*{-1pt}a_{1:t-1}^i,\hspace*{-1pt}\tilde{y}_{1:t-1}^i\}\hspace*{-2pt}=\hspace*{-1pt}\mathbb{P}^{\sigma^{*-i}}\hspace*{-1pt}\{\tilde{y}_{t}^i|\tilde{x}_{t},\hspace*{-1pt}a_{t-1}^i\}, \hspace*{-7pt}\quad \forall t\hspace*{-1pt}\in\hspace*{-1pt}\mathcal{T}\hspace*{-1pt},\hspace*{-3pt}\label{eq:thm2-2}
\end{align}	
(c) agent $i$'s instantaneous utility at $t\hspace*{-1pt}\in\hspace*{-1pt}\mathcal{T}$ can be written as a function $\tilde{u}_t(\tilde{x}_t,\hspace*{-1pt}a_t^i)$ of system state $\tilde{X}_t$ along with his action $A_t^i$, that is,
\begin{align}
\mathbb{E}^{\sigma^*}_\psi\hspace*{-2pt}\left\{u_t^i(\hspace*{-1pt}X_t,\hspace*{-1pt}A_t)|\tilde{x}_{1:t},\hspace*{-1pt}a_{1:t}^i,\hspace*{-1pt}\tilde{y}_{1:t}^i\right\}\hspace*{-2pt}=\hspace*{-1pt}\tilde{u}_t(\tilde{x}_t,\hspace*{-1pt}a_t^i), \hspace*{-3pt}\quad \forall t\hspace*{-1pt}\in\hspace*{-1pt}\mathcal{T}\hspace*{-1pt}\label{eq:thm2-3}.
\end{align}
Property (a) is true because,
\begin{gather*}
\mathbb{P}^{\sigma^*}_\psi\{\tilde{x}_{t+1}|\tilde{x}_{1:t},a_{1:t}^i,\tilde{y}_{1:t}^i\}
=
\mathbb{P}^{\sigma^*}_\psi\hspace*{-1pt}\{x_{t\hspace*{-1pt}+\hspace*{-1pt}1}\hspace*{-1pt},\hspace*{-1pt}\pi_{t\hspace*{-1pt}+\hspace*{-1pt}1}\hspace*{-1pt},\hspace*{-1pt}s_{t\hspace*{-1pt}+\hspace*{-1pt}1}\hspace*{-1pt},\hspace*{-1pt}\pi_t\hspace*{-1pt},\hspace*{-1pt}s_t|x_{1:t}\hspace*{-1pt},\hspace*{-1pt}\pi_{1:t}\hspace*{-1pt},\hspace*{-1pt}s_{1:t}\hspace*{-1pt},\hspace*{-1pt}y_{1:t}^i\hspace*{-1pt},\hspace*{-1pt}z_{1:t}\hspace*{-1pt},\hspace*{-1pt}a_{1:t}^i\hspace*{-1pt}\}\\
=\\
\sum_{a_{t}^{-\hspace*{-1pt}i}\hspace*{-2pt},z_{t\hspace*{-1pt}+\hspace*{-1pt}1}\hspace*{-1pt},y_{t\hspace*{-1pt}+\hspace*{-1pt}1}}\hspace{-15pt}\mathbb{P}^{\sigma^*}_\psi\hspace*{-2pt}\{\hspace*{-1pt}x_{t\hspace*{-1pt}+\hspace*{-1pt}1}\hspace*{-1pt},\hspace*{-2pt}\pi_{t\hspace*{-1pt}+\hspace*{-1pt}1}\hspace*{-1pt},\hspace*{-2pt}s_{t\hspace*{-1pt}+\hspace*{-1pt}1}\hspace*{-1pt},\hspace*{-2pt}a_{t}^{-i}\hspace*{-2pt},\hspace*{-2pt}z_{t\hspace*{-1pt}+\hspace*{-1pt}1}\hspace*{-1pt},\hspace*{-2pt}y_{t\hspace*{-1pt}+\hspace*{-1pt}1}\hspace*{-1pt}|x_{1\hspace*{-1pt}:t}\hspace*{-1pt},\hspace*{-2pt}\pi_{1\hspace*{-1pt}:t}\hspace*{-1pt},\hspace*{-2pt}s_{1\hspace*{-1pt}:t}\hspace*{-1pt},\hspace*{-2pt}y_{1\hspace*{-1pt}:t}^i\hspace*{-1pt},\hspace*{-2pt}z_{1\hspace*{-1pt}:t}\hspace*{-1pt},\hspace*{-2pt}a_{1\hspace*{-1pt}:t}^i\hspace*{-1pt}\}\\
\stackrel{\text{by system dynamics (\ref{eq:systemdynamic1}) and (\ref{eq:systemdynamic2})}}{=}\\
\sum_{a_{t}^{-\hspace*{-1pt}i}\hspace*{-2pt},z_{t\hspace*{-1pt}+\hspace*{-1pt}1}\hspace*{-1pt},y_{t\hspace*{-1pt}+\hspace*{-1pt}1}}\hspace{-17pt}\Big[\hspace*{-1pt}\mathbb{P}^{\sigma^*}_\psi\hspace*{-2pt}\{\hspace*{-1pt}\pi_{t\hspace*{-1pt}+\hspace*{-1pt}1}\hspace*{-1pt},\hspace*{-2pt}s_{t\hspace*{-1pt}+\hspace*{-1pt}1}\hspace*{-1pt}|x_{1\hspace*{-1pt}:t}\hspace*{-1pt},\hspace*{-2pt}\pi_{1\hspace*{-1pt}:t}\hspace*{-1pt},\hspace*{-2pt}s_{1\hspace*{-1pt}:t}\hspace*{-1pt},\hspace*{-2pt}y_{1\hspace*{-1pt}:t}^i\hspace*{-1pt},\hspace*{-2pt}z_{1\hspace*{-1pt}:t}\hspace*{-1pt},\hspace*{-2pt}a_{1\hspace*{-1pt}:t}^i\hspace*{-1pt},\hspace*{-2pt}a_t^{-i}\hspace*{-1pt},\hspace*{-2pt}x_{t\hspace*{-1pt}+\hspace*{-1pt}1}\hspace*{-1pt},\hspace*{-2pt}z_{t\hspace*{-1pt}+\hspace*{-1pt}1}\hspace*{-1pt},\hspace*{-2pt}y_{t\hspace*{-1pt}+\hspace*{-1pt}1}\hspace*{-1pt}\}\\\hspace{45pt}\mathbb{P}\{\hspace*{-1pt}z_{t\hspace*{-1pt}+\hspace*{-1pt}1}\hspace*{-1pt},\hspace*{-2pt}y_{t\hspace*{-1pt}+\hspace*{-1pt}1}\hspace*{-1pt}|x_{t\hspace*{-1pt}+\hspace*{-1pt}1}\hspace*{-1pt},\hspace*{-2pt}a_t\hspace*{-1pt}\}\mathbb{P}\{\hspace*{-1pt}x_{t\hspace*{-1pt}+\hspace*{-1pt}1}\hspace*{-1pt}|x_t\hspace*{-1pt},\hspace*{-2pt}a_{t}\}\hspace*{-1pt}\sigma^{*-i}_{t}\hspace*{-1pt}(\hspace*{-1pt}\pi_t\hspace*{-1pt},\hspace*{-1pt}s_t^{-i}\hspace*{-1pt})(\hspace*{-1pt}a_t^{-i}\hspace*{-1pt})\hspace*{-1pt}\Big]\\
\hspace*{38pt}\stackrel{\fontsize{7}{9}\selectfont\begin{array}{c}\text{Define }\hspace*{-2pt} \hat{\mathcal{Z}}\hspace*{-1pt}:\hspace*{-1pt}=\hspace*{-1pt}\{\hspace*{-1pt}z_{t\hspace*{-1pt}+\hspace*{-1pt}1}\hspace*{-1pt}:\pi_{t\hspace*{-1pt}+\hspace*{-1pt}1}\hspace*{-1pt}=\psi_{t\hspace*{-1pt}+\hspace*{-1pt}1}\hspace*{-1pt}(\hspace*{-1pt}\pi_t\hspace*{-1pt},z_{t\hspace*{-1pt}+\hspace*{-1pt}1}\hspace*{-1pt})\hspace*{-1pt}\} \text{ \& }\\ \hat{\mathcal{Y}}\hspace*{-1pt}(\hspace*{-1pt}z_{t\hspace*{-1pt}+\hspace*{-1pt}1}\hspace*{-1pt})\hspace*{-1pt}:\hspace*{-1pt}=\{\hspace*{-1pt}y_{t\hspace*{-1pt}+\hspace*{-1pt}1}\hspace*{-1pt}:s_{t\hspace*{-1pt}+\hspace*{-1pt}1}^j\hspace*{-1pt}=\phi_{t\hspace*{-1pt}+\hspace*{-1pt}1}^j\hspace*{-1pt}(\hspace*{-1pt}s_t^j\hspace*{-1pt},\hspace*{-1pt}\{\hspace*{-1pt}y_{t\hspace*{-1pt}+\hspace*{-1pt}1}^j\hspace*{-1pt},z_{t\hspace*{-1pt}+\hspace*{-1pt}1}\hspace*{-1pt},a_t^j\hspace*{-1pt}\}\hspace*{-1pt})\hspace*{-1pt},\hspace*{-1pt} \forall j\hspace*{-1pt}\}\end{array}}{=}\\
\sum_{\scriptsize\begin{array}{c}a_{t}^{-i}\hspace*{-2pt},z_{t\hspace*{-1pt}+\hspace*{-1pt}1}\hspace*{-1pt}\in\hspace*{-1pt}\hat{\mathcal{Z}}\hspace*{-1pt},\\y_{t\hspace*{-1pt}+\hspace*{-1pt}1}\hspace*{-1pt}\in\hat{\mathcal{Y}}(\hspace*{-1pt}z_{t\hspace*{-1pt}+\hspace*{-1pt}1}\hspace*{-1pt})\end{array}}\hspace{-23pt}\Big[\mathbb{P}^{}_\psi\hspace*{-2pt}\{\hspace*{-1pt}\pi_{t\hspace*{-1pt}+\hspace*{-1pt}1}\hspace*{-1pt},\hspace*{-2pt}s_{t\hspace*{-1pt}+\hspace*{-1pt}1}\hspace*{-1pt}|\hspace*{-1pt}s_t\hspace*{-1pt},\hspace*{-2pt}\pi_t\hspace*{-1pt},\hspace*{-2pt}y_{t\hspace*{-1pt}+\hspace*{-1pt}1}\hspace*{-1pt},\hspace*{-2pt}x_{t\hspace*{-1pt}+\hspace*{-1pt}1}\hspace*{-1pt},\hspace*{-2pt}z_{t\hspace*{-1pt}+\hspace*{-1pt}1}\hspace*{-1pt},\hspace*{-2pt}a_t\hspace*{-1pt}\}\hspace*{-1pt}\mathbb{P}\{\hspace*{-1pt}y_{t\hspace*{-1pt}+\hspace*{-1pt}1}\hspace*{-1pt},\hspace*{-2pt}z_{t\hspace*{-1pt}+\hspace*{-1pt}1}\hspace*{-1pt}|x_{t\hspace*{-1pt}+\hspace*{-1pt}1}\hspace*{-1pt},\hspace*{-2pt}a_t\hspace*{-1pt}\}\mathbb{P}\{\hspace*{-1pt}x_{t\hspace*{-1pt}+\hspace*{-1pt}1}|x_t\hspace*{-1pt},\hspace*{-1pt}a_{t}\hspace*{-1pt}\}\sigma^{*-i}_{t}\hspace*{-1pt}(\hspace*{-1pt}\pi_t\hspace*{-1pt},\hspace*{-1pt}s_t^{-i}\hspace*{-1pt})(\hspace*{-1pt}a_t^{-i}\hspace*{-1pt})\hspace*{-1pt}\Big]\\
=\\
\mathbb{P}^{\sigma^{*-i}}_\psi\hspace*{-1pt}\{\hspace*{-1pt}x_{t\hspace*{-1pt}+\hspace*{-1pt}1}\hspace*{-1pt},\hspace*{-1pt}\pi_{t\hspace*{-1pt}+\hspace*{-1pt}1}\hspace*{-1pt},\hspace*{-1pt}s_{t\hspace*{-1pt}+\hspace*{-1pt}1}|x_{t}\hspace*{-1pt},\hspace*{-1pt}\pi_{t}\hspace*{-1pt},\hspace*{-1pt}s_{t}\hspace*{-1pt},\hspace*{-1pt}a_{t}^i\hspace*{-1pt}\}
=
\mathbb{P}^{\sigma^{*-i}}_\psi\hspace*{-1pt}\{\hspace*{-1pt}\tilde{x}_{t\hspace*{-1pt}+\hspace*{-1pt}1}|\tilde{x}_{t}\hspace*{-1pt},\hspace*{-1pt}a_{t}^i\hspace*{-1pt}\}.
\end{gather*}

Property (b) is true because
\begin{gather*}
\mathbb{P}^{\sigma^*}_\psi\hspace*{-1pt}\{\tilde{y}_{t}^i|\tilde{x}_{1:t}\hspace*{-1pt},\hspace*{-1pt}a_{1:t\hspace*{-1pt}-\hspace*{-1pt}1}^i\hspace*{-1pt},\hspace*{-1pt}\tilde{y}_{1:t\hspace*{-1pt}-\hspace*{-1pt}1}^i\hspace*{-1pt}\}
=
\mathbb{P}^{\sigma^*}_\psi\hspace{-2pt}\{\hspace{-1pt}y_{t}^i\hspace*{-1pt},\hspace*{-1pt}z_{t}\hspace*{-1pt}|x_{1:t}\hspace*{-1pt},\hspace*{-1pt}\pi_{1:t}\hspace*{-1pt},\hspace*{-1pt}s_{1:t}\hspace*{-1pt},\hspace*{-1pt}y_{1:t\hspace*{-1pt}-\hspace*{-1pt}1}^i\hspace*{-1pt},\hspace*{-1pt}z_{1:t\hspace*{-1pt}-\hspace*{-1pt}1}\hspace*{-1pt},\hspace*{-1pt}a_{1:t\hspace*{-1pt}-\hspace*{-1pt}1}^i\hspace{-1pt}\}\hspace{-2pt}\\
=\\
\hspace{-3pt}\sum_{a_{t-1}^{-i}}\hspace{-2pt}\mathbb{P}^{\sigma^*}_\psi\hspace{-2pt}\{\hspace{-1pt}y_{t}^i\hspace*{-1pt},\hspace*{-2pt}z_{t}\hspace*{-1pt},\hspace*{-2pt}a_{t-1}^{-i}\hspace*{-1pt}|x_{1\hspace*{-1pt}:t}\hspace*{-1pt},\hspace*{-2pt}\pi_{1\hspace*{-1pt}:t}\hspace*{-1pt},\hspace*{-2pt}s_{1\hspace*{-1pt}:t}\hspace*{-1pt},\hspace*{-2pt}y_{1\hspace*{-1pt}:t\hspace*{-1pt}-\hspace*{-1pt}1}^i\hspace*{-1pt},\hspace*{-2pt}z_{1\hspace*{-1pt}:t\hspace*{-1pt}-\hspace*{-1pt}1}\hspace*{-1pt},\hspace*{-2pt}a_{1\hspace*{-1pt}:t\hspace*{-1pt}-\hspace*{-1pt}1}^i\hspace{-1pt}\}\\
=\\
\hspace*{-3pt}\sum_{a_{t-1}^{-i}}\hspace{-3pt}\Big[\hspace*{-1pt}\mathbb{P}^{\sigma^*}_\psi\hspace{-2pt}\{\hspace{-1pt}y_{t}^i\hspace*{-1pt},\hspace*{-2pt}z_{t}\hspace*{-1pt}|x_{1\hspace*{-1pt}:t}\hspace*{-1pt},\hspace*{-2pt}\pi_{1\hspace*{-1pt}:t}\hspace*{-1pt},\hspace*{-2pt}s_{1\hspace*{-1pt}:t}\hspace*{-1pt},\hspace*{-2pt}y_{1\hspace*{-1pt}:t\hspace*{-1pt}-\hspace*{-1pt}1}^i\hspace*{-1pt},\hspace*{-2pt}z_{1\hspace*{-1pt}:t\hspace*{-1pt}-\hspace*{-1pt}1}\hspace*{-1pt},\hspace*{-2pt}a_{1\hspace*{-1pt}:t\hspace*{-1pt}-\hspace*{-1pt}1}^i\hspace*{-1pt},\hspace*{-2pt}a_{t\hspace*{-1pt}-\hspace*{-1pt}1}^{-i}\hspace{-1pt}\}\sigma^{*-i}_t\hspace*{-1pt}(\hspace*{-1pt}\pi_{t\hspace*{-1pt}-\hspace*{-1pt}1}\hspace*{-1pt},\hspace*{-2pt}s_{t\hspace*{-1pt}-\hspace*{-1pt}1}\hspace*{-1pt})(\hspace*{-1pt}a_{t\hspace*{-1pt}-\hspace*{-1pt}1}^{-i}\hspace*{-1pt})\hspace*{-1pt}\Big]\\
\stackrel{\text{by system dynamics (\ref{eq:systemdynamic2})}}{=}\\
\sum_{a_{t\hspace*{-1pt}-\hspace*{-1pt}1}^{-i}}\hspace{-1pt}\mathbb{P}\{\hspace{-1pt}y_{t}^i\hspace*{-1pt},\hspace*{-2pt}z_{t}\hspace*{-1pt}|x_{t}\hspace*{-1pt},\hspace*{-2pt}a_{t\hspace*{-1pt}-\hspace*{-1pt}1}^i\hspace*{-1pt},\hspace*{-2pt}a_{t\hspace*{-1pt}-\hspace*{-1pt}1}^{-i}\hspace{-1pt}\}\sigma^{*-i}_t\hspace*{-1pt}(\hspace*{-1pt}\pi_{t\hspace*{-1pt}-\hspace*{-1pt}1}\hspace*{-1pt},\hspace*{-2pt}s_{t\hspace*{-1pt}-\hspace*{-1pt}1}\hspace*{-1pt})(\hspace*{-1pt}a_{t\hspace*{-1pt}-\hspace*{-1pt}1}^{-i}\hspace*{-1pt})\\
=\\
\mathbb{P}^{\sigma^{*-i}}\hspace*{-1pt}\{\hspace*{-1pt}y_{t}^i\hspace*{-1pt},\hspace*{-2pt}z_{t}\hspace*{-1pt}|x_{t}\hspace*{-1pt},\hspace*{-1pt}\pi_{t\hspace*{-1pt}-\hspace*{-1pt}1}\hspace*{-1pt},\hspace*{-2pt}s_{t\hspace*{-1pt}-\hspace*{-1pt}1}\hspace*{-1pt},\hspace*{-2pt}a_{t\hspace*{-1pt}-\hspace*{-1pt}1}^i\hspace*{-1pt}\}
=
\mathbb{P}^{\sigma^*}_\psi\hspace*{-1pt}\{\tilde{y}_{t}^i\hspace*{-1pt}|\tilde{x}_{t}\hspace*{-1pt},\hspace*{-2pt}a_{t\hspace*{-1pt}-\hspace*{-1pt}1}^i\hspace*{-1pt}\}.
\end{gather*}	

Property (c) is true because
\begin{align*}
&\mathbb{E}^{\sigma^*}_\psi\hspace{-4pt}\left\{\hspace*{-1pt}u_t^i\hspace*{-1pt}(\hspace*{-1pt}X_t\hspace*{-1pt},\hspace*{-2pt}A_t\hspace*{-1pt})\hspace*{-1pt}|\tilde{x}_{1\hspace*{-1pt}:t}\hspace*{-1pt},\hspace*{-2pt}a_{1\hspace*{-1pt}:t}^i\hspace*{-1pt},\hspace*{-2pt}\tilde{y}_{1\hspace*{-1pt}:t}^i\hspace*{-1pt}\right\}\\
&=
\hspace*{-2pt}\mathbb{E}^{\sigma^*}_\psi\hspace{-4pt}\left\{\hspace*{-1pt}u_t^i\hspace*{-1pt}(\hspace*{-1pt}X_t\hspace*{-1pt},\hspace*{-2pt}A_t\hspace*{-1pt})\hspace*{-1pt}|x_t\hspace*{-1pt},\hspace*{-2pt}\pi_t\hspace*{-1pt},\hspace*{-2pt}s_t\hspace*{-1pt},\hspace*{-2pt}\tilde{x}_{1\hspace*{-1pt}:t-1}\hspace*{-1pt},\hspace*{-2pt}a_{1\hspace*{-1pt}:t}^i\hspace*{-1pt},\hspace*{-2pt}\tilde{y}_{1\hspace*{-1pt}:t}^i\hspace*{-1pt}\right\}\\
&=
\hspace*{-2pt}\mathbb{E}^{\sigma^{*-i}}_\psi\hspace{-4pt}\left\{\hspace*{-1pt}u_t^i\hspace*{-1pt}(\hspace*{-1pt}X_t\hspace*{-1pt},\hspace*{-2pt}(a_t^i\hspace*{-1pt},\hspace{-2pt}\sigma^{*-i}\hspace*{-1pt}(\hspace*{-1pt}\pi_t\hspace*{-1pt},\hspace{-2pt}s_t^{-i})\hspace*{-1pt})\hspace*{-1pt})\hspace*{-1pt}|x_t\hspace*{-1pt},\hspace{-2pt}\pi_t\hspace*{-1pt},\hspace{-2pt}s_t\hspace*{-1pt},\hspace{-2pt}\tilde{x}_{1\hspace*{-1pt}:t\hspace*{-1pt}-\hspace*{-1pt}1}\hspace*{-1pt},\hspace{-2pt}a_{1\hspace*{-1pt}:t}^i\hspace*{-1pt},\hspace{-2pt}\tilde{y}_{1\hspace*{-1pt}:t}^i\hspace{-1pt}\right\}\\
&=
u_t^i\hspace*{-1pt}(\hspace*{-1pt}x_t\hspace*{-1pt},\hspace*{-2pt}(\hspace*{-1pt}a_t^i\hspace*{-1pt},\hspace*{-2pt}\sigma^{*-i}\hspace*{-1pt}(\hspace*{-1pt}\pi_t\hspace*{-1pt},\hspace*{-2pt}s_t^{-i}\hspace*{-1pt})\hspace*{-1pt})\hspace*{-1pt})\hspace*{-2pt}:=\hspace*{-2pt}\tilde{u}_t(\hspace*{-1pt}\tilde{x}_t\hspace*{-1pt},\hspace*{-1pt}a_t^i\hspace*{-1pt})
\end{align*}

To establish step (ii), that is, to show that $\{\hspace*{-1pt}\Pi_t,\hspace*{-1pt}S_t^i\}$ is an information state for agent $i$ when he interacts with the superdynamic system defined above based on SIB assessment $(\sigma^*\hspace*{-1pt},\gamma^\psi)$, we need to prove: (1) it can be updated recursively at $t$, \textit{i.e.} it can be determined using $\{\Pi_{t-1}\hspace*{-1pt},\hspace*{-1pt}S_{t-1}^i\hspace*{-1pt}\}$ and $\{\hspace*{-1pt}\tilde{Y}_{t}^i\hspace*{-1pt},\hspace*{-1pt}A_t^i\hspace*{-1pt}\}\hspace*{-2pt}=\hspace*{-2pt}\{\hspace*{-1pt}Y_t^i\hspace*{-1pt},\hspace*{-1pt}Z_t,\hspace*{-1pt}A_t^i\hspace*{-1pt}\}$; (2) agent $i$'s belief about $\{\hspace*{-1pt}\Pi_{t+1}\hspace*{-1pt},\hspace*{-1pt}S_{t+1}^i\hspace*{-1pt}\}$ conditioned on $\{\hspace*{-1pt}\Pi_t,\hspace*{-1pt}S_t^i\hspace*{-1pt},\hspace*{-1pt}A_t^i\}$ is independent of  $H_t^i$ and his actual strategy $g^i$; and (3) it is sufficient to evaluate the agent $i$'s instantaneous utility at $t$ for every action $a_t^i\hspace*{-2pt}\in\hspace*{-2pt}\mathcal{A}_t^i$, for all $t\hspace*{-2pt}\in\hspace*{-2pt}\mathcal{T}$.

Condition (1) is satisfied since $\Pi_t\hspace*{-1pt}=\hspace*{-1pt}\psi_t(\Pi_{t-1},\hspace*{-1pt}Z_t)$ and $S_t^i\hspace*{-1pt}=\hspace*{-1pt}\phi_t(S_{t-1}^i,\hspace*{-1pt}\{Y_t^i,\hspace*{-1pt}Z_t,\hspace*{-1pt}A_t^i\})\hspace*{-1pt}$ for $t\hspace*{-1pt}\in\hspace*{-1pt}\mathcal{T}\backslash\{1\}$; see part (i) of Definition \ref{def:sufficient} and (\ref{eq:CIBupdaterule}).

To prove condition (2), let 
\begin{align}g^{*j}_t(h_t^j)=\sigma^{*j}_t(l(h_t^j),\gamma_t^\psi(c_t)), \label{eq:thm2-5}\end{align} for all $j\in\mathcal{N}$ and $t\in\mathcal{T}$. Then condition (2) is satisfied since
\begin{gather*}
\mathbb{P}^{\sigma^*}_\psi\hspace*{-2pt}\{\hspace*{-1pt}s_{t+1}^i\hspace*{-1pt},\hspace*{-2pt}\pi_{t\hspace*{-1pt}+\hspace*{-1pt}1}\hspace*{-1pt}|h_t^i\hspace*{-1pt},\hspace*{-2pt}a_t^i\hspace*{-1pt}\}
\hspace*{-2pt}=\hspace*{-7pt}
\sum_{h_t^{-i}\hspace*{-2pt},a_t^{-i}}\hspace*{-6pt}\mathbb{P}^{\sigma^{*}}_\psi\hspace*{-3pt}\{\hspace*{-1pt}s_{t\hspace*{-1pt}+\hspace*{-1pt}1}^i\hspace*{-1pt},\hspace*{-2pt}\pi_{t\hspace*{-1pt}+\hspace*{-1pt}1}\hspace*{-1pt},\hspace*{-2pt}h_t^{-i}\hspace*{-2pt},\hspace*{-2pt}a_{t}^{-i}\hspace*{-1pt}|h_t^i\hspace*{-1pt},\hspace*{-2pt}a_t^i\hspace*{-1pt}\}\\
\stackrel{\text{by Theorem \ref{thm:beliefindependence} and (\ref{eq:thm2-5})}}{=}\\
\sum_{h_t^{-i}\hspace*{-2pt},a_t^{-i}}\hspace*{-6pt}\mathbb{P}^{\sigma^*}_\psi\hspace*{-3pt}\{\hspace*{-1pt}s_{t\hspace*{-1pt}+\hspace*{-1pt}1}^i\hspace*{-1pt},\hspace*{-2pt}\pi_{t\hspace*{-1pt}+\hspace*{-1pt}1}\hspace*{-1pt}|h_t\hspace*{-1pt},\hspace*{-2pt}a_t\hspace*{-1pt}\}\mathbb{P}^{g^{*\hspace*{-1pt}-i}}\hspace*{-3pt}\{\hspace*{-1pt}h_t^{-i}\hspace*{-1pt}|h_t^i\hspace*{-1pt}\}g^{*-i}_t\hspace*{-1pt}(\hspace*{-1pt}h_t^{-i}\hspace*{-1pt})(\hspace*{-1pt}a_t^{-i}\hspace*{-1pt})\\
=\\
\sum_{h_t^{-i}\hspace*{-2pt},a_t^{-i}\hspace*{-2pt},s_{t\hspace*{-1pt}+\hspace*{-1pt}1}^{-i}}\hspace{-12pt}\mathbb{P}^{\sigma^*}_\psi\hspace*{-3pt}\{\hspace*{-1pt}s_{t\hspace*{-1pt}+\hspace*{-1pt}1}^i\hspace*{-1pt},\hspace*{-2pt}\pi_{t\hspace*{-1pt}+\hspace*{-1pt}1}\hspace*{-1pt},\hspace*{-2pt}s_{t\hspace*{-1pt}+\hspace*{-1pt}1}^{-i}\hspace*{-1pt}|h_t\hspace*{-1pt},\hspace*{-2pt}a_t\hspace*{-1pt}\}\hspace{-1pt}\mathbb{P}^{g^{\hspace*{-1pt}*-i}}\hspace*{-1pt}\{\hspace*{-1pt}h_t^{-i}\hspace*{-1pt}|h_t^i\hspace*{-1pt}\}g^{*-i}_t\hspace*{-1pt}(\hspace*{-1pt}h_t^{-i}\hspace*{-1pt})(\hspace*{-1pt}a_t^{-i}\hspace*{-1pt})\\
\stackrel{\text{by Lemma \ref{lemma:lem} and $s_t^{-i}\hspace*{-2pt}=\hspace*{-1pt}\zeta_t^{-i}\hspace*{-1pt}(h_t^{-i};g^*_{1:t-1})$ (see Definition \ref{def:sufficient})}}{=}\\
\sum_{h_t^{-i}\hspace*{-2pt},a_t^{-i}\hspace*{-2pt},s_{t\hspace*{-1pt}+\hspace*{-1pt}1}^{-i}}\hspace{-12pt}\mathbb{P}^{\sigma^*}_\psi\hspace*{-3pt}\{\hspace*{-1pt}s_{t\hspace*{-1pt}+\hspace*{-1pt}1}\hspace*{-1pt},\hspace*{-2pt}\pi_{t\hspace*{-1pt}+\hspace*{-1pt}1}\hspace*{-1pt}|s_t\hspace*{-1pt},\hspace*{-2pt}\pi_t\hspace*{-1pt},\hspace*{-2pt}a_t\hspace*{-1pt}\}\hspace{-1pt}\mathbb{P}^{g^{\hspace*{-1pt}*-i}}\hspace*{-1pt}\{\hspace*{-1pt}s_t^{-i}\hspace*{-1pt}|h_t^i\hspace*{-1pt}\}g^{\hspace*{-1pt}*-i}_t\hspace*{-1pt}(\hspace*{-1pt}h_t^{-i}\hspace*{-1pt})(\hspace*{-1pt}a_t^{-i}\hspace*{-1pt})\\
\stackrel{\text{by part (ii) of Lemma \ref{lemma:equivalence}}}{=}\\
\sum_{s_t^{-i}\hspace*{-2pt},a_t^{-i}\hspace*{-2pt},s_{t\hspace*{-1pt}+\hspace*{-1pt}1}^{-i}}\hspace{-12pt}\mathbb{P}^{\sigma^*}_\psi\hspace*{-3pt}\{\hspace*{-1pt}s_{t\hspace*{-1pt}+\hspace*{-1pt}1}\hspace*{-1pt},\hspace*{-2pt}\pi_{t\hspace*{-1pt}+\hspace*{-1pt}1}\hspace*{-1pt}|s_t\hspace*{-1pt},\hspace*{-2pt}\pi_t\hspace*{-1pt},\hspace*{-2pt}a_t\hspace*{-1pt}\}\hspace{-1pt}\mathbb{P}\{\hspace*{-1pt}s_t^{-i}\hspace*{-1pt}|s_t^i\hspace*{-1pt},\hspace*{-2pt}\pi_t\}g^{\hspace*{-1pt}*-i}_t\hspace*{-1pt}(\hspace*{-1pt}h_t^{-i}\hspace*{-1pt})(\hspace*{-1pt}a_t^{-i}\hspace*{-1pt})\\
\stackrel{\text{by (\ref{eq:thm2-5})}}{=}\\
\sum_{h_t^{-i}\hspace*{-2pt},a_t^{-i}\hspace*{-2pt},s_{t\hspace*{-1pt}+\hspace*{-1pt}1}^{-i}}\hspace{-12pt}\mathbb{P}^{\sigma^*}_\psi\hspace*{-3pt}\{\hspace*{-1pt}s_{t\hspace*{-1pt}+\hspace*{-1pt}1}^i\hspace*{-1pt},\hspace*{-2pt}\pi_{t\hspace*{-1pt}+\hspace*{-1pt}1}\hspace*{-1pt},\hspace*{-2pt}s_{t\hspace*{-1pt}+\hspace*{-1pt}1}^{-i}\hspace*{-1pt}|h_t\hspace*{-1pt},\hspace*{-2pt}a_t\hspace*{-1pt}\}\hspace{-1pt}\mathbb{P}\{\hspace*{-1pt}s_t^{-i}\hspace*{-1pt}|s_t^i,\hspace*{-2pt}\pi_t\hspace*{-1pt}\}\sigma^{\hspace*{-1pt}*-i}_t\hspace*{-1pt}(\hspace*{-1pt}\pi_t\hspace*{-1pt},\hspace*{-2pt}s_t^{-i}\hspace*{-1pt})(\hspace*{-1pt}a_t^{-i}\hspace*{-1pt})\\
=\\
\mathbb{P}^{\sigma^{*}}_\psi\hspace*{-3pt}\{\hspace*{-1pt}s_{t+1}^i\hspace*{-1pt},\hspace*{-1pt}\pi_{t+1}\hspace*{-1pt}|s_t^i\hspace*{-1pt},\hspace*{-1pt}\pi_t\hspace*{-1pt},\hspace*{-1pt}a_t^i\hspace*{-1pt}\}.
\end{gather*}.

To prove condition (3), we need to show that for all $a_t^i\hspace*{-1pt}\in\hspace*{-1pt}\mathcal{A}_t^i$,
\begin{align}
\mathbb{E}^{g^{*-i}}\hspace*{-2pt}\{\hspace*{-1pt}u_i^t(\hspace*{-1pt}X_t\hspace*{-1pt},\hspace*{-2pt}A_t^{-i}\hspace*{-2pt},\hspace*{-2pt}a_t^i)\hspace*{-1pt}|h_t^i\hspace*{-1pt}\}\hspace*{-2pt}=\hspace*{-2pt}\mathbb{E}^{g^{*-i}}\hspace*{-2pt}\{\hspace*{-1pt}u_i^t(\hspace*{-1pt}X_t\hspace*{-1pt},\hspace*{-2pt}A_t^{-i}\hspace*{-2pt},\hspace*{-2pt}a_t^i)\hspace*{-1pt}|\hspace*{-1pt}\pi_t\hspace*{-1pt},\hspace*{-2pt}s_t^i\},
\end{align}
for all $h_t^i,\pi_t,s_t^i$, $t\hspace*{-1pt}\in\hspace*{-1pt}\mathcal{T}$.

By Lemma \ref{lemma:equivalence},
\begin{align}
\mathbb{P}^{g^{*-i}}\{s_t^{-i}|h_t^i\}=\mathbb{P}\{s_t^{-i}|\pi_t,s_t^i\}. \label{eq:proof-thm2-1}
\end{align}
Setting $\pi_t\hspace*{-1pt}=\hspace*{-1pt}\gamma^\psi(c_t)$, and $A_t^{-i}\hspace*{-1pt}=\hspace*{-1pt}\sigma^{*-i}(\pi_t,S_t^{-i})$,
\begin{align}
\mathbb{E}^{\sigma^*}_\psi\hspace*{-2pt}\{\hspace*{-1pt}u_t^i\hspace*{-1pt}(\hspace*{-1pt}X_t\hspace*{-1pt},\hspace*{-2pt}A_t^{-i}\hspace*{-1pt},\hspace*{-2pt}a_t^i\hspace*{-1pt})\hspace*{-1pt}|h_t^i\hspace*{-1pt}\}&\hspace*{-2pt}
=\hspace*{-2pt}
\mathbb{E}^{\sigma^*}_\psi\hspace*{-2pt}\{\hspace*{-1pt}u_t^i\hspace*{-1pt}(\hspace*{-1pt}X_t\hspace*{-1pt},\hspace*{-2pt}\sigma_t^{*-i}\hspace*{-1pt}(\hspace*{-1pt}\pi_t\hspace*{-1pt},\hspace*{-2pt}S_t^{-i}\hspace*{-1pt})\hspace*{-1pt},\hspace*{-2pt}a_t^i\hspace*{-1pt})\hspace*{-1pt}|h_t^i\hspace*{-1pt}\}\nonumber\\
&\hspace*{-75pt}=\hspace*{-1pt}
\mathbb{E}^{\sigma^*}_\psi\hspace*{-3pt}\left\{\hspace*{-1pt}\mathbb{E}^{\sigma^{*-i}}\hspace*{-2pt}\{\hspace*{-1pt}u_t^i\hspace*{-1pt}(\hspace*{-1pt}X_t\hspace*{-1pt},\hspace*{-2pt}\sigma_t^{*-i}\hspace*{-1pt}(\hspace*{-1pt}\pi_t\hspace*{-1pt},\hspace*{-2pt}S_t^{-i}\hspace*{-1pt})\hspace*{-1pt},\hspace*{-2pt}a_t^i\hspace*{-1pt})\hspace*{-1pt}|S_t^{-i}\hspace*{-2pt},\hspace*{-2pt}\pi_t\hspace*{-1pt},\hspace*{-2pt}s_i^t\hspace*{-1pt},\hspace*{-2pt}h_t^i\hspace*{-1pt}\}\hspace*{-1pt}\Big|h_t^i\hspace*{-1pt}\right\}\nonumber\\
&\hspace*{-75pt}=\hspace*{-1pt}
\mathbb{E}^{\sigma^*}_\psi\hspace*{-3pt}\left\{\hspace*{-1pt}\mathbb{E}^{\sigma^{*-i}}\hspace*{-2pt}\{\hspace*{-1pt}u_t^i\hspace*{-1pt}(\hspace*{-1pt}X_t\hspace*{-1pt},\hspace*{-2pt}\sigma_t^{*-i}\hspace*{-1pt}(\hspace*{-1pt}\pi_t\hspace*{-1pt},\hspace*{-2pt}S_t^{-i}\hspace*{-1pt})\hspace*{-1pt},\hspace*{-2pt}a_t^i\hspace*{-1pt})\hspace*{-1pt}|S_t^{-i}\hspace*{-2pt},\hspace*{-2pt}\pi_t\hspace*{-1pt},\hspace*{-2pt}s_t^i\hspace*{-1pt},\hspace*{-2pt}c_t\hspace*{-1pt}\}\hspace*{-1pt}\Big|h_t^i\hspace*{-1pt}\right\}\nonumber\\
&\hspace*{-75pt}=\hspace*{-1pt}
\mathbb{E}^{\sigma^{*-i}}_\psi\hspace*{-3pt}\left\{\hspace*{-1pt}\mathbb{E}^{\sigma^{*-i}}\hspace*{-2pt}\{\hspace*{-1pt}u_t^i\hspace*{-1pt}(\hspace*{-1pt}X_t\hspace*{-1pt},\hspace*{-2pt}\sigma_t^{*-i}\hspace*{-1pt}(\hspace*{-1pt}\pi_t\hspace*{-1pt},\hspace*{-2pt}S_t^{-i}\hspace*{-1pt})\hspace*{-1pt},\hspace*{-2pt}a_t^i\hspace*{-1pt})\hspace*{-1pt}|S_t^{-i}\hspace*{-2pt},\hspace*{-2pt}\pi_t\hspace*{-1pt},\hspace*{-2pt}s_t^i\hspace*{-1pt},\hspace*{-2pt}c_t\hspace*{-1pt}\}\hspace*{-1pt}\Big|h_t^i\hspace*{-1pt}\right\}\nonumber\\
&\hspace*{-75pt}=\hspace*{-1pt}
\mathbb{E}^{\sigma^{*-i}}_\psi\hspace*{-3pt}\left\{\hspace*{-1pt}\mathbb{E}^{\sigma^{*-i}}\hspace*{-2pt}\{\hspace*{-1pt}u_t^i\hspace*{-1pt}(\hspace*{-1pt}X_t\hspace*{-1pt},\hspace*{-2pt}\sigma_t^{*-i}\hspace*{-1pt}(\hspace*{-1pt}\pi_t\hspace*{-1pt},\hspace*{-1pt}S_t^{-i}\hspace*{-1pt})\hspace*{-1pt},\hspace*{-2pt}a_t^i)\hspace*{-1pt}|S_t^{-i}\hspace*{-2pt},\hspace*{-2pt}\pi_t\hspace*{-1pt},\hspace*{-2pt}s_t^i\hspace*{-1pt}\}\hspace*{-1pt}\Big|h_t^i\hspace*{-1pt}\right\}\nonumber\\
&\hspace*{-75pt}=\hspace*{-1pt}
\mathbb{E}^{\sigma^{*-i}}_\psi\hspace*{-3pt}\{\hspace*{-1pt}u_t^i\hspace*{-1pt}(\hspace*{-1pt}X_t\hspace*{-1pt},\hspace*{-2pt}\sigma_t^{*-i}\hspace*{-1pt}(\hspace*{-1pt}\pi_t\hspace*{-1pt},\hspace*{-2pt}S_t^{-i}\hspace*{-1pt})\hspace*{-1pt},\hspace*{-2pt}a_t^i)\hspace*{-1pt}|\pi_t\hspace*{-1pt},\hspace*{-2pt}s_t^i\hspace*{-1pt}\}.\label{eq:proof-thm2-2}
\end{align}

The first equality above is by substituting $A_t^{-i}\hspace*{-2pt}=\hspace*{-2pt}\sigma_t^{*-i}\hspace*{-1pt}(\hspace*{-1pt}\pi_t\hspace*{-1pt},\hspace*{-1pt}S_t^{-i}\hspace*{-1pt})$. The second equality follows from the smoothing property of conditional expectation. The third equality holds by condition (iii) of Definition \ref{def:sufficient}. The fourth equality follows from Theorem \ref{thm:beliefindependence} since $S_t^{-i}$ is a function of $H_t^{-i}$. The fifth equality holds since for every $x_t,s_t,\pi_t,c_t$,
\begin{align*}
\mathbb{P}\{\hspace*{-1pt}x_t\hspace*{-1pt}|s_t\hspace*{-1pt},\hspace*{-2pt}\pi_t\hspace*{-1pt},\hspace*{-2pt}c_t\hspace*{-1pt}\}\hspace*{-2pt}=\hspace*{-1pt}\frac{\mathbb{P}\{\hspace*{-1pt}x_t\hspace*{-1pt},\hspace*{-2pt}s_t\hspace*{-1pt}|\pi_t\hspace*{-1pt},\hspace*{-2pt}c_t\hspace*{-1pt}\}}{\mathbb{P}\{\hspace*{-1pt}s_t\hspace*{-1pt},\hspace*{-2pt}\pi_t\hspace*{-1pt},\hspace*{-2pt}c_t\hspace*{-1pt}\}}\hspace*{-2pt}=\hspace*{-1pt}\frac{\pi_t\hspace*{-1pt}(\hspace*{-1pt}x_t\hspace*{-1pt},\hspace*{-2pt}s_t\hspace*{-1pt})}{\sum_{\hat{x}_t}\hspace*{-2pt}\pi_t\hspace*{-1pt}(\hspace*{-1pt}\hat{x}_t\hspace*{-1pt},\hspace*{-2pt}s_t\hspace*{-1pt})}=\mathbb{P}\{\hspace*{-1pt}x_t\hspace*{-1pt}|s_t\hspace*{-1pt},\hspace*{-2pt}\pi_t\hspace*{-1pt}\}.
\end{align*}  
The last equality is true by (\ref{eq:proof-thm2-1}). By (\ref{eq:proof-thm2-2}) we prove condition (3) for $\{\hspace*{-1pt}\Pi_t\hspace*{-1pt},\hspace*{-2pt}S_t^i\hspace*{-1pt}\}$ to be an information state, and thus establish the result of Theorem \ref{thm:closeness}.

\end{proof} 

\begin{proof}[Proof of Theorem \ref{thm:decomposition}]
	Let $(\sigma^*\hspace*{-2pt},\hspace*{-1pt}\psi)$ denote a solution of the dynamic program. We note that the SIB update rule $\psi$ is consistent with $\sigma^*$ by requirement (\ref{eq:decomposition-update}). Therefore, we only need to show that the SIB assessment $(\sigma^*\hspace*{-2pt},\hspace*{-1pt}\psi)$ is sequentially rational. To prove it, we use the \textit{one-shot deviation principle} for dynamic games with asymmetric information \cite{hendon1996one}. To state the one-shot deviation, we need the following definitions. 
	
	\begin{definition}[One-shot deviation]
		We say $\tilde{g}^i$ is a one-shot deviation from $g^{*i}$ if there exists a unique $h_t^i\hspace*{-2pt}\in\hspace*{-1pt}\mathcal{H}^i$ such that $\tilde{g}_t^i\hspace*{-1pt}(\hspace*{-1pt}h_t^i\hspace*{-1pt})\hspace*{-2pt}\neq\hspace*{-1pt} g^{*i}_t\hspace*{-1pt}(\hspace*{-1pt}h_t^i\hspace*{-1pt})$, and $\tilde{g}_\tau^i\hspace*{-1pt}(\hspace*{-1pt}h_\tau^i\hspace*{-1pt})\hspace*{-2pt}\neq \hspace*{-2pt}g^{*i}_\tau\hspace*{-1pt}(\hspace*{-1pt}h_\tau^i\hspace*{-1pt})$ for all $h_\tau^i\hspace*{-2pt}\neq \hspace*{-2pt}h_t^i$, $h_\tau^i\hspace*{-2pt}\in\hspace*{-1pt}\mathcal{H}^i$.
	\end{definition}
	
	\begin{definition}[Profitable one-shot deviation]
	Consider an assessment $(g^*\hspace*{-2pt},\hspace*{-1pt}\mu)$. We say $\tilde{g}^i$ is a profitable one-shot deviation for agent $i$ if $\tilde{g}^i$ is a one-shot deviation from $g^{*i}$ at $h_t^i$ such that $\tilde{g}_t^i\hspace*{-1pt}(h_t^i)\hspace*{-2pt}\neq\hspace*{-2pt} g^{*i}_t\hspace*{-1pt}(h_t^i)$, and 
	\begin{align*}
	\mathbb{E}_\mu^{(g^{*-i}\hspace*{-2pt},\tilde{g}^i)}\hspace*{-3pt}\left\{\hspace*{-2pt}\sum_{\tau=t}^{T}\hspace*{-1pt}u_\tau^i\hspace*{-1pt}(\hspace*{-1pt}X_\tau\hspace*{-1pt},\hspace*{-2pt}A_\tau\hspace*{-1pt})\hspace*{-1pt}\Big|h_t^i\hspace*{-1pt}\right\}\hspace*{-2pt}>\hspace*{-2pt}\mathbb{E}_\mu^{(g^{*-i}\hspace*{-2pt},g^{*i})}\hspace*{-3pt}\left\{\hspace*{-1pt}\sum_{\tau=t}^{T}\hspace*{-1pt}u_\tau^i\hspace*{-1pt}(\hspace*{-1pt}X_\tau\hspace*{-1pt},\hspace*{-2pt}A_\tau\hspace*{-1pt})\hspace*{-1pt}\Big|h_t^i\hspace*{-1pt}\right\}
	\end{align*}	
	\end{definition}
	
	\textbf{\textit{One-shot deviation principle \cite{hendon1996one}:}} A consistent assessment $(g^*\hspace*{-2pt},\hspace*{-1pt}\mu)$ is a PBE if and only if there exists no agent that has a profitable one-shot deviation.
	
	Below, we show that the consistent SIB assessment $(\sigma^*\hspace*{-2pt},\hspace*{-1pt}\psi)$ satisfies the sequential rationality condition using the one-shot deviation principle.
	
	Consider an arbitrary agent $i\hspace*{-2pt}\in\hspace*{-1pt}\mathcal{N}$, time $t\hspace*{-2pt}\in\hspace*{-1pt}\mathcal{T}\hspace*{-1pt}$, and history realization $h_t^i\hspace*{-2pt}\in\hspace*{-1pt}\mathcal{H}_t^i$. Agent $i$ has a profitable one-shot deviation at $h_t^i$ only if 
	\begin{align*}
	\sigma^{*i}_t\hspace*{-1pt}(\hspace*{-1pt}\pi_t\hspace*{-1pt},\hspace*{-2pt}s_t^i)\hspace*{-2pt}\notin \hspace*{-3pt}\argmax_{\tilde{g}_t^i\hspace*{-1pt}(h_t^i)\in\Delta(\hspace*{-1pt}\mathcal{A}_t^i)} \hspace*{-3pt}\mathbb{E}^{\sigma^*}_\pi\hspace*{-3pt}\left\{\hspace*{-1pt}\bar{U}_t^i\hspace*{-1pt}(\hspace*{-1pt}(\sigma^{*}\hspace*{-1pt}(\hspace*{-1pt}\pi_t\hspace*{-1pt},\hspace*{-2pt}S_t)\hspace*{-1pt})\hspace*{-1pt},\hspace*{-2pt}S_t\hspace*{-1pt},\hspace*{-2pt}\pi_t\hspace*{-1pt},\hspace*{-2pt}V_{t\hspace*{-1pt}+\hspace*{-1pt}1}\hspace*{-1pt},\hspace*{-2pt}\psi_{t\hspace*{-1pt}+\hspace*{-1pt}1}\hspace*{-1pt})\hspace*{-1pt}\Big|h_t^i\hspace*{-1pt}\right\}\hspace*{-1pt}.
	\end{align*}
	 
	 Given $(\pi_t\hspace*{-1pt},\hspace*{-1pt}V_{t+1}\hspace*{-1pt},\hspace*{-1pt}\psi_{t+1}\hspace*{-1pt},\hspace*{-1pt}\sigma^*_t)$, the expected value of the function $\bar{U}_t^i$ conditioned on $h_t^i$ is only a function of $s_t^i$, agent $i$'s belief about $S_t^{-i}$, as well as agent $i$'s strategy $\tilde{g}^i_t(h_t^i)$. Agent $i$'s belief about $S_t^{-i}$ given $h_t^i$ is only a function of $s_t^i$ and $\pi_t$ (see (\ref{eq:lemma1})). Therefore, any solution to the maximization problem above can be written as a function of $\pi_t$ and $s_t^i$, that is, it is a SIB strategy $\tilde{\sigma}^i_t(\pi_t,\hspace*{-1pt}s_t^i)$ for agent $i$. Consequently, agent $i$ has a profitable one-shot deviation only if
	 	\begin{align*}
	 	\sigma^{\hspace*{-1pt}*i}_t\hspace*{-1pt}(\hspace*{-1pt}\pi_t\hspace*{-1pt},\hspace*{-2pt}s_t^i\hspace*{-1pt})\hspace*{-2pt}\notin\hspace*{-7pt} \argmax_{\tilde{\sigma}_t^i\hspace*{-1pt}(\hspace*{-1pt}\pi_t,s_t^i)\in\Delta(\hspace*{-1pt}\mathcal{A}_t^i)} \hspace*{-8pt}\mathbb{E}^{\sigma^{\hspace*{-1pt}*}}_\pi\hspace*{-4pt}\left\{\hspace*{-2pt}\bar{U}_t^i\hspace*{-1pt}(\hspace*{-1pt}(\hspace*{-1pt}\sigma^{*}\hspace*{-1pt}(\hspace*{-1pt}\pi_t\hspace*{-1pt},\hspace*{-2pt}S_t\hspace*{-1pt})\hspace*{-1pt})\hspace*{-1pt},\hspace*{-2pt}S_t\hspace*{-1pt},\hspace*{-2pt}\pi_t\hspace*{-1pt},\hspace*{-2pt}V_{t\hspace*{-1pt}+\hspace*{-1pt}1}\hspace*{-1pt},\hspace*{-2pt}\psi_{t\hspace*{-1pt}+\hspace*{-1pt}1}\hspace*{-1pt})\hspace*{-1pt}\Big|\hspace*{-1pt}\pi_t\hspace*{-1pt},\hspace*{-2pt}s_t^i\hspace*{-2pt}\right\}\hspace*{-2pt}.
	 	\end{align*}
	 
	 By (\ref{eq:decomposition-BNE}), $\sigma^{*}_t$ is BNE of the stage game $\mathbf{G}_t(\pi_t,V_{t+1},\psi_{t+1})$ at $t$ (see also (\ref{eq:BNEcorrespondence})), \textit{i.e.}
	 	\begin{align*}
	 	\sigma^{\hspace*{-1pt}*i}_t\hspace*{-1pt}(\hspace*{-1pt}\pi_t\hspace*{-1pt},\hspace*{-2pt}s_t^i\hspace*{-1pt})\hspace*{-2pt}\in\hspace*{-7pt} \argmax_{\tilde{\sigma}_t^i\hspace*{-1pt}(\hspace*{-1pt}\pi_t,s_t^i)\in\Delta(\hspace*{-1pt}\mathcal{A}_t^i)} \hspace*{-8pt}\mathbb{E}^{\sigma^{\hspace*{-1pt}*}}_\pi\hspace*{-4pt}\left\{\hspace*{-2pt}\bar{U}_t^i\hspace*{-1pt}(\hspace*{-1pt}(\hspace*{-1pt}\sigma^{*}\hspace*{-1pt}(\hspace*{-1pt}\pi_t\hspace*{-1pt},\hspace*{-2pt}S_t\hspace*{-1pt})\hspace*{-1pt})\hspace*{-1pt},\hspace*{-2pt}S_t\hspace*{-1pt},\hspace*{-2pt}\pi_t\hspace*{-1pt},\hspace*{-2pt}V_{t\hspace*{-1pt}+\hspace*{-1pt}1}\hspace*{-1pt},\hspace*{-2pt}\psi_{t\hspace*{-1pt}+\hspace*{-1pt}1}\hspace*{-1pt})\hspace*{-1pt}\Big|\hspace*{-1pt}\pi_t\hspace*{-1pt},\hspace*{-2pt}s_t^i\hspace*{-2pt}\right\}\hspace*{-2pt}.
	 	\end{align*}
	 	Consequently, there exists no profitable deviation from $\sigma^{*i}_t\hspace*{-1pt}(\hspace*{-1pt}\pi_t,\hspace*{-1pt}s_t^i)$ at $h_t^i$. Therefore, there exists no agent that has a profitable one-shot deviation. Hence, by one-shot deviation principle, the consistent SIB assessment $(\sigma^*\hspace*{-2pt},\hspace*{-1pt}\psi)$ is sequentially rational, and thus, it is a SIB-PBE. 
\end{proof}

\begin{proof}[Proof of Lemma \ref{lemma:sufficient1}]
We prove below that if $V_{t+1}(\cdot,\hspace*{-1pt}s_{t+1})$ is continuous in $\pi_{t+1}$, then the dynamic program has a solution at stage $t$, $t\hspace*{-1pt}\in\hspace*{-1pt}\mathcal{T}$; that is, there exists at least one $\sigma^*_t$ such that $\sigma^*_t\hspace*{-1pt}\in \hspace*{-1pt}\text{\textbf{BNE}}_t(V_{t+1},\hspace*{-1pt}\psi_{t+1})$, where $\psi_{t+1}$ is consistent with $\sigma^*_t$.

For every $\pi_t$, define a perturbation of the stage game $\mathbf{G}_t(\hspace*{-1pt}\pi_t,\hspace*{-1pt}V_{t+1},\hspace*{-1pt}\psi_{t+1})$ by restricting the set of strategies of each agent to mixed strategies that assign probability of at least $\epsilon\hspace*{-1pt}>\hspace*{-1pt}0$ to every action $a_t^i\hspace*{-1pt}\in\hspace*{-1pt}\mathcal{A}_t^i$ of agent $i\hspace*{-1pt}\in\hspace*{-1pt}\mathcal{N}$; for every agent $i\hspace*{-1pt}\in\hspace*{-1pt}\mathcal{N}$
we denote this class of $\epsilon$-restricted strategies by $\Sigma_t^{i,\epsilon}$ and $\Sigma_t^{\epsilon}\hspace*{-1pt}:=\hspace*{-1pt}\Sigma_t^{1,\epsilon}\hspace*{-2pt}\times\hspace*{-1pt}\ldots\hspace*{-1pt}\times\hspace*{-1pt}\Sigma_t^{N,\epsilon}\hspace*{-1pt}$.
In the following we prove that, for every $\epsilon\hspace*{-1pt}>\hspace*{-1pt}0$, the corresponding perturbed stage game has an equilibrium $\sigma^{*,\epsilon}_t$ along with a consistent update rule $\psi_{t+1}^\epsilon$.

 We note that when the agents' equilibrium strategies are perfectly mixed strategies, then the update rule $\psi_{t+1}^\epsilon$ is completely determined via Bayes' rule. Therefore, for every strategy profile $\sigma^{*,\epsilon}_t\hspace*{-2pt}\in\hspace*{-1pt}\Sigma_t^\epsilon$ we can write $\psi_{t+1}^\epsilon\hspace*{-2pt}:=\hspace*{-1pt}\beta_{t+1}\hspace*{-1pt}(\sigma^{*,\epsilon})$, where  $\beta_{t+1}\hspace*{-1pt}(\sigma^{*,\epsilon})$ is Bayes' rule where $\sigma^{*,\epsilon}$ is utilized (see (\ref{eq:CIBconsistency-on}))

For every agent $i\hspace*{-1pt}\in\hspace*{-1pt}\mathcal{N}$, define a best response correspondence $\text{\textbf{BR}}_t^{i,\epsilon}\hspace*{-2pt}:\hspace*{-1pt}\Sigma_t^\epsilon\hspace*{-1pt}\rightrightarrows \hspace*{-1pt}\Sigma_t^{i,\epsilon}$ as
\begin{align}
&\hspace*{-4pt}\text{\textbf{BR}}_t^{i,\epsilon}\hspace*{-1pt}(\sigma^{*,\epsilon}_t)\hspace{-2pt}:=\hspace{-2pt}
\Bigg\{\hspace{-2pt}\sigma_t^i\hspace{-2pt}\in\hspace{-2pt}\Sigma_t^{i,\epsilon}\hspace{-3pt}:\hspace{-2pt}\sigma_t^i\hspace*{-1pt}(\hspace*{-1pt}\pi_t\hspace*{-1pt},\hspace{-2pt}s_t^i\hspace*{-1pt})\hspace{-2pt}\in\hspace{-2pt}\argmax_{\sigma_t^{i,\epsilon}\in\Sigma_t^{i,\epsilon}}\hspace*{-1pt}\mathbb{E}^{\sigma^{*\hspace*{-1pt}-\hspace*{-1pt}i\hspace*{-1pt},\hspace*{-1pt}\epsilon}_t\hspace{-2pt},\sigma_t^{\hspace*{-1pt}i\hspace*{-1pt},\hspace*{-0.5pt}\epsilon}}\hspace*{-3pt}\{\hspace*{-1pt}\bar{U}_t^i\hspace*{-1pt}(\hspace{-1pt}A_t\hspace*{-1pt},\hspace{-2pt}S_t\hspace*{-1pt},\hspace{-2pt}\pi_t\hspace*{-1pt},\hspace{-3pt}V_{\hspace*{-1pt}t\hspace*{-0.5pt}+\hspace*{-0.5pt}1}\hspace*{-1pt},\hspace{-2pt}\beta_{t\hspace*{-0.5pt}+\hspace*{-0.5pt}1}\hspace*{-1pt}(\hspace*{-1pt}\sigma^{*\hspace*{-1pt},\hspace*{-1pt}\epsilon}_t\hspace*{-1pt})\hspace{-1pt})\hspace*{-1pt}|\hspace*{-1pt}s_t^i\hspace*{-1pt},\hspace{-2pt}\pi_t\hspace{-1pt}\}\hspace*{-1pt},\hspace{-2pt}\forall\hspace*{-1pt} \pi_t,\hspace{-2pt}s_t^i\hspace{-3pt}\Bigg\}\hspace{-1pt},\hspace*{-6pt}
\end{align} 
which determines the set of all agent $i$'s best responses within the class of $\epsilon$-restricted strategies assuming that agents $-i$ are playing $\sigma^{*-i,\epsilon}_t$ and the update rule $\psi_{t+1}^\epsilon\hspace*{-2pt}=\hspace*{-1pt}\beta_{t+1}\hspace*{-1pt}(\sigma^{*,\epsilon}_t)$.  

For every $i\hspace*{-2pt}\in\hspace*{-2pt}\mathcal{N}$ and $\sigma^{\hspace*{-1pt}*,\epsilon}\hspace*{-3pt}\in\hspace*{-2pt}\Sigma_t^\epsilon$, we prove below that $\text{\textbf{BR}}_t^{i,\epsilon}\hspace*{-1pt}(\hspace*{-1pt}\sigma^{*,\epsilon}_t)$ is non-empty, convex, closed, and upper hemicontinuous.
 
We note that 
\begin{align*}
&\mathbb{E}^{\sigma^{*,\epsilon}_t\hspace*{-2pt},\sigma_t^{i,\epsilon}}\hspace*{-1pt}\{\hspace*{-1pt}\bar{U}_t^i\hspace*{-1pt}(\hspace*{-1pt}A_t\hspace*{-1pt},\hspace*{-2pt}S_t\hspace*{-1pt},\hspace*{-2pt}\pi_t\hspace*{-1pt},\hspace*{-2pt}V_{t+1}\hspace*{-1pt},\hspace*{-2pt}\beta_{t+1}\hspace*{-1pt}(\hspace*{-1pt}\sigma^{*,\epsilon}_t\hspace*{-1pt})\hspace*{-1pt})\hspace*{-1pt}|s_t^i\hspace*{-1pt},\hspace*{-2pt}\pi_t\hspace*{-1pt}\}\\
\hspace*{-2pt}=&\sum_{a_t^i}\hspace*{-1pt}\mathbb{E}^{\sigma^{*\hspace*{-1pt}-\hspace*{-1pt}i\hspace*{-1pt},\hspace*{-1pt}\epsilon\hspace*{-1pt}}_t\hspace*{-1pt},\hspace*{-0.5pt}A_t^i\hspace*{-0.5pt}=a_t^i}\hspace*{-1pt}\{\hspace*{-1pt}\bar{U}_t^i\hspace*{-1pt}(\hspace*{-1pt}A_t\hspace*{-1pt},\hspace*{-2pt}S_t\hspace*{-1pt},\hspace*{-2pt}\pi_t\hspace*{-1pt},\hspace*{-3pt}V_{\hspace*{-1pt}t\hspace*{-0.5pt}+\hspace*{-0.5pt}1}\hspace*{-1pt},\hspace*{-2pt}\beta_{t\hspace*{-0.5pt}+\hspace*{-0.5pt}1}\hspace*{-1.5pt}(\hspace*{-1pt}\sigma^{*\hspace*{-1pt},\hspace*{-1pt}\epsilon}_t\hspace*{-1pt})\hspace*{-1.5pt})\hspace*{-1pt}|\hspace*{-1pt}s_t^i\hspace*{-1pt},\hspace*{-2pt}\pi_t\hspace*{-1pt}\}\sigma_t^{\hspace*{-1pt}i\hspace*{-1pt},\hspace*{-0.5pt}\epsilon}\hspace*{-1.5pt}(\hspace*{-1.5pt}\pi_t\hspace*{-1pt},\hspace*{-2pt}s_t^i\hspace*{-1pt})\hspace*{-1pt}(\hspace*{-1.5pt}a_t^i\hspace*{-1pt})\\
\hspace*{-2pt}=&\sum_{a_t^i}\tilde{U}_{\pi_t\hspace*{-1pt},s_t^i}^{*\hspace*{-1pt},\epsilon\hspace*{-1pt}}\hspace*{-1pt}(\hspace*{-1pt}a_t^i\hspace*{-1pt})\sigma_t^{\hspace*{-1pt}i\hspace*{-0.5pt},\epsilon}(\hspace*{-1pt}\pi_t\hspace*{-1pt},\hspace*{-2pt}s_t^i\hspace*{-1pt})\hspace*{-1pt}(a_t^i),
\end{align*}
where
\begin{align*} \tilde{U}_{\pi_t\hspace*{-1pt},s_t^i}^{\sigma^{*\hspace*{-1pt},\epsilon\hspace*{-1pt}}}\hspace*{-1pt}(\hspace*{-1pt}a_t^i\hspace*{-1pt})\hspace*{-2pt}:=\hspace*{-2pt}\mathbb{E}^{\sigma^{{*\hspace*{-1pt}-i\hspace*{-1pt},\epsilon\hspace*{-1pt}}}_t\hspace*{-1pt},A_t^i=a_t^i}\hspace*{-1pt}\{\hspace*{-1pt}\bar{U}_t^i\hspace*{-1pt}(\hspace*{-1pt}A_t\hspace*{-1pt},\hspace*{-2pt}S_t\hspace*{-1pt},\hspace*{-2pt}\pi_t\hspace*{-1pt},\hspace*{-2pt}V_{t+1}\hspace*{-1pt},\hspace*{-2pt}\beta_{t+1}\hspace*{-1pt}(\hspace*{-1pt}\sigma^{{*\hspace*{-1pt},\epsilon}}_t)\hspace*{-1pt})\hspace*{-1pt}|s_t^i\hspace*{-1pt},\hspace*{-2pt}\pi_t\hspace*{-1pt}\}.
\end{align*}

Therefore, for every $\pi_t,\hspace*{-1pt}s_t^i$, we have 
\begin{align*}
\sigma_t^{i,\epsilon}(\pi_t,\hspace*{-1pt}s_t^i)\hspace*{-1pt}\in\hspace*{-1pt} \argmax_{\alpha\in\Delta(\mathcal{A}_t^i):\alpha(a_t^i)\geq\epsilon, \forall a_t^i} \sum_{a_t^i}\tilde{U}_{\pi_t,s_t^i}^{\sigma^{*,\epsilon}}(a_t^i)\alpha(a_t^i).
\end{align*} 
We note that $\max_{\alpha\in\Delta(\mathcal{A}_t^i):\alpha(a_t^i)\geq\epsilon, \forall a_t^i} \sum_{a_t^i}\tilde{U}_{\pi_t,s_t^i}^{\sigma^{*,\epsilon}}(a_t^i)\alpha(a_t^i)$ is a linear program, thus, by Theorem 16 of \cite{nedic}, the set of agent $i$'s best responses $\text{\textbf{BR}}_t^{i,\epsilon}\hspace*{-1pt}(\sigma^{*,\epsilon}_t)$ is closed and convex. If  $V_{t+1}$ is continuous in $\pi_{t+1}$ then $V_{t+1}$ is continuous in agent $i$'s strategy $\sigma_t^i$. Moreover, the instantaneous utility $u_t^i$ is continuous in agent $i$'s strategy $\sigma_t^i$. Therefore, $\bar{U}_t^i$, given by (\ref{eq:stageutility}), is continuous in agent $i$' strategy $\sigma_t^i$. Therefore, by the maximum theorem \cite{ok2007real} the set of $i$'s best responses in upper hemicontinuous in $\sigma^{*,\epsilon}_t$ and non-empty. 

Consequently, we establish that for every $i\hspace*{-1pt}\in\hspace*{-1pt}\mathcal{N}$, $\text{\textbf{BR}}_t^{i,\epsilon}\hspace*{-1pt}(\sigma^{*,\epsilon}_t)$ is closed, convex, upper hemicontinuous, and non-empty for every $\sigma^{*,\epsilon}_t\hspace*{-2pt}\in\hspace*{-2pt}\Sigma_t^\epsilon$. Define $\text{\textbf{BR}}_t^\epsilon\hspace*{-2pt}:=\hspace*{-2pt}\bigtimes_{i\in\mathcal{N}}\text{\textbf{BR}}_t^{i,\epsilon}$ where $\bigtimes$ denotes the Cartesian product. The correspondence $\text{\textbf{BR}}_t^{\epsilon}(\sigma^{*,\epsilon}_t)$ is closed, convex, upper hemicontinuous, and non-empty for every $\sigma^{*,\epsilon}_t\hspace*{-2pt}\in\hspace*{-2pt}\Sigma_t^\epsilon$ since $\text{\textbf{BR}}_t^{i,\epsilon}(\sigma^{*,\epsilon}_t)$ is closed, convex, upper hemicontinuous, and non-empty for every $\sigma^{*i,\epsilon}_t\hspace*{-2pt}\in\hspace*{-2pt}\Sigma_t^{i,\epsilon}$ for all $i\hspace*{-2pt}\in\hspace*{-2pt}\mathcal{N}$. Therefore, by Kakutani's fixed-point theorem \cite[Corollary 15.3]{border1989fixed}, the correspondence  $\text{\textbf{BR}}_t^{\epsilon}$ has a fixed point. Therefore, every perturbed stage game has an equilibrium $\sigma^{*,\epsilon}_t$ along with a consistent update rule $\psi_{t+1}\hspace*{-1pt}=\hspace*{-1pt}\beta_{t+1}(\sigma^{*,\epsilon}_t)$.

Now consider the sequence of these perturbed games when $\epsilon\hspace*{-1pt}\rightarrow \hspace*{-1pt}0$. Since the set of agents' strategies is compact, there exists a subsequence of these perturbed games whose equilibrium strategies converge, say to $\sigma^*_t$. Similarly let $\psi_{t+1}^*$ denote the convergence point of $\beta_{t+1}(\sigma^{*,\epsilon}_t)$.  We note that $\psi_{t+1}^*$ is consistent with $\sigma^*_t$ since $\beta_{t+1}\hspace*{-1pt}(\sigma^{*,\epsilon}_t)$ (\textit{i.e.} Bayes' rule) is continuous in $\sigma^{*,\epsilon}_t$. We show below that for every agent $i\hspace*{-1pt}\in\hspace*{-1pt}\mathcal{N}$, \hspace*{-1pt}$\sigma^{*i}_t$ is a best response for him given $V_{t+1},\hspace*{-1pt}\psi_{t+1}^*$ when he chooses his strategy from the unconstrained class of SIB strategies.

As we proved above, the set of agent $i$'s best responses $\text{\textbf{BR}}_t^{i}(\sigma^{*}_t)$ is upper hemicontinuous and closed given $\psi_{t+1}^*$. Therefore, $\sigma^*_t(\pi_t,\hspace*{-1pt}\cdot)$ is also a best response for agent $i$ in the stage game $\mathbf{G}_t(\pi_t\hspace*{-1pt},\hspace*{-2pt}V_{t+1}\hspace*{-1pt},\hspace*{-2pt}\psi_{t+1}^*)$.  Consequently, $\sigma^*_t\hspace*{-2pt}\in\hspace*{-1pt} \text{\textbf{BNE}}_t(V_{t+1},\psi_{t+1}^*)$ where $\psi_{t+1}^*$ is consistent with $\sigma_t^*$. 
\end{proof}

\begin{proof}[Proof of Theorem \ref{thm:zerosum}]
We have a Bayesian zero-sum game with finite state and action spaces. By  \cite[Theorem \hspace*{-2pt}1]{einy2012continuity} the equilibrium payoff is a continuous function of the agents' common prior/belief. Using this result, we prove, by backward induction, that every stage of the dynamic program described  by (\ref{eq:decomposition-BNE})-(\ref{eq:decomposition-value}), has a solution and $V_t$ is continuous in $\pi_t$ for all $t$.

For $t\hspace*{-2pt}=\hspace*{-1pt}T\hspace*{-2pt}+\hspace*{-2pt}1$ the dynamic program has a solution trivially since the agents have utility for time less than or equal to $T$. Moreover, $V_{T+1}(.,\hspace*{-1pt}.)\hspace*{-2pt}=\hspace*{-1pt}0$ is trivially continuous in $\pi_{T+1}$. 

For $t\hspace*{-2pt}\leq \hspace*{-2pt}T$, assume that $V_{t+1}$ is continuous in $\pi_{t+1}$. Then, by Lemma \ref{lemma:sufficient1} the dynamic program has a solution at $t$. We note that the continuation game from $t$ to $T$ is a dynamic zero-sum game with finite state and actions spaces. Therefore, as we argued above, by \cite[Theorem 1]{einy2012continuity} the agents' equilibrium payoff at $t$ (\textit{i.e.} $V_t$) is unique and is continuous in the agents' common prior given by $\pi_t$. 

Therefore, by induction we establish the assertion of Theorem \ref{thm:zerosum}. 
\end{proof}

\begin{proof}[Proof of Lemma \ref{lemma:sufficient2}]
	
	Assume that $\psi_{1:T}$ is independent of $\sigma$. Then, the evolution of $\Pi_t$ is independent of $\sigma^*$ and known a priori. As a result, we can ignore the consistency condition (\ref{eq:decomposition-update}) in the dynamic program. Given $\psi_{t+1}$, the stage game $\mathbf{G}_t(\hspace*{-1pt}\pi_t\hspace*{-1pt},\hspace*{-2pt}V_{t+1}\hspace*{-1pt},\hspace*{-2pt}\psi_{t+1}\hspace*{-1pt})$ is a static game of incomplete information with finite actions (given by $\mathcal{A}_t^{1:N}$) and finite types (given by $\mathcal{S}_t^{1:N}$) for every $\pi_t$. Therefore, by the standard existence results for finite games \cite[Theorem  1.1]{fudenberg1991game}, there exists an equilibrium for the stage game $\text{\textbf{BNE}}_t(\hspace*{-1pt}V_{t+1}\hspace*{-1pt},\hspace*{-2pt}\psi_{t+1}\hspace*{-1pt})$. Consequently, the correspondence $\text{\textbf{BNE}}_t(\hspace*{-1pt}V_{t+1}\hspace*{-1pt},\hspace*{-2pt}\psi_{t+1}\hspace*{-1pt})$ is non-empty for every $t\hspace*{-2pt}\in\hspace*{-1pt}\mathcal{T}$, thus, the dynamic programming given by (\ref{eq:decomposition-BNE}-\ref{eq:decomposition-value}) has a solution.   
\end{proof}

	\end{document}